\numberwithin{equation}{section}
\newtheorem{theorem}{Theorem}[section]
\newtheorem{proposition}[theorem]{Proposition}
\newtheorem{lemma}[theorem]{Lemma}
\newtheorem{corollary}[theorem]{Corollary}
\theoremstyle{definition}
\newtheorem{definition}[theorem]{Definition}
\newtheorem{notation}[theorem]{Notation}
\newcommand{\vol}{\operatorname{Vol}}
\newcommand{\Dc}{\mathcal{D}}
\newcommand{\Sc}{\mathcal{S}}
\newcommand{\Nc}{\mathcal{N}}
\newcommand{\Ec}{\mathcal{E}}
\newcommand{\tr}{\operatorname{tr}}
\newcommand{\Bc}{\mathcal{B}}
\newcommand{\Rc}{\mathcal{R}}
\newcommand{\meas}{\operatorname{meas}}
\newcommand {\E} {\mathbb{E}}
\newcommand {\M} {\mathcal{M}}
\newcommand {\R} {\mathbb{R}}
\newcommand {\Cb} {\mathbb{C}}
\newcommand {\Z} {\mathbb{Z}}
\newcommand {\Cc} {\mathcal{C}}
\newcommand {\Lc} {\mathcal{L}}
\newcommand {\Ac} {\mathcal{A}}
\newcommand {\Tb} {\mathbb{T}}
\newcommand {\var} {\operatorname{Var}}
\newcommand {\len} {\operatorname{len}}
\newcommand {\cov} {\operatorname{Cov}}
\newcommand {\corr} {\operatorname{Corr}}
\newcommand{\C}{{\mathbb C}}
\newcommand{\new}[1]{{#1}}
\newcommand{\qnew}[1]{{#1}}
\newcommand{\qcnew}[1]{{#1}}
\begin{document}

\title{Planck-scale distribution of nodal length of arithmetic random waves}

\author{Jacques Benatar}
\address{Department of Mathematics, King's College London, UK,
currently at School of Mathematical Sciences\\Tel Aviv University\\ Tel Aviv 69978\\Israel}
\email{benatar@mail.tau.ac.il}

\author{Domenico Marinucci}
\address{Department of Mathematics, University of Rome ``Tor Vergata", Italy}
\email{marinucc@axp.mat.uniroma.it}

\author{Igor Wigman}
\address{Department of Mathematics, King's College London, UK}
\email{igor.wigman@kcl.ac.uk}

\begin{abstract}
We study the nodal length of random toral Laplace eigenfunctions (``arithmetic random waves")
restricted to decreasing domains (``shrinking balls"), all the way down to Planck scale. We find that, up
to a natural scaling, for ``generic" energies the variance of the restricted nodal length obeys the same asymptotic law as the
total nodal length, and these are asymptotically fully correlated. This, among other things, allows for a statistical reconstruction
of the full toral length based on partial information. One of the key novel ingredients of our work, borrowing from
number theory, is the use of bounds for the so-called spectral Quasi-Correlations, i.e. unusually small sums of lattice
points lying on the same circle.

\end{abstract}

\date{\today}
\maketitle

\section{Introduction}

\subsection{Laplace eigenfunctions in Planck (microscopic) scale}

\label{sec:Lap eig Planck}

Let $(\M,g)$ be a compact smooth Riemannian surface, and $\Delta$ the Laplace-Beltrami operator on $\M$.
It is well-known that in this situation the spectrum of $\Delta$ is purely discrete, i.e. there exists a non-decreasing
sequence $\{ E_{j}\}_{j\ge 1}\subseteq \Z_{\ge 0}$ of eigenvalues of $\Delta$ (``energy levels of $\M$") and the corresponding
eigenfunctions $\{\phi_{j}\}_{j\ge 1}$ such that
\begin{equation}
\label{eq:Deltaf+lambda f=0}
\Delta \phi_{j} + E_{j}\cdot \phi_{j} = 0,
\end{equation}
and $\{\phi_{j}\}_{j\ge 1}$ is an o.n.b. of $L^{2}(\M)$.

We are interested in the {\em nodal line} of $\phi_{j}$, i.e. its zero set $\phi_{j}^{-1}(0)$ in the high-energy limit
$j\rightarrow\infty$; it is generically a smooth curve ~\cite{Uh}. In this general setup Yau's conjecture asserts that
there exist constants $0<c_{\M}<C_{\M}<\infty$ such that the {\em nodal length} (i.e. the length of $\phi_{j}^{-1}(0)$)
satisfies
\begin{equation}
\label{eq:Yau's conj}
c_{\M}\cdot \new{\sqrt{E_{j}}}\ll \len(\varphi_{j}^{-1}(0)) \ll  C_{\M}\cdot \new{\sqrt{E_{j}}}.
\end{equation}
Yau's conjecture was resolved by Donnelly-Fefferman ~\cite{Do-Fe} for $\M$ real analytic, and the lower bound was established
more recently by Logunov ~\cite{Lo1,Lo2,Lo-Ma} for the more general smooth case.

Berry's seminal and widely believed conjecture ~\cite{Be77,Be83} asserts that, at least in some generic situation, one could model the high energy eigenfunctions $\phi_{j}$ on a {\em chaotic} surface $\M$ with random monochromatic plane waves of wavelength $\sqrt{E_{j}}$, i.e. an isotropic random field on $\R^{2}$ with covariance function $$r_{RWM}(x) = J_{0}(\|x\|).$$ When valid, Berry's RWM in particular implies Yau's
conjecture \eqref{eq:Yau's conj}; it goes far beyond the macroscopic setting, i.e. that the RWM is applicable to shrinking domains. For example, it asserts that the RWM is a good model for $\phi_{j}^{-1}(0)\cap B_{r}(x)$, i.e. the nodal length lying inside a shrinking geodesic ball $B_{r}(x)\subseteq \M$, of radius slightly above the Planck scale: $r\approx \frac{C}{\sqrt{E}}$, with $C\gg 0$ sufficiently big.
In this spirit Nadirashvili's conjecture ~\cite{Na} refines upon \eqref{eq:Yau's conj} in that the analogous statement is to hold in the Planck (or microscopic) scale; it was in part established by Logunov ~\cite{Lo1,Lo2}.

To our best knowledge the only other few small-scale results all concern the mass equidistribution. These are some small-scale refinements
~\cite{He-Ri1,He-Ri2,Ha1,Ha2}
for Shnirelman's Theorem \cite{Sn, Ze, CdV} asserting the $L^{2}$ mass equidistribution of $\phi_{j}$ on macroscopic scale for $\M$ chaotic, i.e. that the $L^{2}$ mass of $\phi_{j}$ on a subdomain of $\M$ is proportional to its area along a density $1$ sequence of $\{E_{j}\}$.
For the particular case of the standard flat torus $\M=\R^{2}/\Z^{2}$
Lester-Rudnick ~\cite{Le-Ru} and subsequently Granville-Wigman ~\cite{Gr-Wi} used the number theoretic structure of the toral eigenfunctions
in order to obtain the Planck-scale mass equidistribution or slightly above it for ``most" of the eigenfunctions.

\subsection{Arithmetic Random Waves}

Let
\begin{equation}
\label{eq:S=a^2+b^2}
S= \{ a^{2}+b^{2}:\: a,b\in\Z\}\subseteq \Z
\end{equation}
be the set of integers expressible as a sum of two squares.
For $n\in S$ let $$\Ec_{n}=\{\lambda\in \Z^{2}:\: \|\lambda\|^{2}=n\}$$ be the set of lattice points lying on the radius-$\sqrt{n}$ circle,
and denote its size
\begin{equation*}
\Nc_{n}=r_{2}(n) = |\Ec_{n}|;
\end{equation*}
it is the number $r_{2}(n)$ of ways to express $n$ as sum of two squares. It is well known that every (complex valued) Laplace eigenfunction \eqref{eq:Deltaf+lambda f=0} on the standard $2$-torus $\M=\Tb^{2}=\R^{2}/\Z^{2}$ is necessarily of the form
\begin{equation}
\label{eq:Tn sum def}
T_{n}(x) = \frac{1}{\sqrt{2\Nc_{n}}} \sum\limits_{\lambda\in\Ec_{n}} a_{\lambda}\cdot e(\langle\lambda, x \rangle)
\end{equation}
for some $n\in S$, and coefficients $a_{\lambda}\in\Cb$ (with the rationale of the normalising factor $\frac{1}{\sqrt{2\Nc_{n}}}$
becoming apparent below);
the corresponding eigenvalue is
\begin{equation}
\label{eq:En=4pi^2n}
E_{n}=4\pi^{2}n,
\end{equation}
and $T_{n}$ is real-valued if and only if for every $\lambda\in\Ec_{n}$ we have
\begin{equation}
\label{eq:a(-lambda)=conj(a(lambda)) real}
a_{-\lambda}=\overline{a_{\lambda}}
\end{equation}
(the complex conjugate of $a_{\lambda}$).

For $n\in S$ fixed, the eigenspace of all functions \eqref{eq:Tn sum def} satisfying $$\Delta T_{n}+E_{n}\cdot T_{n}=0$$
could be endowed with a Gaussian probability measure by assuming that the coefficients $\{a_{\lambda}\}_{\lambda\in\Ec_{n}}$ are
standard (complex) Gaussian i.i.d. save for \eqref{eq:a(-lambda)=conj(a(lambda)) real}. With a slight abuse of notation,
the resulting {\em random field}, also denoted $T_{n}$, is the wavenumber-$\sqrt{n}$ ``arithmetic random wave" ~\cite{O-R-W,K-K-W}. Alternatively,
$T_{n}$ is the (unique) centred Gaussian stationary random field with the covariance function
\begin{equation}
\label{eq:rn covar def}
r(x-y)=r_{n}(x-y) =\E[T_{n}(x)\cdot T_{n}(y)]  = \frac{1}{\Nc_{n}}\sum\limits_{\lambda\in\Ec_{n}} e(\langle \lambda, x-y\rangle)=\frac{1}{\Nc_{n}}\sum\limits_{\new{\lambda}\in\Ec_{n}} \cos(2\pi \langle \lambda, x-y\rangle);
\end{equation}
as $r(0)= 1$, the field $T_{n}$ is unit variance (this is the reason we set \eqref{eq:Tn sum def} to
normalise $T_{n}$ in the first place). The (random) zero set $T_{n}^{-1}(0)$ is of our fundamental interest;
it is a.s. a smooth curve ~\cite{R-W2008}, called the {\em nodal line}. Of our particular interest is the distribution of
its (random) total length $$\Lc_{n} = \len\left( T_{n}^{-1}(0)\right),$$
or the length constrained inside subdomains,
$$\Lc_{n;s} = \len\left( T_{n}^{-1}(0) \cap B(s) \right),$$
where $B(s)\subseteq \Tb^{2}$ is a radius-$s$ ball shrinking at Planck scale rate $s>n^{-1/2}$, or, more realistically,
slightly above it $s>n^{-1/2+\epsilon}$.
To be able to explain the context and formulate our main results we will require some background on the arithmetic of lattice points $\Ec_{n}$
lying on circles.

\subsection{Some arithmetic aspects of lattice points $\Ec_{n}$}

Recall that $S\subseteq \Z$ as in \eqref{eq:S=a^2+b^2} is the set of integers expressible as a sum of two squares, and given $n\in S$, the number of such expressions is $\Nc_{n}=r_{2}(n)$. As the distribution of the above central quantities will depend on both $\Nc_{n}$ and the angular distribution of lattice points lying on the corresponding circle, here we give some necessary background.
First, it is known ~\cite{Lan} that $\Nc_{n}$ grows {\em on average} as $$\Nc_{n}\sim c_{0}\cdot \sqrt{\log{n}}$$ with some $c_{0}>0$;
equivalently, as $X\rightarrow\infty$,
\begin{equation}
\label{eq:S(x) asympt cRL x/log(x)}
S(X):=|\{n\in S:\: n \le X   \}  | \sim c_{LR} \cdot \frac{X}{\sqrt{\log{X}}}
\end{equation}
where $c_{LR}>0$ is the (fairly explicit) Ramanujan-Landau constant.
Moreover, $$\Nc_{n} \sim (\log{n})^{(\log{2})/2 + o(1)}$$ for a density $1$ sequence of numbers $n\in S' \subseteq S$
(though we bear in mind that $S\subseteq \Z$ itself is thin or density $0$). To the other end, $\Nc_{n}$
is as small as $\Nc_{p}=8$ for an infinite
sequence of primes $$p\equiv 1 \mod {4},$$ and, in general, it is subject to large and erratic fluctuation satisfying
for every $\epsilon>0$
\begin{equation}
\label{eq:Nc dim << n^eps}
\Nc_{n} = O(n^{\epsilon}).
\end{equation}
From this point on we will always work with (generic) subsequences $\{n\}=S'\subseteq S$ satisfying $\Nc_{n}\rightarrow\infty$.

To understand the angular distribution of the lattice points $\Ec_{n}$ we define the probability measures $\tau_{n}$ on the unit circle $\Sc^{1}\subseteq \R^{2}$:
\begin{equation}
\label{eq:taun atomic def}
\tau_{n}=\frac{1}{\Nc_{n}}\sum\limits_{\lambda\in\Ec_{n}}\delta_{\lambda/\sqrt{n}}.
\end{equation}
It is known ~\cite{Erdos-Hall,Kurlberg-F-W} that for a ``generic" (density $1$) sequence $\{n\}\subseteq S$ the lattice points $\Ec_{n}$ are {\em equidistributed} in the sense that $\tau_{n} \Rightarrow \frac{d\theta}{2\pi},$ i.e. weak-$*$ convergence of probability measures to the uniform measure on $\Sc^{1}$ parameterized as $(\cos{\theta},\sin{\theta})$. To the other extreme, there exist ~\cite{Ci} (thin) {\em ``Cilleruelo"} sequences $\{n\}\subseteq S$ such that the number of lattice points $\Nc_{n}\rightarrow\infty$ grows, though all of them are concentrated
$$\tau_{n} \Rightarrow \frac{1}{4}\left( \delta_{\pm 1} + \delta_{\pm i}  \right)$$ around the four points $\pm 1, \pm i$ where we are thinking of $\Sc^{1}\subseteq \Cb$ as embedded inside the complex numbers.
There exist ~\cite{K-K-W,K-W} other {\em attainable} measures $\tau$ on $\Sc^{1}$, i.e. weak-$*$ partial limits of the sequence
$\{\tau_{n}\}_{n\in S}$; by the compactness of $\Sc^{1}$ the limit measure $\tau$ is automatically a probability measure; a partial classification
of such $\tau$ was obtained ~\cite{K-W} via their Fourier coefficients. In particular it follows that the $4$th Fourier coefficient of attainable measures is unrestricted: for every $\eta \in [-1,1]$ there exists an attainable measure $\tau$
with $\widehat{\tau}(4) = \eta$.

\subsection{Nodal length}

Recall that $$\Lc_{n} = \len\left( T_{n}^{-1}(0)\right)$$ is the total {\em nodal length} of $T_{n}$, and for $0<s<1/2$ (say),
$$\Lc_{n;s} = \len\left( T_{n}^{-1}(0) \cap B(s) \right)$$ is the nodal length of $T_{n}$ restricted to a radius-$s$ ball $B(s)$,
where by the stationarity of $T_{n}$ we may assume that $B(s)$ is centred. A straightforward computation \new{~\cite[Proposition 4.1]{R-W2008}}
with the Kac-Rice formula was used to evaluate the expected length
\begin{equation*}
\E[\Lc_{n}] = \frac{1}{2\sqrt{2}} \new{\sqrt{E_{n}}},
\end{equation*}
consistent with Yau's conjecture, \new{and, by the stationarity of $T_{n}$, the more general result
\begin{equation*}
\E[\Lc_{n;s}] = \frac{1}{2\sqrt{2}} (\pi s^{2})\cdot \sqrt{E_{n}}
\end{equation*}
for the restricted length also follows from the same computation}.

The asymptotic behaviour of the variance of $\Lc_{n}$, eventually resolved by ~\cite{K-K-W}, is of a far more delicate nature (and even more so of $\Lc_{n;s}$); it turned out that it is intimately related to the angular distribution \eqref{eq:taun atomic def} of $\Ec_{n}$.
More precisely it was found ~\cite{K-K-W} that\footnote{Originally only $o(1)$ for the error term claimed. It is easy
to obtain the $O\left(\frac{1}{\Nc_{n}^{1/2}}\right)$ bound for the error term using Bombieri-Bourgain's bound for the length-$6$ spectral
correlations (see \S\ref{sec:spec corr (quasi)}).}
\begin{equation}
\label{eq:var(tot nod length)}
\var(\Lc_{n}) = c_{n}\cdot \frac{E_{n}}{\Nc_{n}^{2}}\cdot \left(1+ O\left(\frac{1}{\Nc_{n}^{1/2}}\right) \right),
\end{equation}
where the leading coefficients
\begin{equation}
\label{eq:cn KKW def}
c_{n} = \frac{1+\widehat{\tau_{n}}(4)^{2}}{512} \in \left[ \frac{1}{512},\frac{1}{256}  \right],
\end{equation}
depending on the arithmetics of $\Ec_{n}$, are bounded away from both $0$ and $\infty$.
The asymptotic formula \eqref{eq:var(tot nod length)} for the nodal length variance shows that in order for
$\var(\Lc_{n})$ to observe an asymptotic law it is essential to separate $S$ into sequences $\{n\}\subseteq S$ such
that the corresponding $\tau_{n}\Rightarrow\tau$ for some (attainable) $\tau$; in this case
$$\var(\Lc_{n}) \sim c(\tau)\cdot \frac{E_{n}}{\Nc_{n}^{2}}$$ with
$$c(\tau) = \frac{1+\widehat{\tau}(4)^{2}}{512}.$$ The variance \eqref{eq:var(tot nod length)} is significantly smaller
compared to the previously expected ~\cite{R-W2008} order of magnitude $ \approx \frac{E_{n}}{\Nc_{n}}$;
it is an arithmetic manifestation of ``Berry's cancellation" ~\cite{Berry2002}, also interpreted ~\cite{MaWi1,MaWi2} as the precise vanishing
of the second chaotic component in the Wiener chaos expansion of $\Lc_{n}$ (or its spherical analogue).
One might expect Berry's cancellation to be a feature of the symmetries
of the full torus; that this is not so follows in particular from a principal result of the present manuscript, Theorem
\ref{thm:toral univ gen} below (see \eqref{eq:Var asympt gen} and \S\ref{sec:disc Berry cancel} for more details).

\vspace{2mm}

The fine distribution of $\Lc_{n}$ was also investigated ~\cite{M-P-R-W}.
For a number $\eta\in [0,1]$ let $M_{\eta}$ be the random variable
\begin{equation*}
M_{\eta} :=  \frac{1}{\sqrt{1+\eta^{2}}} \cdot \left( 2-  (1+\eta)X_{1}^{2} -(1-\eta)X_{2}^{2}    \right),
\end{equation*}
where $(X_{1},X_{2})$ are standard Gaussian i.i.d.; for example, if $\eta=0$, the distribution of $M_{\eta}$ is
a linear transformation of $\chi^{2}$ with $2$ degrees. It was shown ~\cite{M-P-R-W} that
as $\Nc_{n}\rightarrow\infty$, the distribution law of the normalised $$\widetilde{\Lc_{n}}:=\frac{\Lc_{n}-\E[\Lc_{n}]}{\var(\Lc_{n})}$$
is asymptotic to that of $M_{|\widehat{\tau_{n}}(4)|}$, both in mean\footnote{In fact, in $L^{p}$ for all $p\in (0,2)$.}
and almost surely. The meaning of the convergence in mean is that
there exist copies of $\widetilde{\Lc_{n}}$ and $M_{|\widehat{\tau_{n}}(4)|}$, defined on the
same probability space, such that
\begin{equation}
\label{eq:tot length NCLT}
\E\left[|\widetilde{\Lc_{n}} - M_{|\widehat{\tau_{n}}(4)|} |\right] \rightarrow 0
\end{equation}
as $\Nc_{n}\rightarrow\infty$.

The first principal result of this manuscript is that an analogous statement to \eqref{eq:var(tot nod length)}
holds for the nodal length $\Lc_{n;s}$ of $T_{n}$ restricted to shrinking balls slightly above
Planck scale, for {\em generic} sequences of energy levels $\{ n\}\subseteq S$, and that in this regime
$\Lc_{n;s}$ are asymptotically fully correlated with $\Lc_{n}$; this also implies that an analogue of \eqref{eq:tot length NCLT}
holds for $\Lc_{n;s}$ (see Corollary \ref{cor:shrink length Meta}). Below we will specify a generic arithmetic condition, sufficient for
the conclusions of Theorem \ref{thm:toral univ gen} to hold (see Theorem \ref{thm:toral univ spec corr}).

\begin{theorem}
\label{thm:toral univ gen}
For every $\epsilon>0$ there exists a density-$1$ sequence of numbers $$S'=S'(\epsilon)\subseteq S$$ so that the following hold.

\begin{enumerate}

\item Along $n\in S'$ we have $\Nc_{n}\rightarrow \infty$, and the set of accumulation
points of $\{\new{\widehat{\tau_{n}}(4)}\}_{n\in S'}$ contains the interval $[0,1]$.

\item For $n\in S'$, uniformly for all $s>n^{-1/2+\epsilon}$ we have
\begin{equation}
\label{eq:Var asympt gen}
\var( \Lc_{n;s}) = c_{n} \cdot (\pi s^{2})^{2}\cdot \frac{E_{n}}{\Nc_{n}^{2}}
\left(1+O_{\epsilon}\left(\frac{1}{\Nc_{n}^{1/2}} \right) \right),
\end{equation}
where $c_{n}$ is given by \eqref{eq:cn KKW def}, and the constant involved in the `$O$'-notation depends on $\epsilon$ only
(cf. \eqref{eq:var(tot nod length)}).

\item
For random variables $X,Y$ we denote as usual their correlation $$\corr(X,Y) := \frac{\cov(X,Y)}{\sqrt{\var(X)}\cdot \sqrt{\var(Y)}}.$$
Then for every $\epsilon>0$ we have that
\begin{equation}
\label{eq:restr nod len full corr full}
\sup\limits_{s>n^{-1/2+\epsilon}}\left| \corr(\Lc_{n;s},\Lc_{n})-1\right| \rightarrow 0,
\end{equation}
i.e. the nodal length $\Lc_{n;s}$ of $f_{n}$ restricted to a small ball is asymptotically fully correlated with the full nodal length
$\Lc_{n}$ of $f_{n}$, uniformly for all $s>n^{-1/2+\epsilon}$.

\end{enumerate}

\end{theorem}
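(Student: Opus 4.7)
The plan is to mimic and refine the strategy that led to \eqref{eq:var(tot nod length)} in \cite{K-K-W,M-P-R-W}, carrying it out for $\Lc_{n;s}$ by combining the Wiener chaos expansion with bounds on spectral quasi-correlations. Decomposing
\[
\Lc_{n;s} \;=\; \sum_{q \ge 0} \Lc_{n;s}[2q]
\]
into its orthogonal Hermite chaos components, the first observation is that Berry's cancellation is a pointwise identity between Hermite polynomials in the centred Gaussian vector $(T_n, \partial_1 T_n, \partial_2 T_n)$, so $\Lc_{n;s}[2]$ vanishes identically; the dominant contribution must therefore come from $\Lc_{n;s}[4]$.

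The main work for part (2) is to evaluate $\var(\Lc_{n;s}[4])$. After Hermite algebra this is a finite sum of double integrals of the form
\[
\int_{B(s)} \!\!\! \int_{B(s)} \Phi\bigl( r_n(x-y), \nabla r_n(x-y), \nabla^2 r_n(x-y) \bigr) \, dx\, dy,
\]
with $\Phi$ a monomial of total degree four. Expanding each factor as a Fourier sum over $\Ec_n$ and applying
\[
\int_{B(s)^2} e\bigl( \langle \xi, x-y \rangle \bigr) \, dx\, dy \;=\; \bigl|\widehat{\mathbf{1}_{B(s)}}(\xi)\bigr|^2,
\]
rewrites the variance as a weighted sum over $4$-tuples $(\lambda_1,\ldots,\lambda_4) \in \Ec_n^4$, with weight $|\widehat{\mathbf{1}_{B(s)}}(\lambda_1+\lambda_2-\lambda_3-\lambda_4)|^2$ times a degree-$4$ monomial in the $\lambda_i/\sqrt{n}$. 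The diagonal $\lambda_1+\lambda_2=\lambda_3+\lambda_4$ contributes $(\pi s^2)^2$ times the length-$4$ spectral correlation sum treated in \cite{K-K-W} and reproduces the claimed main term $c_n (\pi s^2)^2 E_n / \Nc_n^2$.

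The hard part is controlling the off-diagonal contribution $\xi := \lambda_1+\lambda_2-\lambda_3-\lambda_4 \ne 0$. Here I would combine the decay $|\widehat{\mathbf{1}_{B(s)}}(\xi)| \ll s\,|\xi|^{-3/2}$ with upper bounds on the number of representations of a nonzero integer vector of given norm as $\lambda_1+\lambda_2-\lambda_3-\lambda_4$ with $\lambda_i \in \Ec_n$ -- the \emph{spectral quasi-correlations} highlighted in the abstract. For a suitable density-$1$ subsequence $S' \subseteq S$, these counts should be small enough that, together with $s > n^{-1/2+\epsilon}$, the off-diagonal contribution is smaller than the main term by a factor of $\Nc_n^{1/2}$. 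The higher chaoses $\Lc_{n;s}[2q]$, $q \ge 3$, are controlled in a similar spirit via Bombieri--Bourgain-type bounds for length-$2q$ spectral (quasi-)correlations; this is where the restriction $s > n^{-1/2+\epsilon}$ is used most delicately, and is the main technical obstacle.

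For part (3) I would exploit orthogonality of Hermite chaoses together with stationarity. Writing $f(z)$ for the covariance kernel of the pointwise integrand defining $\Lc_n[4]$, one has
\[
\cov\bigl(\Lc_{n;s}[4], \Lc_n[4]\bigr) \;=\; \int_{B(s)} \!\!\! \int_{\Tb^2} f(x-y) \, dx\, dy \;=\; \pi s^2 \cdot \int_{\Tb^2} f(z)\, dz \;=\; \pi s^2 \cdot \var(\Lc_n[4]),
\]
whereas part (2) yields $\var(\Lc_{n;s}[4]) \sim (\pi s^2)^2 \var(\Lc_n[4])$. Hence $\corr(\Lc_{n;s}[4], \Lc_n[4]) \to 1$, and since the higher chaoses are negligible compared to the fourth, \eqref{eq:restr nod len full corr full} follows uniformly in $s$. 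Finally, part (1) is obtained by intersecting the density-$1$ set produced for part (2) with the density-$1$ set on which $\Nc_n \to \infty$, and invoking the Kurlberg--Wigman classification of attainable measures to ensure that $\widehat{\tau_n}(4)$ accumulates at every prescribed value in $[0,1]$ along some subsequence of $S'$.
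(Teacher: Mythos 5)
Your plan follows the Wiener-chaos route, which the paper itself mentions in the outline of its proofs as one of two possible strategies, before opting for a more direct Kac--Rice computation (Lemma \ref{lem:nodal var 2pnt corr norm}, Proposition \ref{prop:Kac-Rice K2 asymp}, and the moment estimates of Lemma \ref{lem:r der moments}). The elegant observation at the core of your part (3), namely that $\int_{B(s)}\int_{\Tb^{2}} f(x-y)\,dx\,dy = \pi s^{2}\int_{\Tb^{2}} f(z)\,dz$, is essentially identical in spirit to the paper's identity $\cov(\Lc_{n;s},\Lc_{n}) = (\pi s^{2}) \var(\Lc_{n})$ obtained in \S\ref{sec:full corr proof}. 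The reduction of the off-diagonal terms to spectral quasi-correlations weighted by $|\widehat{\mathbf{1}_{B(s)}}(\xi)|^{2}$ and the Bessel decay are likewise the same ideas as in Lemma \ref{lem:sum pairs osc int}.

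However, there is a genuine gap in the argument for part (2): the claim that ``Berry's cancellation is a pointwise identity between Hermite polynomials in $(T_n,\partial_1 T_n,\partial_2 T_n)$, so $\Lc_{n;s}[2]$ vanishes identically'' is false. The vanishing of $\Lc_{n}[2]$ on the full torus in \cite{M-P-R-W} is a \emph{global} integral phenomenon: after expanding $H_{2}(T_{n})$ and $H_{2}(\tilde\partial_i T_n)$ as Fourier sums, integration over $\Tb^{2}$ kills all cross terms $a_{\lambda}\overline{a_{\lambda'}}$ with $\lambda \neq \lambda'$ by the orthogonality relations \eqref{eq:orthog rel exp int}, and the remaining diagonal sum $\sum_{\lambda}(|a_{\lambda}|^2-1)(\cdots)$ cancels precisely using the constraint $\lambda_1^2+\lambda_2^2=n$. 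The pointwise integrand is \emph{not} zero as a random variable. Once you replace $\int_{\Tb^{2}}$ by $\int_{B(s)}$, the cross terms survive (they are weighted by $\widehat{\mathbf{1}_{B(s)}}(\lambda-\lambda')\neq 0$), so $\Lc_{n;s}[2]\neq 0$. The paper explicitly records this in \S\ref{sec:disc Berry cancel}: for shrinking domains ``the second chaos projection, though not vanishing precisely, being dominated by the $4$th one.'' Thus your argument as stated would produce a wrong main term unless supplemented by a separate proof that $\var(\Lc_{n;s}[2]) = o\big(s^4 E_n/\Nc_n^2\big)$ for generic $n$; this requires the very same length-$2$ quasi-correlation input ($\Ac(n;2,\delta)$-type hypotheses), so the work is not avoided. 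The same point also affects your justification of part (3): you need negligibility of the (nonzero) second chaos, not its vanishing, before you may conclude $\corr(\Lc_{n;s},\Lc_n) \to 1$ from the fourth-chaos computation.
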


Quite remarkably (and somewhat surprisingly), \eqref{eq:restr nod len full corr full} shows that one may statistically reconstruct
the full nodal length of $f_{n}$ based on its restriction to a small toral ball.
Since the full nodal length $\Lc_{n}$ of $f_{n}$ obeys the limiting law \eqref{eq:tot length NCLT}, the full correlation
\eqref{eq:restr nod len full corr full} of the restricted nodal length $\Lc_{n;s}$ with $\Lc_{n}$ implies
the same limiting law for
$\Lc_{n;s}$ under the same conditions as Theorem \ref{thm:toral univ gen}. More precisely we have the following corollary:

\begin{corollary}
\label{cor:shrink length Meta}

Given $\epsilon>0$ and a (generic) sequence $S'=S'(\epsilon)\subseteq S$ as in Theorem \ref{thm:toral univ gen},
there exists a coupling $(M_{|\new{\widehat{\tau_{n}}(4)}|},f_{n})$ (of a random variable with a random field) satisfying
\begin{equation*}
\sup\limits_{s>n^{-1/2+\epsilon}}\E\left[\left| \frac{\Lc_{n;s}-\E[\Lc_{n;s}]}{\var(\Lc_{n;s})}  -  M_{|\new{\widehat{\tau_{n}}}(4)|}   \right|\right]
\rightarrow 0.
\end{equation*}

\end{corollary}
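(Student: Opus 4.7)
The plan is to deduce the corollary by combining the full-correlation statement \eqref{eq:restr nod len full corr full} of Theorem \ref{thm:toral univ gen} with the MPRW limit law \eqref{eq:tot length NCLT} via a triangle inequality. Write $\widetilde{\Lc_{n;s}}:=(\Lc_{n;s}-\E[\Lc_{n;s}])/\sqrt{\var(\Lc_{n;s})}$ and likewise $\widetilde{\Lc_{n}}:=(\Lc_{n}-\E[\Lc_{n}])/\sqrt{\var(\Lc_{n})}$, and split
\begin{equation*}
\E\bigl[\bigl|\widetilde{\Lc_{n;s}}-M_{|\widehat{\tau_{n}}(4)|}\bigr|\bigr]
\;\le\;
\E\bigl[\bigl|\widetilde{\Lc_{n;s}}-\widetilde{\Lc_{n}}\bigr|\bigr]
\;+\;
\E\bigl[\bigl|\widetilde{\Lc_{n}}-M_{|\widehat{\tau_{n}}(4)|}\bigr|\bigr].
\end{equation*}
The second summand is handled directly by \eqref{eq:tot length NCLT}: it vanishes as $\Nc_{n}\to\infty$ on $S'$ (observe that the second assertion of Theorem \ref{thm:toral univ gen} forces $\Nc_{n}\to\infty$ along $S'$). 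The coupling is inherited from the one that realises \eqref{eq:tot length NCLT}: it places the random field $f_{n}$ and the variable $M_{|\widehat{\tau_{n}}(4)|}$ on a common probability space, and then every $\Lc_{n;s}$ is automatically defined on that space as a measurable functional of $f_{n}$, so no additional coupling has to be constructed.

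It remains to control the first summand uniformly in $s>n^{-1/2+\epsilon}$. Since $\widetilde{\Lc_{n;s}}$ and $\widetilde{\Lc_{n}}$ each have unit variance, the standard identity
\begin{equation*}
\E\bigl[(\widetilde{\Lc_{n;s}}-\widetilde{\Lc_{n}})^{2}\bigr]
\;=\;2\bigl(1-\corr(\Lc_{n;s},\Lc_{n})\bigr)
\end{equation*}
together with part (3) of Theorem \ref{thm:toral univ gen} shows that the $L^{2}$ distance between $\widetilde{\Lc_{n;s}}$ and $\widetilde{\Lc_{n}}$ tends to $0$ uniformly for $s>n^{-1/2+\epsilon}$. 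Cauchy--Schwarz converts this into an $L^{1}$ bound with the same uniformity, and the triangle inequality above then yields the stated uniform-in-$s$ convergence.

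There is essentially no obstacle here beyond invoking the two ingredients in the right order; the only minor subtlety is to ensure that the coupling provided by MPRW can be used verbatim, which it can because the target object $M_{|\widehat{\tau_{n}}(4)|}$ only depends on $n$ (through $\widehat{\tau_{n}}(4)$) and not on $s$, so one coupling serves all shrinking radii simultaneously. The uniformity in $s$ is thus a free byproduct of the uniformity already present in \eqref{eq:restr nod len full corr full}.
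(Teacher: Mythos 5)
Your proof is correct and makes rigorous exactly the argument the paper sketches just above the corollary (combine the MPRW limit law \eqref{eq:tot length NCLT} for the full nodal length with the full-correlation statement \eqref{eq:restr nod len full corr full} via the triangle inequality, the identity $\E[(\widetilde{X}-\widetilde{Y})^2]=2(1-\corr(X,Y))$ for standardised variables, and Cauchy--Schwarz). One nitpick: it is part (1) of Theorem \ref{thm:toral univ gen}, not part (2), that asserts $\Nc_n\to\infty$ along $S'$; this does not affect the argument.
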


\vspace{3mm}

\subsection{Spectral (Quasi)-Correlations}
\label{sec:spec corr (quasi)}

Our next goal is to formulate a result \`{a} la Theorem \ref{thm:toral univ gen} with a more explicit control over $\{n\}$ satisfying the
conclusions of Theorem \ref{thm:toral univ gen}. To this end we will require some more notation.
Recall that the covariance function $r_{n}(x)$ of $T_{n}$ is given by \eqref{eq:rn covar def}; of our particular interest are the {\em moments}
of $r_{n}$ (and related quantities), both on the whole of $\Tb^{2}$ and restricted to $B(s)$. More precisely, for $l\ge 1$ we define
the ``full" moments of $r_{n}$ as
\begin{equation}
\Rc_{n}(l) =\int\limits_{\Tb^{2}\times\Tb^{2}}r_{n}(x-y)^{l}dxdy=\int\limits_{\Tb^{2}}r_{n}(x)^{l}dx
\end{equation}
by the stationarity, and the ``{\em restricted moments}" of $r_{n}$ as
\begin{equation}
\label{eq:Rc moments rn def}
\Rc_{n}(l;s) =\int\limits_{B(s)\times B(s)}r_{n}(x-y)^{l}dxdy.
\end{equation}
An explicit computation using the orthogonality relations
\begin{equation}
\label{eq:orthog rel exp int}
\int\limits_{\Tb^{2}}e(\langle x,\xi\rangle)dx = \begin{cases}
1 &\xi=0 \\ 0 &\xi\ne 0
\end{cases},
\end{equation}
$\xi\in\Z^{2}$, relates $\Rc_{n}(l)$ to the set of length-$l$ spectral correlations
\begin{equation}
\label{eq:Sc correlations def}
\Sc_{n}(l) = \left\{ (\lambda_{1},\ldots,\lambda_{l})\in(\Ec_{n})^{l}:\: \sum\limits_{i=1}^{l}\lambda_{i}=0  \right\};
\end{equation}
the full moments of $r_{n}$ are given by
\begin{equation*}
\Rc_{n}(l) = \frac{|\Sc_{n}(l)|}{\Nc_{n}^{l}}.
\end{equation*}

For $l=2k$ even we further define the {\em diagonal} spectral correlations set to be
\begin{equation}
\label{eq:Dc diag def}
\Dc_{n}(l) = \left\{ \pi(\lambda_{1},-\lambda_{1},\ldots,\lambda_{k},-\lambda_{k}): \lambda_{1},\ldots,\lambda_{k}\in (\Ec_{n})^{k},\,\pi\in S_{l}\right\}
\end{equation}
the set of all possible permutations of tuples of lattice points of the form $(\lambda_{1},-\lambda_{1},\ldots,\lambda_{k},-\lambda_{k})$.
It is evident that in this case $$\Dc_{n}(l) \subseteq \Sc_{n}(l),$$ and that for every fixed $l$ we have
$$|\Dc_{n}(l)|=  \frac{(2k)!}{2^{k}\cdot k!}\Nc_{n}^{k}\cdot \left(1 + O_{\Nc_{n}\rightarrow\infty}\left( \frac{1}{\Nc_{n}} \right)  \right);$$
hence for every $k\ge 1$ we have the lower bound $$|\Sc_{n}(2k)| \gg \Nc_{n}^{k}.$$ To the other end, for $k=1$ we have
$\Sc_{n}(2)=\Dc_{n}(2),$ by the definition, whereas for $k=2$ the equality
\begin{equation}
\label{eq:Dc4=Sc4}
\Dc_{n}(4) = \Sc_{n}(4),
\end{equation}
is due to an elegant (and simple)
geometric observation by Zygmund ~\cite{Zygmund} (``Zygmund's trick"); the same observation yields $$\Sc_{n}(6)\ll \Nc_{n}^{4}.$$

A key ingredient in ~\cite{K-K-W} was a non-trivial improvement for the latter bound,
\begin{equation}
\label{eq:BB Sn(6)<<N^7/2}
|\Sc_{n}(6)|= o(\Nc_{n}^{4})
\end{equation}
due to Bourgain (published in ~\cite{K-K-W}) implying that the l.h.s. of \eqref{eq:var(tot nod length)} is asymptotic to the r.h.s
of \eqref{eq:var(tot nod length)} (though not implying the stated bound for the error term); a further improvement
\begin{equation}
\label{eq:S6<<N^7/2}
\Sc_{n}(6)\ll \Nc_{n}^{7/2}
\end{equation}
\new{holding ~\cite{B-B} for the full sequence
$n\in S$} yields the stronger form \eqref{eq:var(tot nod length)} of their result with the prescribed error term. The sharp upper bound
\begin{equation*}
|\Sc_{n}(6)| \ll \Nc_{n}^{3},
\end{equation*}
or, even more striking,
\begin{equation*}
|\Sc_{n}(6)| = 3\Nc_{n}^{3}+O(\Nc_{n}^{3-\delta})
\end{equation*}
(equivalently, $|\Sc_{n}(6)\setminus \Dc_{n}(6)| \ll \Nc_{n}^{3-\delta}$) \new{holds for a density $1$ subsequence $\{n\}\subseteq S$}.

\vspace{3mm}

For the restricted moments of particular interest for our purposes, we need to consider the ``spectral {\em quasi}-correlations"
(see \S\ref{sec:rest mom quasi corr}),
i.e. the defining
equality $\sum\limits_{i=1}^{l}\lambda_{i}=0$ in \eqref{eq:Sc correlations def} holding approximately: the absolute value
$|\cdot | \le K$, where the parameter $K$ is typically of order of magnitude $K\approx n^{1/2-\delta}$, $0<\delta<\epsilon$.

\vspace{3mm}

\begin{definition}[Quasi-correlations and $\delta$-separatedness.]
\label{def:separatedness Ac}

\begin{enumerate}

\item Given a number $n\in S$, $l\in\Z_{\ge 2}$, and $0 < K=K(n) < l\cdot\sqrt{n}$ we define the set of length-$l$ spectral quasi-correlations
\begin{equation}
\label{eq:Cc qcorr def}
\Cc_{n}(l;K) =
\left\{(\lambda_{1},\ldots,\lambda_{l})\in\Ec_{n}^{l}:\: 0 < \left\| \sum\limits_{j=1}^{l}\lambda_{j} \right\| \le K  \right\}.
\end{equation}

\item For $\delta>0$ we say that a number $n\subseteq S$ satisfies the $(l,\delta)$-separatedness hypothesis $\Ac(n;l,\delta)$ if
$$\Cc_{n}(l;n^{1/2-\delta}) = \varnothing.$$

\end{enumerate}

\end{definition}

\vspace{3mm}

For example, a number $n\in S$ satisfies the $(2,\delta)$-separatedness hypothesis $\Ac(n;2,\delta)$ if the nearest neighbour distance in $\Ec_{n}$ grows like $n^{1/2-\delta}$, i.e. for all $\lambda,\lambda'\in\Ec_{n}$ with $\lambda\ne\lambda'$ we have
$$\|\lambda-\lambda'\|> n^{1/2-\delta}.$$ Bourgain and Rudnick ~\cite[Lemma $5$]{B-R11} proved that all but $\new{O(X^{1-2\delta/3})}$ numbers
$$n\in S(X)=\{n\in S:\: n\le X\}$$ satisfy $\Ac(n;2,\delta)$ (cf. \eqref{eq:S(x) asympt cRL x/log(x)}),
and more recently Granville-Wigman ~\cite{Gr-Wi} refined their estimate
to yield a precise asymptotic expression for the number of exceptions $n$ to $\Ac(n;2,\delta)$.
More generally, the following theorem shows that for generic $n\in S$ the assumption
$\Ac(n;l,\delta)$ holds for all $n\ge 2$; it is stronger than needed for our purposes in more than one way
(see \S\ref{sec:rest mom quasi corr}).

\begin{theorem}
\label{thm:quasi-corr small}
For every $l\ge 2$ and $\delta>0$ there exist a set $S'=S'(l;\delta)\subseteq S$ such that:

\begin{enumerate}

\item The set $S'\subseteq S$ has density $1$ in $S$.

\item The set of accumulation points of $\{ \widehat{\new{\tau_{n}}}(4)\}_{n\in S'}$ contains the interval $[0,1]$.

\item For every $n\in S'$ the length-$l$
spectral quasi-correlation set $$\Cc_{n}(l;n^{1/2-\delta})=\varnothing$$ is empty, i.e. $\Ac(n;l,\delta)$ is satisfied.

\end{enumerate}

\end{theorem}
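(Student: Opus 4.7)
The statement has three parts; (iii) is the substantive one, from which (i) follows directly and (ii) after a modest augmentation. Writing
$$B(X) := \bigl|\{n\in S(X):\: \Cc_n(l;n^{1/2-\delta})\ne\varnothing\}\bigr|,$$
my plan is to show $B(X) = O_{l,\delta}(X^{1-\kappa})$ for some $\kappa=\kappa(l,\delta)>0$. Setting $S_0 := \{n\in S:\,\Cc_n(l;n^{1/2-\delta})=\varnothing\}$ and comparing with \eqref{eq:S(x) asympt cRL x/log(x)} then gives (i) immediately.

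For the counting bound I would begin with
$$B(X)\le \sum_{\substack{\mu\in\Z^2 \\ 0<\|\mu\|\le X^{1/2-\delta}}}\#\bigl\{(\lambda_1,\ldots,\lambda_l)\in(\Z^2)^l:\ \|\lambda_i\|^2\text{ common},\, \le X,\, \textstyle\sum\lambda_i=\mu\bigr\},$$
and bound the inner count by an iterated lattice-on-line argument generalizing \cite[Lemma $5$]{B-R11}. After fixing $\mu$ and the first $l-2$ vectors (subject to the common-norm constraint), the remaining pair obeys $\lambda_{l-1}+\lambda_l=\nu:=\mu-\sum_{i\le l-2}\lambda_i$ together with $\|\lambda_{l-1}\|=\|\lambda_l\|$. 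When $\nu\ne 0$ this forces $\lambda_{l-1}$ onto the affine line $\{x\in\R^2:\langle\nu,x\rangle=\|\nu\|^2/2\}$, yielding at most $O(\sqrt{X}\,\gcd(\nu)/\|\nu\|+1)$ admissible choices; the contribution of $\nu=0$ corresponds to an antipodal pair plus a shorter quasi-correlation and can be absorbed by induction on $l$, with $l=2$ being the Bourgain-Rudnick base case. Summation over $\mu$ and the free parameters $(\lambda_1,\ldots,\lambda_{l-2})$---using divisor-type moment bounds such as $\sum_{n\le X}\Nc_n^{l-2}\ll X(\log X)^{O_l(1)}$ to control the common-norm condition---should deliver the desired polynomial saving.

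For (ii) I would rely on the attainability results of \cite{K-W}: for every $\eta\in[0,1]$ there is a subsequence $\{n_k^{(\eta)}\}\subseteq S$ along which $\widehat{\tau_{n_k^{(\eta)}}}(4)\to\eta$. Since $S\setminus S_0$ is only polynomially large, any such construction producing more than $X^{1-\kappa/2}$ representatives up to $X$ can be pruned to lie inside $S_0$; augmenting $S_0$ by the countable union of these pruned sequences (over a countable dense set of $\eta\in[0,1]$) preserves density $1$ and realizes all of $[0,1]$ as accumulation points.

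The principal obstacle I anticipate is the counting estimate for $l\ge 3$: the case $l=2$ is Bourgain-Rudnick essentially verbatim, but each reduction step in the lattice-on-line approach loses a factor of $\sqrt{X}$, so a naive iteration is wasteful, and one must balance this loss against the divisor moments carefully enough that the induction closes with a uniform $\kappa>0$ independent of $X$. A secondary subtlety arises in (ii), where the Kurlberg-Wigman attainability constructions may require a mild quantitative refinement to guarantee that the prescribed subsequences are dense enough to survive the intersection with $S_0$; this should, however, follow from the explicit form of those constructions, which are typically built from products of primes in suitable congruence classes.
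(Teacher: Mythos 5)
Your reduction to the counting bound $B(X)=O_{l,\delta}(X^{1-\kappa})$ is the right thing to aim for, and for $l=2$ your lattice-on-a-line argument does reproduce Bourgain--Rudnick with a power saving. The gap appears precisely where you flag it, at $l\ge 3$, but it is more serious than a matter of balancing constants: the iteration you sketch cannot close. After fixing $\mu$ and the ``free'' vectors $\lambda_1,\ldots,\lambda_{l-2}$, the number of choices for these is $\sum_{n\le X}\Nc_n^{l-2}\asymp X(\log X)^{O_l(1)}$, so even the trivial ``$+1$'' term in your bound for the last pair already contributes $\gg X^{1-2\delta}\cdot X(\log X)^{O_l(1)}=X^{2-2\delta+o(1)}$, and the genuinely sharp count for the last pair (intersection of an affine line with a circle of prescribed radius) is at most $2$, so no further saving is available. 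The underlying obstruction is structural: for $l\ge 3$ the number of quasi-correlated $l$-tuples can vastly overcount the number of bad $n$, because most tuples $(\lambda_1,\ldots,\lambda_{l-2})$ on a common circle do not extend to a quasi-correlation, and your bound doesn't see this. Your inductive absorption of the $\nu=0$ case doesn't help because the dominant loss is in the $\nu\neq 0$ regime. Without a way to parametrize the bad $n$ essentially one-to-one, the argument gives no power saving for $l\ge 3$.

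The paper resolves exactly this difficulty by a genuinely different parametrization. Instead of indexing by lattice-point $l$-tuples, one writes $n=|\pi_1\cdots\pi_k|^2$ as a product of Gaussian primes (cutting off $\Omega(n)\ge K$ via Erd\H{o}s--S\'ark\"ozy), so that an ascending tuple $(\pi_1,\ldots,\pi_k)$ is in essentially one-to-one correspondence with $n$, and each $\xi\in\Ec_n$ takes the form $i^\nu\sqrt{n}\prod_j e^{i\epsilon_j\theta_j}$. The quasi-correlation condition becomes $\left|\sum_r\eta_r\cos\left(\nu_r\pi/2+\sum_j\epsilon_j^{(r)}\theta_j\right)\right|<y^{-\delta}$, and a key arithmetic input is that the auxiliary quantities $\hat A_i,\hat B_i$ arising from the cosine addition formula are rationals with denominator $|\prod_{j\ne i}a_j|$, so that being small but nonzero forces $|\prod_{j\ne i}a_j|$ to be large. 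This drives an induction on $k$ over Gaussian integers (Proposition~\ref{Spm}), giving the power-saving bound for the exceptional set. For part (ii), your plan to ``prune'' the Kurlberg--Wigman attainability sequences is also at risk: those constructions need not have polynomial density in $[X,2X]$, so removing an $O(X^{1-\kappa})$-sized exceptional set can eliminate them entirely. The paper instead builds its own sequences from $k$-almost Gaussian primes confined to narrow sectors around $s2^j/2^k$, $j<k$, with $k\asymp\log\log N$; the resulting set has $\gg N/\exp(\sqrt{\log N})$ elements (Lemma~\ref{almostprimes}), which dominates the exceptional bound, and Lemma~\ref{Boolean} shows the induced angular measures converge to $\nu_s$, yielding $\widehat{\nu_s}(4)=\sin(4s)/(4s)$ for all $s\in[0,\pi/4]$.
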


The following result is a version of Theorem \ref{thm:toral univ gen} with an explicit sufficient condition on $n$, by virtue of Theorem
\ref{thm:quasi-corr small}.

\begin{theorem}
\label{thm:toral univ spec corr}

Let $\epsilon>0$, $\delta<\epsilon$, and $S'=\{n\}\subseteq S$ be a sequence of energies so that for all $n\in S'$ the hypotheses
$\Ac(n;2,n^{1/2-\delta/2})$ and $\Ac(n;6,n^{1/2-\delta})$ are satisfied.
Then along $S'$ both \eqref{eq:Var asympt gen} and \eqref{eq:restr nod len full corr full} of Theorem \ref{thm:toral univ gen} hold.

\end{theorem}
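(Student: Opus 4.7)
The plan is to follow the Wiener chaos analysis of \cite{K-K-W, M-P-R-W}, adapted to the restricted length. Write the chaos decomposition
$$\Lc_{n;s} - \E[\Lc_{n;s}] = \sum_{q\ge 1}\Lc_{n;s}[2q];$$
only even chaos terms appear by evenness of the nodal length functional, and the second chaos vanishes (``Berry's cancellation''), so the leading-order fluctuations come from the fourth chaos $\Lc_{n;s}[4]$. Its variance is, following the Kac-Rice computation of \cite{K-K-W, M-P-R-W}, a finite linear combination of restricted moments $\Rc_{n}(l;s)$ (see \eqref{eq:Rc moments rn def}) and analogous integrals with $r_{n}$ replaced by first partial derivatives, for $l\in\{2,4\}$. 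The remainder $\sum_{q\ge 3}\var(\Lc_{n;s}[2q])$ is bounded, as in the full-torus case, by $\Rc_{n}(6;s)$ together with such derivative analogues.

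The main technical step is to prove, under the hypotheses $\Ac(n;2,\delta/2)$ and $\Ac(n;6,\delta)$, the factorisation
\begin{equation*}
\Rc_{n}(l;s) = (\pi s^{2})^{2}\cdot\Rc_{n}(l)\cdot\bigl(1+O_{\epsilon}(\Nc_{n}^{-1/2})\bigr),\qquad l\in\{2,4,6\},
\end{equation*}
and analogously for the derivative integrals. Expand
\begin{equation*}
\Rc_{n}(l;s) = \frac{1}{\Nc_{n}^{l}}\sum_{(\lambda_{1},\ldots,\lambda_{l})\in\Ec_{n}^{l}}\bigl|\widehat{\mathbf{1}_{B(s)}}(\lambda_{1}+\cdots+\lambda_{l})\bigr|^{2},
\end{equation*}
and note that $\widehat{\mathbf{1}_{B(s)}}(\xi)$ is essentially $\pi s^{2}$ for $\|\xi\|\lesssim 1/s$ and decays thereafter. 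Since $s>n^{-1/2+\epsilon}$ and $\delta<\epsilon$, the quasi-correlation hypothesis forces every non-zero partial sum $\sum\lambda_{j}$ to satisfy $\|\sum\lambda_{j}\|>n^{1/2-\delta}\gg 1/s$, making the off-diagonal contribution negligible; the zero-sum contribution equals $(\pi s^{2})^{2}|\Sc_{n}(l)|/\Nc_{n}^{l}=(\pi s^{2})^{2}\Rc_{n}(l)$. The sharp error $O(\Nc_{n}^{-1/2})$ for the variance is then delivered by the Bombieri-Bourgain bound \eqref{eq:S6<<N^7/2} applied at $l=6$, exactly as in the unrestricted derivation of \eqref{eq:var(tot nod length)}.

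Substituting the factorisation into the fourth-chaos variance formula reproduces \eqref{eq:var(tot nod length)} with an extra factor of $(\pi s^{2})^{2}$, yielding \eqref{eq:Var asympt gen}. For \eqref{eq:restr nod len full corr full}, expand $\cov(\Lc_{n;s},\Lc_{n})$ chaos by chaos. The fourth-chaos cross-term is a hybrid integral with one factor $\mathbf{1}_{B(s)}$ and one factor $\mathbf{1}_{\Tb^{2}}\equiv 1$; the same Fourier-analytic argument now produces a factor $(\pi s^{2})$ in place of $(\pi s^{2})^{2}$, giving $\cov(\Lc_{n;s},\Lc_{n})\sim c_{n}(\pi s^{2})E_{n}/\Nc_{n}^{2}$. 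Combined with \eqref{eq:var(tot nod length)} and the freshly established \eqref{eq:Var asympt gen}, this forces $\corr(\Lc_{n;s},\Lc_{n})\to 1$ uniformly in $s>n^{-1/2+\epsilon}$.

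The principal difficulty is the uniform Fourier analysis of $\widehat{\mathbf{1}_{B(s)}}$ in the Planck-scale regime: one must jointly control the cumulative contribution of quasi-correlated tuples whose partial sums lie in the intermediate range $1/s\lesssim\|\xi\|\le n^{1/2-\delta}$. This is precisely where the $l=6$ quasi-correlation hypothesis (rather than only an $l=4$ version) becomes essential, since it is needed to match the $O(\Nc_{n}^{-1/2})$ error term that is demanded for compatibility with \eqref{eq:var(tot nod length)}.
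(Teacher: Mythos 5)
Your proposal follows the Wiener--chaos route, which the paper explicitly mentions in \S2.3 but deliberately does \emph{not} carry out; the paper instead works directly with the Kac--Rice formula (Lemma \ref{lem:nodal var 2pnt corr norm}) together with a singular/nonsingular splitting, a Taylor expansion of $K_2$ (Proposition \ref{prop:Kac-Rice K2 asymp}), and the moment asymptotics of Lemma \ref{lem:r der moments}. For the full-correlation statement the paper uses a far shorter argument: by the group structure of the torus, $\int_{B(s)\times\Tb^2}\bigl(K_2(x-y)-\tfrac14\bigr)dxdy = \vol(B(s))\int_{\Tb^2}\bigl(K_2(x)-\tfrac14\bigr)dx$, giving the \emph{exact} identity $\cov(\Lc_{n;s},\Lc_n)=(\pi s^2)\var(\Lc_n)$ with no chaos decomposition at all. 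Your hybrid-integral remark captures the same phenomenon, but only at the level of the fourth chaos; to conclude you would still need to control the cross-covariances of all other chaos levels.

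Your central Fourier insight is right and is essentially the content of Lemma \ref{lem:sum pairs osc int}/Lemma \ref{lem:r der moments}: writing the restricted moments as sums of $\bigl|\widehat{\mathbf 1_{B(s)}}(\lambda_1+\cdots+\lambda_l)\bigr|^2/\Nc_n^l$, isolating the zero-sum (``diagonal'') term which produces $(\pi s^2)^2\Rc_n(l)$, and killing the rest by the quasi-correlation hypothesis ($\|\sum\lambda_j\|>n^{1/2-\delta}$, hence $s\|\sum\lambda_j\|\gg n^{\epsilon-\delta}\to\infty$ and the $J_1$ decay applies). That is the correct mechanism. One clarification: under the quasi-correlation hypotheses the off-diagonal error is in fact $O(\Nc_n^{-A})$ for \emph{every} $A>0$, not merely $O(\Nc_n^{-1/2})$; the $\Nc_n^{-1/2}$ in \eqref{eq:Var asympt gen} comes exclusively from the diagonal six-fold sum via Bombieri--Bourgain's bound \eqref{eq:S6<<N^7/2}, exactly as you note.

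There is, however, one genuine error in your outline. You assert that ``the second chaos vanishes (`Berry's cancellation')'' for the restricted length. This is false: the paper states explicitly (\S\ref{sec:disc Berry cancel}) that for shrinking domains the second-chaos projection ``does not vanish precisely,'' it is merely dominated by the fourth. Exact vanishing relies on the orthogonality relations \eqref{eq:orthog rel exp int} over the full torus, which are not available on $B(s)$. To run your argument you must instead \emph{prove} that $\var(\Lc_{n;s}[2])=o\bigl(s^4 E_n/\Nc_n^2\bigr)$, and that bound again rests on the quasi-correlation hypotheses via the same Fourier decay argument you invoke for the fourth chaos. Absent this step your proposal has a gap; with it, your route is a viable (if longer) alternative to the paper's direct computation, with the modest advantage that the chaos expansion avoids the singular-set/Lipschitz-partition argument needed to control the $1/(1-r_n^2)$ factors in the direct Kac--Rice approach.
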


In light of Theorem \ref{thm:quasi-corr small}, Theorem \ref{thm:toral univ spec corr} clearly implies the statement of Theorem
\ref{thm:toral univ gen}, so from this point on we will only aim at proving Theorem \ref{thm:toral univ spec corr}.

\subsection*{Acknowledgements}

We are indebted to Jerry Buckley, Manjunath Krishnapur, P\"{a}r Kurlberg, Ze\'{ev} Rudnick and Mikhail Sodin for many stimulating and fruitful conversations, and their valuable comments on the earlier version of this manuscripts. It is a pleasure to thank Andrew Granville for his inspiring ideas, especially related to some aspects involving various subjects in the geometry of numbers. \new{Finally, we are grateful to the anonymous
referee for his very thorough reading of the manuscript, also resulting in improved readability of the present version.}
The research leading to these results has received funding from
the European Research Council under the European Union’s Seventh Framework Programme
(FP7/2007-2013) / ERC grant agreements n$^{\text{o}}$  277742 Pascal (D. M.) and n$^{\text{o}}$  335141 Nodal (J.B. and I.W.),

\section{Discussion}
\label{sec:discussion}

\subsection{Berry's cancellation phenomenon}

\label{sec:disc Berry cancel}

It was originally found by Berry ~\cite{Berry2002} that
the nodal length variance of the RWM (see \S\ref{sec:Lap eig Planck})
in growing domains, e.g. the radius-$R$ Euclidean balls $B(R)\subseteq \R^{2}$ is of lower order than what was expected from
the scaling considerations, i.e. of order of magnitude $\log{R}$, rather than $R$: ``...it results
from a cancellation whose meaning is still obscure..."; it was found that the leading term of the $2$-point correlation
function is purely oscillatory. The same phenomenon (``Berry's Cancellation") was rediscovered ~\cite{Wi09} for
random high degree spherical harmonics, and then for the arithmetic random waves ~\cite{K-K-W} (``Arithmetic Berry's Cancellation").

In ~\cite{M-P-R-W} the Wiener chaos expansion was applied to the nodal length, and it was interpreted that Berry's cancellation has to do with the precise vanishing of the projection of the nodal length into the $2$nd chaos, with the $4$th one dominating. A similar observation
with the $4$th chaotic projection dominating was also made for the high degree $l\rightarrow\infty$ spherical harmonics ~\cite{M-R-W}, also
for shrinking domains of Planck scale (e.g. radius $\frac{R}{l}$ spherical caps with $R=R(l)\rightarrow\infty$ arbitrarily slowly); though
for the shrinking domains the $2$nd chaotic projection {\em does not vanish precisely}, it is still dominated by the $4$th one.

The ability to understand the asymptotic behaviour of the nodal length variance for the torus ~\cite{K-K-W} depended on evaluating the various
moments of the covariance function $r_{n}$ in \eqref{eq:rn covar def} via the orthogonality relations \eqref{eq:orthog rel exp int} holding for the full torus (see \S\ref{sec:rest mom quasi corr} to follow immediately);
this no longer holds for shrinking domains (or even fixed subdomains of $\Tb^{2})$. It was then thought that the analogous results
of ~\cite{K-K-W} and further ~\cite{M-P-R-W} fail decisively for domains shrinking within Planck scale rate; our principal results show
that the contrary is true (Berry's cancellation holding;
the second chaos projection, though not vanishing precisely, being dominated by the $4$th one; understanding the limit distribution
of the nodal length) if we are willing to excise a {\em thin} set of energies $\{n\}\subseteq S$ and work {\em slightly above} the Planck scale.

\subsection{Restricted moments on shrinking domains and spectral quasi-correlations}

\label{sec:rest mom quasi corr}

The principal result of this manuscript asserts that the nodal length distribution for arithmetic random waves \eqref{eq:Tn sum def}
on the whole torus ~\cite{K-K-W,M-P-R-W} is, up to a normalising factor, asymptotic to the nodal length restricted to balls shrinking slightly above
Planck scale, albeit for generic energy levels $n$ only. Let $r_{n}(x,y)=r_{n}(x-y)$ be the covariance function
\eqref{eq:rn covar def} of $T_{n}$; one may expand ~\cite{K-K-W} the nodal length variance in terms of the moments of $r_{n}$ and its derivatives, and these are also intimately related to the finer aspects of its limit distribution \cite{M-P-R-W}. Let us consider the $2$nd moment of $r_{n}$ as an illustrative example; for the unrestricted problem (i.e. the full torus) we have by the translation invariance
\begin{equation}
\label{eq:rn 2nd moment comp full}
\begin{split}
\int\limits_{\Tb^{2}\times\Tb^{2}}r_{n}(x-y)^{2}dxdy &= \int\limits_{\Tb^{2}\times\Tb^{2}}r_{n}(x)^{2}dxdy
= \frac{1}{\Nc_{n}^{2}}\sum\limits_{\lambda,\lambda'\in\Ec_{n}}\int\limits_{\Tb^{2}}e(\langle x,\lambda-\lambda' \rangle)dx
\\&= \frac{1}{\Nc_{n}} + \frac{1}{\Nc_{n}^{2}}\sum\limits_{\lambda\ne\lambda'}\int\limits_{\Tb^{2}}e(\langle x,\lambda-\lambda' \rangle)dx
= \frac{1}{\Nc_{n}},
\end{split}
\end{equation}
upon separating the diagonal, and using the orthogonality relations \eqref{eq:orthog rel exp int}.

For the restricted moments, e.g. $$\int\limits_{B(s)\times B(s)}r_{n}(x-y)^{2}dxdy,$$ an analogue of \eqref{eq:rn 2nd moment comp full}
no longer holds, as we no longer have the precise orthogonal relations \eqref{eq:orthog rel exp int}
nor the translation invariance (first equality in \eqref{eq:rn 2nd moment comp full}). We may still separate the diagonal to write
\begin{equation}
\label{eq:rn 2nd moment comp rest}
\begin{split}
\int\limits_{B(s)\times B(s)}r_{n}(x-y)^{2}dxdy &= (\pi s^{2})^{2}\cdot \frac{1}{\Nc_{n}} + \frac{1}{\Nc_{n}^{2}}\sum\limits_{\lambda\ne\lambda'}\int\limits_{B(s)\times B(s)}e(\langle x-y,\lambda-\lambda' \rangle)dxdy
\\&= (\pi s^{2})^{2}\cdot \frac{1}{\Nc_{n}} +
\frac{1}{\Nc_{n}^{2}}\sum\limits_{\lambda\ne\lambda'}\left|\int\limits_{B(s)}e(\langle x,\lambda-\lambda' \rangle)dx\right|^{2},
\end{split}
\end{equation}
so we no longer need to cope with the lack of translation invariance. Comparing \eqref{eq:rn 2nd moment comp rest} with
\eqref{eq:rn 2nd moment comp full} we observe that both have the same {\em diagonal} contribution, with the off-diagonal one
for \eqref{eq:rn 2nd moment comp rest} might not be vanishing. We then observe that the inner integral on the r.h.s. of
\eqref{eq:rn 2nd moment comp rest} is the Fourier transform of the characteristic function $\chi_{B(s)}$ of the Euclidean
unit disc $B(s)\subseteq \R^{2}$, evaluated at $\lambda-\lambda'$; by scaling, we have that
\begin{equation*}
\int\limits_{B(s)}e(\langle x,\lambda-\lambda' \rangle)dx = s^{2}\chi_{B(1)}(s\cdot \|\lambda-\lambda'\|) = 2\pi s^{2}\cdot \frac{J_{1}(s \|\lambda-\lambda'\|)}{s\|\lambda-\lambda\|}
\end{equation*}
where $J_{1}$ is the Bessel $J$ function (cf. \eqref{eq:chihat(xi)via J1}). We then obtain
\begin{equation}
\label{eq:rn 2nd moment Bessel}
\int\limits_{B(s)\times B(s)}r_{n}(x-y)^{2}dxdy = (\pi s^{2})^{2}\frac{1}{\Nc_{n}} +
2\pi s^{2} \sum\limits_{\lambda\ne \lambda'}\frac{J_{1}(s\| \lambda-\lambda'\|)^{2}}{s^{2}\|\lambda-\lambda'\|^{2}};
\end{equation}
since $J_{1}$ decays at infinity, it is evident that for the diagonal contribution to dominate
the r.h.s. of \eqref{eq:rn 2nd moment Bessel} it is important to control the contribution of the regime
$s\| \lambda-\lambda'\|  \ll 1. $

Since $s$ is assumed to be above the Planck scale $s>n^{-1/2+\epsilon}$,
and $\Nc_{n}$ is much smaller \eqref{eq:Nc dim << n^eps} than any power of $n$, it is sufficient to bound the contribution of the range
\new{$\|\lambda-\lambda'\| < n^{1/2-\delta}$} for some $\delta<\epsilon$, i.e. the size of the quasi-correlation set
\new{$\Cc_{n}(2;n^{1/2-\delta})$}. Recalling the notation \eqref{eq:Rc moments rn def} for restricted moments,
the above discussion shows that, under the assumption $\Ac(n;2,\delta)$ that $\Cc_{n}(l;n^{1/2-\delta})=\emptyset$,
the second moment of $r_{n}$, restricted to $B(s)$, is asymptotic to
\begin{equation}
\label{eq:R2nd mom asymp 1/N}
\Rc_{n}(2;s) \sim (\pi s^{2})^{2}\cdot \frac{1}{\Nc_{n}}.
\end{equation}
Theorem \ref{thm:quasi-corr small} (also the aforementioned Bourgain-Rudnick's ~\cite[Lemma $5$]{B-R11})
shows that the hypothesis $\Ac(n;2,\delta)$ is satisfied for a density $1$ sequence $\{n\}\subseteq S$;
clearly, $\Ac(n;2,\delta)$ not allowing any quasi-correlations is far stronger than what is required for \eqref{eq:R2nd mom asymp 1/N} to be satisfied, by the above.
Instead, in order to yield the asymptotics \eqref{eq:R2nd mom asymp 1/N} for the second restricted moment, it would be sufficient to impose that
the quasi-correlation set $$\Cc_{n}(l;n^{1/2-\delta}) = o(\Nc_{n}) $$ is dominated by the diagonal $\Dc_{n}(2)$.

More generally, for the other relevant moments associated to $r_{n}$ (namely, higher moments of $r_{n}$ or its derivatives, or second moment of various derivatives of $r_{n}$) we need to expand the restricted moments up to an error term $o\left(\frac{1}{\Nc_{n}^{2}}\right)$. As for the nodal length computations we need to evaluate the $2$nd and $4$th moments and bound the $6$th restricted moment $$\Rc_{n}(6;s) = o\left(\frac{s^{4}}{\Nc_{n}^{2}}\right)$$ (see \eqref{eq:L2 approx def} and \eqref{eq:L2 mom ball estimate} below),
that naturally brings up the questions of bounding the quasi-correlation sets $\Cc_{n}(2;n^{1/2-\delta})$,
$\Cc_{n}(4;n^{1/2-\delta})$, and $\Cc_{n}(6;n^{1/2-\delta})$ (see Lemma \ref{lem:r der moments} and its proof below),
with the diagonal contribution coming from \new{$\Sc_{n}(2)=\Dc_{n}(2)$, $\Sc_{n}(4)=\Dc_{n}(4)$ or $\Sc_{n}(6)$ respectively
(and $\Sc_{n}(6)$} being bounded by ~\cite{B-B}).
It then follows that the conclusions of Theorem \ref{thm:toral univ spec corr}
hold uniformly for $s>n^{-1/2+\epsilon}$ under the hypotheses $\Ac(n;2,\delta)$, $\Ac(n;4,\delta)$ and $\Ac(n;6,\delta)$ for some $\delta<\epsilon$;
this is somewhat different from the assumptions made within the formulation of Theorem \ref{thm:toral univ spec corr}. It was observed
~\cite[Lemma $5.2$]{R-W2017} that, in fact, the hypothesis $\Ac(n;2,n^{1/2-\delta})$ implies $\Ac(n;4,n^{1/2-2\delta})$,
explaining the said discrepancy:
\begin{lemma}[~\cite{R-W2017}, Lemma $5.2$]
\label{lem:RW qcorr2=>qcorr4}
For $\delta<1/2$, $n\in S$ sufficiently big, if $n$ satisfies the separatedness hypothesis $\Ac(n;2,n^{1/2-\delta})$, then $n$ also satisfies
$\Ac(n;4,n^{1/2-2\delta})$.
\end{lemma}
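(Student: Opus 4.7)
The plan is to prove the contrapositive. Assume there exists $(\lambda_1, \lambda_2, \lambda_3, \lambda_4) \in \Ec_n^4$ with $v := \sum_{i=1}^4 \lambda_i$ satisfying $0 < \|v\| \le n^{1/2-2\delta}$; I will exhibit a pair in $\Ec_n^2$ violating the hypothesis $\Ac(n;2, n^{1/2-\delta})$, i.e.\ a nonzero pair sum of norm at most $n^{1/2-\delta}$. First dispose of the trivial case where some pair sum among the four already vanishes: if $\lambda_1+\lambda_2=0$, say, then $\lambda_3+\lambda_4=v$ is such a violating pair immediately (nonzero, of norm $\le n^{1/2-2\delta}<n^{1/2-\delta}$). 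Thus I henceforth assume $\lambda_i+\lambda_j\ne 0$ for every $i\ne j$, hence $\|\lambda_i+\lambda_j\|>n^{1/2-\delta}$ by hypothesis.

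I would further split according to whether the $\lambda_i$ are all distinct. In the \emph{repetition} sub-case, say $\lambda_1=\lambda_2$, the relation $2\lambda_1+\lambda_3+\lambda_4=v$ together with $a := \lambda_1+\lambda_3$, $b := \lambda_1+\lambda_4$ yields the clean identity
\begin{equation*}
\|a\|^2+\|b\|^2 \;=\; 4n+2\langle \lambda_1,\lambda_3+\lambda_4\rangle \;=\; 4n+2\langle \lambda_1, v-2\lambda_1\rangle \;=\; 2\langle \lambda_1,v\rangle \;\le\; 2\sqrt n\cdot n^{1/2-2\delta} \;=\; 2n^{1-2\delta},
\end{equation*}
which directly contradicts the lower bound $\|a\|^2, \|b\|^2 > n^{1-2\delta}$ coming from the non-vanishing of the pair sums $a$ and $b$.

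For the \emph{distinct} sub-case, I would invoke Zygmund's trick in approximate form. At $v=0$ Zygmund's observation gives $\{\lambda_1,\lambda_2\}=\{-\lambda_3,-\lambda_4\}$ as sets, and for $v$ small I expect only an approximate matching. Concretely, the midpoints $m := (\lambda_1+\lambda_2)/2$ and $-m' := -(\lambda_3+\lambda_4)/2$ differ by $v/2$, while the chords through $\{\lambda_1,\lambda_2\}$ and $\{-\lambda_3, -\lambda_4\}$ on the circle of radius $\sqrt n$ are perpendicular to the respective midpoint directions, with half-lengths $\sqrt{n-\|m\|^2}$ and $\sqrt{n-\|m'\|^2}$. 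The non-vanishing of pair sums gives $\|m\|, \|m'\| \ge n^{1/2-\delta}/2$, and the non-vanishing of pair differences --- obtained by applying the hypothesis to $(\lambda_1,-\lambda_2)\in\Ec_n^2$ (which is admissible since $\lambda_1\ne\lambda_2$) --- gives $\|\lambda_1-\lambda_2\|>n^{1/2-\delta}$ and hence a matching lower bound on the chord half-lengths. Quantifying the translation (of size $\|v\|/2$) and the rotation angle ($\lesssim \|v\|/\|m\| \lesssim n^{-\delta}$) taking one chord to the other, I conclude that some $\lambda_i$ matches some $-\lambda_j$ within distance $\lesssim n^{1/2-\delta}$, producing the desired 2-quasi-correlation via the pair $(\lambda_i, \lambda_j)$.

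The principal obstacle is the distinct sub-case: the geometric estimates naturally produce a matching of order $n^{1/2-\delta}$ only up to a multiplicative constant of order unity, whereas the formal statement requires a strict inequality. Either a careful optimization of the constants in the chord analysis (exploiting that the lower bounds on $\|m\|$, $\|m'\|$ and on the chord lengths hold \emph{simultaneously}), or a harmless adjustment of $\delta$ to absorb the constant, is needed; such adjustments are invisible in the applications to Theorems \ref{thm:toral univ gen} and \ref{thm:toral univ spec corr}. Unlike the clean algebraic cancellation of the repetition sub-case, there is no immediately obvious identity eliminating constants in the distinct sub-case, so the geometric route appears essential and is the heart of the matter.
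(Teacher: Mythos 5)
The paper does not prove this lemma --- it is imported verbatim from \cite{R-W2017} --- so there is no proof in the text to compare against; your proposal must stand on its own. The trivial case (some pair sum already vanishing) is dispatched correctly, and your repetition sub-case is a genuinely nice clean piece of algebra: the identity
\begin{equation*}
\|\lambda_1+\lambda_3\|^2+\|\lambda_1+\lambda_4\|^2 \;=\; \|\lambda_1-\lambda_2\|^2 + 2\langle\lambda_1,v\rangle
\end{equation*}
(which specialises to your $\|a\|^2+\|b\|^2 = 2\langle\lambda_1,v\rangle$ when $\lambda_1=\lambda_2$) kills repetitions immediately, and I have checked that it handles the boundary configurations ($\lambda_1=\lambda_2=\lambda_3$, or $\lambda_1=\lambda_2$ with $\lambda_3=\lambda_4$, etc.) as well. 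Note however that the very same identity is vacuous in the distinct case: the extra term $\|\lambda_1-\lambda_2\|^2 > n^{1-2\delta}$ exactly cancels the room you would need, so the repetition case gives no hint about how to finish.

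The distinct sub-case is the substance of the lemma, and here the proposal has a genuine gap which you correctly identify but do not close. Tracking the quantitative chord-matching you describe, one ends up bounding, say, $\|\lambda_1+\lambda_4\| \le \|v\|/2 + \|d_1-d_2\|$ where $d_j$ are the half-chords, and the Lipschitz constant of the map sending a chord-midpoint $m$ to a chord-endpoint is $\sqrt n/(\rho Q)$ with $\rho = \|m\|$, $Q = \sqrt{n-\rho^2}$ the half-chord length. Using $\rho > n^{1/2-\delta}/2$ and $Q > n^{1/2-\delta}/2$ from the hypothesis gives $\rho Q \gtrsim \tfrac12 n^{1-\delta}$, hence a final bound of the shape $\|\lambda_i+\lambda_j\| \le C\,n^{1/2-\delta}$ with $C$ an absolute constant that is not obviously $\le 1$; indeed the worst case $Q\approx n^{1/2-\delta}/2$, $\rho\approx\sqrt n$ (or vice versa) squeezes $C$ right up to $1$ from above, and naively adding the translation, rotation and length contributions overshoots. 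In the exactly reflection-symmetric configuration the exact identity $\|\lambda_1+\lambda_4\|\cdot(\|d_1\|+\|d_2\|) = \tfrac12\|v\|\cdot\|\lambda_1-\lambda_4\|$ does give a strict bound $<n^{1/2-\delta}$, but this identity relies on the perpendicularity of $v$ and $d_1-d_2$ and does not persist in general. You are therefore right that either a more careful bookkeeping in the chord analysis, or a relabelling using the pairing that maximises $|\lambda_i+\lambda_j|$ so that $\rho Q$ is bounded away from its extremes, or else accepting a constant in the conclusion (which is harmless for every application in this paper, where $\delta$ is a free parameter to be shrunk), is needed to close the argument --- and as written the proposal has not done any of these.

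A parallel route that makes the constant explicit, if you wish to pursue it: writing $u_1=\lambda_1+\lambda_2$, $w_1=\lambda_3+\lambda_4$, and $u_2=\lambda_1+\lambda_3$, $u_3=\lambda_1+\lambda_4$, one has the complex identity
\begin{equation*}
u_2\,u_3 \;=\; \lambda_1 v - (\lambda_1\lambda_2 - \lambda_3\lambda_4),
\end{equation*}
and since $\lambda_1\lambda_2 = n\,u_1/\overline{u_1}$ on the circle of radius $\sqrt n$, one gets $|\lambda_1\lambda_2-\lambda_3\lambda_4| = 2n\,|\mathrm{Im}(u_1\overline{w_1})|/(|u_1||w_1|) \le 2n\|v\|/\max(|u_1|,|w_1|)$. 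Choosing the pairing of $\{1,2,3,4\}$ that maximises $|u_1|$ (so $|u_1|\ge 2\sqrt{n/3}$ by averaging) yields $|u_2||u_3| \le (1+\sqrt 3)\sqrt n\,\|v\|$, hence $\min(|u_2|,|u_3|) \lesssim n^{1/2-\delta}$ --- again with an explicit constant $>1$. This confirms that the constant is intrinsic to the elementary arguments, so the precise statement with $n^{1/2-2\delta}$ requires either the sharper geometric bookkeeping alluded to above or a harmless renaming of $\delta$.
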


By the above, rather than assuming that
$$\Cc_{n}(2;n^{1/2-\delta/2})=\Cc_{n}(6;n^{1/2-\delta})=\varnothing$$ are empty, it would be sufficient to make the somewhat weaker assumptions
$$|\Cc_{n}(2;n^{1/2-\delta})|=o(\Nc_{n}),\; |\Cc_{n}(4;n^{1/2-\delta})|=o(\Nc_{n}^{2}), \;\text{and}\; |\Cc_{n}(6;n^{1/2-\delta})|=o(\Nc_{n}^{4}).$$ In fact, using a more combinatorial approach ~\cite{F-P-S-S-Z}, it is possible to prove
that if for some choice of $0<2\eta<\delta\le 1$ we have that both
$\Cc_{n}(2;n^{1/2-\delta/2+\eta}) =  \Cc_{n}(3;n^{1/2-\delta+\eta})= \varnothing$ are empty, then
$$|\Cc_{n}(6;n^{1/2-\delta})| \ll \Nc_{n}^{11/3} = o(\Nc_{n}^{4}),$$ which, on one hand, is a generic condition on $n$,
and on the other hand, in light of
the aforementioned result ~\cite[Lemma $5.2$]{R-W2017}, is sufficient for the conclusions of Theorem \ref{thm:toral univ spec corr} to hold.

\vspace{2mm}

It seems likely that by combining the ideas of the proof of Theorem \ref{thm:quasi-corr small} with some ideas in ~\cite{Gr-Wi}
it would be possible to shrink the balls faster: prove Theorem ~\ref{thm:toral univ gen}
for $s>\frac{\log{n}^{A}}{n^{1/2}}$ for generic $n$ for some $A\gg 0$ sufficiently big, i.e. save a power of $\log{n}$ rather than of $n$.
We believe that, in light of the results (and the techniques) presented in ~\cite{Gr-Wi} it is conceivable (if not likely)
that there is a phase transition: a number $A_{0}>0$ such that the conclusions of Theorem \ref{thm:toral univ gen} hold for generic $n$
uniformly for all $s>\frac{(\log{n})^{A}}{\sqrt{n}}$ with $A>A_{0}$, and fail for generic $n$ for $s=\frac{(\log{n})^{A}}{\sqrt{n}}$,
$A<A_{0}$. We leave all these questions to be addressed elsewhere. Finally, we note that all the methods presented in this manuscript work (resp. uniformly) unimpaired for generic smooth (shrinking) domains (as a replacement for discs), as long as the Fourier transform of the characteristic function is (resp. uniformly) decaying at infinity.

\vspace{2mm}

\qnew{

We believe that Theorem \ref{thm:quasi-corr small}
is of considerable independent interest. Other than the results contained in this paper, Theorem \ref{thm:quasi-corr small}
could be used in order to establish small-scale analogues of various other recently established and forthcoming results, though slightly restricted
in terms of energy levels (density $1$ sequence $S'\subseteq S$ rather than the whole of $S$).
As a concrete application, Planck-scale analogue of Bourgain's de-randomisation technique ~\cite{Bo,B-W} could be used for counting the number of nodal domains for (deterministic) ``flat" toral eigenfunctions ~\cite{B-B-W}. Further, it also implies \cite{B-B-W} that the value distribution of
the restriction of such flat eigenfunctions to a smooth curve with nowhere vanishing curvature is asymptotically Gaussian (convergence in the
sense of ~\cite{Bo,B-W}), which, in turn, as we hope, has other powerful applications.

}

\subsection{Outline of the proofs of the main results}

The proof of Theorem \ref{thm:toral univ spec corr} consists of two main steps. In the first step
(\S\ref{sec:proofs prelim}-\S\ref{sec:variance proofs}) we employ the Kac-Rice formula in order to express
the variance of $\Lc_{n;s}$ in terms of an integral on $B(s)\times B(s)$ of the $2$-point correlation function and study its asymptotic behaviour to yield \eqref{eq:Var asympt gen}; this step is analogous to ~\cite{K-K-W} posing new challenges for integrating the $2$-point correlation function on a restricted domain. In the second step (\S\ref{sec:full corr proof}) the full correlation \eqref{eq:restr nod len full corr full}
result is established.

\vspace{2mm}

A significant part of the first step is done in ~\cite{K-K-W}: it yields a point-wise asymptotic expansion
\eqref{eq:K2=1/4+L2+eps} for the $2$-point correlation function, provided that $|r_{n}(x)|$ is bounded away from $1$;
this eventually reduces the question of the asymptotic behaviour of $\var(\Lc_{n;s})$ to evaluating some moments of $r_{n}$
and its various derivatives, restricted to $B(s)\times B(s)$, provided that we avoid the ``singular set", i.e. $(x,y)$
such that $|r_{n}(x-y)|$ is arbitrarily close to $1$. As it was mentioned in \S\ref{sec:rest mom quasi corr}
evaluating the restricted moments
is a significant challenge of number theoretic nature; here we use the full strength of the assumptions of Theorem \ref{thm:toral univ spec corr}
on $n$.

To bound the contribution of the singular set we modify the approach
in ~\cite{O-R-W}, partitioning the singular set into small cubes of side length commensurable with $\frac{1}{\sqrt{n}}$.
One challenge here is that $B(s)\times B(s)$ could not be tiled by cubes; we resolve this by tiling a slightly excised
set, not beyond $B(2s)\times B(2s)$, using the latter in order to bound the total measure of the singular set.
We also simplify and improve our treatment of the singular set as compared to ~\cite{K-K-W},
following some ideas from ~\cite{R-W2014}: we use the Lipschitz continuity property satisfied by $r_{n}$ to bound
the total measure of the singular set, and also apply the partition into cubes on the singular set only
(as opposed to the full domain of integration).

\vspace{2mm}

After the variance of $\Lc_{n;s}$ has successfully been analysed \eqref{eq:Var asympt gen}, there are two ways to further proceed to establishing the full correlation result \eqref{eq:restr nod len full corr full}.
On one hand we may follow along the steps of ~\cite{M-P-R-W} to evaluate the Wiener chaos expansion of $\Lc_{n;s}$;
performing this we find that, under the assumptions of Theorem \ref{thm:toral univ spec corr}, the main terms of
its projection $\Lc_{n;s}[4]$ onto the $4$th Wiener chaos, dominating the fluctuations of $\Lc_{n;s}$,
recover, up to a scaling factor, the projection $\Lc_{n}[4]$ of the total nodal length of $f_{n}$ to the
$4$th Wiener chaos. This, in particular, implies the full correlation result \eqref{eq:restr nod len full corr full}.

On the other hand, now that much computational work has already been done, we might reuse
the precise information on the $2$-point correlation function ~\cite{K-K-W} to simplify
the proofs drastically by directly evaluating the correlation between $\Lc_{n;s}$ and $\Lc_{n}$ without decomposing them
into their respective Wiener chaos components. Equivalently, we evaluate the covariance $\cov(\Lc_{n;s},\Lc_{n})$
by employing the (suitably adapted) Kac-Rice formula once again; using the group structure of the torus, this approach
yields the intriguing identity
\begin{equation*}
\cov(\Lc_{n;s},\Lc_{n}) = (\pi s^{2}) \cdot \var(\Lc_{n}),
\end{equation*}
which, together with \eqref{eq:Var asympt gen} and \eqref{eq:var(tot nod length)}, recovers \eqref{eq:restr nod len full corr full}.

\section{Proof of Theorem \ref{thm:toral univ spec corr}}

\subsection{Preliminaries}
\label{sec:proofs prelim}

Recall that the covariance function $r=r_{n}$ of $f_{n}$ is given by \eqref{eq:rn covar def}. We further define the gradient
\begin{equation}
\label{eq:D grad def, sum lat}
D=D_{n;1\times 2}(x) = \nabla r_{n}(x) = \frac{2\pi i}{\Nc_{n}} \sum\limits_{\|\lambda\|^{2}=n} e(\langle \lambda,x\rangle) \cdot \lambda,
\end{equation}
and the Hessian
\begin{equation*}
H=H_{n;2\times 2}(x) = \left( \frac{\partial^{2}r_{n}}{\partial x_{i}\partial x_{j}} \right) = -\frac{4\pi^{2}}{\Nc_{n}}
 \sum\limits_{\|\lambda\|^{2}=n} e(\langle \lambda,x\rangle) (\lambda^{t}\lambda),
\end{equation*}
and the $2\times 2$ blocks (all depending on $n$, and evaluated at $x\in \Tb^{2}$)
\begin{equation}
\label{eq:XY block def}
X = - \frac{2}{E_{n}(1-r_{n}^{2})} D^{t}D, \; Y = -\frac{2}{E_{n}}\left(H+\frac{r_{n}}{1-r_{n}^{2}}D^{t}D\right).
\end{equation}
Finally, let $\Omega$ be the matrix
\begin{equation}
\label{eq:Omega covar V1,V2}
\Omega=\Omega_{n;4\times 4}(x) = I + \left(\begin{matrix}
X &Y \\Y &X \end{matrix}\right);
\end{equation}
it is ~\cite[Equalities (24), (25)]{K-K-W} the normalised covariance matrix of $(\nabla f_{n}(0),\nabla f_{n}(x))$ conditioned on $f_{n}(0)=f_{n}(x)=0$.

\vspace{2mm}

\new{In the following lemma we evaluate the variance of the restricted length $\Lc_{n;s}$; it is analogous to
\cite[Proposition $3.1$]{K-K-W} and \cite[Proposition $5.2$]{R-W2008} which give the variance of the total length
$\Lc_{n;s}$ except that, accordingly, the domain of integration in \eqref{eq:var(nod length)=int(K2)}
is restricted to $B(s)\times B(s)$, rather than the full $\Tb\times \Tb$ (reducing to $\Tb$ by stationarity).
The proof of these works is sufficiently robust to cover our case unimpaired, and thereupon conveniently omitted in this manuscript.}

\begin{lemma}
\label{lem:nodal var 2pnt corr norm}
For every $s>0$ we have
\begin{equation}
\label{eq:var(nod length)=int(K2)}
\var(\Lc_{n;s}) = \frac{E_{n}}{2}\int\limits_{B(s)\times B(s)} \left(K_{2}(x-y)-\frac{1}{4}\right)dxdy,
\end{equation}
where the (normalised) $2$-point correlation function is
\begin{equation}
\label{eq:K2 2pnt corr norm def}
K_{2}(x)=K_{2;n}(x) = \frac{1}{2\pi \sqrt{1-r_{n}(x)^{2}}} \cdot \E[\|V_{1}\|\cdot \|V_{2}\|],
\end{equation}
where $(V_{1},V_{2})\in \R^{2}\times \R^{2}$ is a centred $4$-variate Gaussian vector, whose covariance is given by
\eqref{eq:Omega covar V1,V2}, with $X$ and $Y$ given by \eqref{eq:XY block def}.
\end{lemma}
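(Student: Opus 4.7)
The plan is to apply the Kac-Rice formula directly to the restricted length. One expresses $\Lc_{n;s}$ as the $L^{2}$-limit
$$\Lc_{n;s}=\lim_{\epsilon\to 0}\frac{1}{2\epsilon}\int_{B(s)}\mathbf{1}_{\{|f_{n}(x)|<\epsilon\}}\,\|\nabla f_{n}(x)\|\,dx,$$
a standard consequence of the co-area formula applied to the a.s.\ smooth $f_{n}$ \cite{R-W2008}. Squaring and taking expectations, Fubini yields
$$\E[\Lc_{n;s}^{2}]=\int_{B(s)\times B(s)}\phi_{x,y}(0,0)\cdot\E\bigl[\|\nabla f_{n}(x)\|\,\|\nabla f_{n}(y)\|\,\big|\,f_{n}(x)=f_{n}(y)=0\bigr]\,dx\,dy,$$
with $\phi_{x,y}$ the joint density of $(f_{n}(x),f_{n}(y))$. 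The task then splits in two: identifying this integrand with $\frac{E_{n}}{2}K_{2}(x-y)$, and recovering $\E[\Lc_{n;s}]^{2}$ as the contribution of the constant $\frac{1}{4}$ being subtracted.

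For the first identification, stationarity reduces all dependence to $x-y$. Since $(f_{n}(x),f_{n}(y))$ is centred Gaussian with covariance $\bigl(\begin{smallmatrix}1 & r_{n} \\ r_{n} & 1\end{smallmatrix}\bigr)$, where $r_{n}=r_{n}(x-y)$, the joint density at the origin equals $\frac{1}{2\pi\sqrt{1-r_{n}^{2}}}$. I would then compute the conditional covariance of $(\nabla f_{n}(x),\nabla f_{n}(y))$ given $f_{n}(x)=f_{n}(y)=0$ via the Schur-complement formula applied to the full $6\times 6$ covariance matrix of $(f_{n}(x),f_{n}(y),\nabla f_{n}(x),\nabla f_{n}(y))$, whose blocks are read off directly from \eqref{eq:rn covar def} and \eqref{eq:D grad def, sum lat}. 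The symmetry $\sum_{\lambda\in\Ec_{n}}\lambda_{i}\lambda_{j}=\frac{\Nc_{n}n}{2}\delta_{ij}$, valid because $\Ec_{n}$ is invariant under sign changes and axis swap, gives $\mathrm{Cov}(\nabla f_{n}(x))=\frac{E_{n}}{2}\cdot I$; this motivates the rescaling $V_{k}=\sqrt{2/E_{n}}\,\nabla f_{n}(\cdot)$, after which the conditional covariance becomes precisely $\Omega$ as in \eqref{eq:Omega covar V1,V2}, with the $X$ and $Y$ blocks of \eqref{eq:XY block def} emerging naturally from the Schur-complement cancellation. Factoring out the overall scale $\frac{E_{n}}{2}$ yields $\phi_{x,y}(0,0)\cdot\E[\cdots]=\frac{E_{n}}{2}\cdot K_{2}(x-y)$.

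Finally, a one-point Kac-Rice argument together with stationarity gives
$$\E[\Lc_{n;s}]=(\pi s^{2})\cdot\frac{1}{\sqrt{2\pi}}\cdot\E[\|\nabla f_{n}(0)\|]=(\pi s^{2})\cdot\frac{\sqrt{E_{n}}}{2\sqrt{2}},$$
using the Rayleigh mean of the isotropic two-dimensional Gaussian $\nabla f_{n}(0)$ of covariance $\frac{E_{n}}{2}I$; squaring, $\E[\Lc_{n;s}]^{2}=(\pi s^{2})^{2}\cdot\frac{E_{n}}{8}=\frac{E_{n}}{2}\int_{B(s)\times B(s)}\frac{1}{4}\,dx\,dy$, and subtracting from the second moment produces \eqref{eq:var(nod length)=int(K2)}. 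The main technical obstacle is the integrability of $K_{2}(x-y)-\frac{1}{4}$ near the diagonal $x=y$: although $\frac{1}{\sqrt{1-r_{n}^{2}}}$ blows up like $|x-y|^{-1}$, the conditional product $\E[\|V_{1}\|\,\|V_{2}\|]$ vanishes linearly in $|x-y|$ as the $X,Y$ blocks degenerate and the conditioned gradients become asymptotically parallel, rendering the full integrand locally integrable. This regularisation is classical and is handled identically to the unrestricted case in \cite{K-K-W, R-W2008}; restricting the domain of integration to $B(s)\times B(s)$ introduces no new difficulty because $B(s)$ is open and $f_{n}^{-1}(0)$ a.s.\ misses $\partial B(s)$.
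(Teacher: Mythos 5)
Your plan is exactly the proof outlined in the works the paper cites here (namely \cite[Proposition 5.2]{R-W2008} and \cite[Proposition 3.1]{K-K-W}); the paper itself deliberately omits the proof, remarking that those arguments carry over to the restricted domain unchanged. Your co-area approximation, the Fubini step, the Schur-complement identification of $\Omega$ (the paper even points to \cite[Equalities (24),(25)]{K-K-W} for this), and the subtraction of the squared mean all match that route.

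There is, however, one factual error in the final paragraph, concerning why $K_{2}(x-y)-\tfrac14$ is integrable across the diagonal. You claim that $\E[\|V_{1}\|\,\|V_{2}\|]$ ``vanishes linearly in $|x-y|$''. It does not. Writing $u=x-y\to 0$, one has $H(0)=-\tfrac{E_{n}}{2}I$, hence $D(u)\sim -\tfrac{E_{n}}{2}u$, $D^{t}D\sim \tfrac{E_{n}^{2}}{4}uu^{t}$, and $1-r_{n}(u)^{2}\sim \tfrac{E_{n}}{2}|u|^{2}$. Substituting into \eqref{eq:XY block def} gives $X\to -P_{u}$ and $Y\to I-P_{u}=P_{u}^{\perp}$, where $P_{u}$ is the orthogonal projection onto the line $\R u$. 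Thus $\Omega\to\left(\begin{smallmatrix}P_{u}^{\perp} & P_{u}^{\perp}\\ P_{u}^{\perp} & P_{u}^{\perp}\end{smallmatrix}\right)$, so in the limit $V_{1}=V_{2}$ is a standard one-dimensional Gaussian supported on $u^{\perp}$, and $\E[\|V_{1}\|\,\|V_{2}\|]\to 1$, not $0$. The integrability of $K_{2}-\tfrac14$ therefore does \emph{not} come from a cancellation in the numerator. It comes from dimension: by Lemma \ref{lem:K2=O(1)} one only has $K_{2}(u)\ll (1-r_{n}(u)^{2})^{-1/2}\asymp |u|^{-1}$, and $|u|^{-1}$ is locally integrable on $\R^{2}$. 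After the change of variables $u=x-y$ the $B(s)\times B(s)$ integral reduces to a bounded density against an integrable $|u|^{-1}$ singularity, which is what actually closes the argument. This is a secondary point and does not affect the validity of your overall strategy, but the stated justification should be corrected.
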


\subsection{Singular set}
\label{sec:singular set}

Recall that the nodal length variance (restricted to shrinking balls) is given by \eqref{eq:var(nod length)=int(K2)},
where the (normalised) $2$-point correlation function $K_{2}(x)$ is given by \eqref{eq:K2 2pnt corr norm def}, and
$(V_{1},V_{2})\in \R^{2}\times \R^{2}$ is centred Gaussian with covariance \eqref{eq:Omega covar V1,V2}, with $X$ and $Y$ given by
\eqref{eq:XY block def}. It is possible to expand $$\E[\|V_{1}\|\cdot \|V_{2}\|]$$ into a degree-$4$ Taylor polynomial as a
function of the (small) entries of
$X$ and $Y$ (these are the various derivatives of $r_{n}$), and, provided that the absolute value $|r_{n}(x)|$
is bounded away from $1$, we may write
\begin{equation*}
\frac{1}{\sqrt{1-r}} = 1+\frac{1}{2}r^{2}+\frac{3}{8}r^{4}+O(r^{6}).
\end{equation*}
These two combined yield a point-wise approximate \eqref{eq:K2=1/4+L2+eps} of $K_{2}(x)$,
provided that $|r_{n}(x)|$ is bounded away from $1$,
a condition that is satisfied for ``most" of $(x,y)\in B(s)\times B(s)$ (``nonsingular set" $(B(s)\times B(s))\setminus B_{sing}$,
see Lemma \ref{lem:sing set bound prop} below),
and we may integrate the Taylor polynomial of $x-y$ over the nonsingular set
to yield an approximation for the integral on the r.h.s. of \eqref{eq:var(nod length)=int(K2)}
while bounding the contribution of the singular set.

For the singular set $B_{sing}$ (where $|r_{n}(x)|$ is close to $1$) we only have an easy bound \eqref{eq:K2<<1/sqrt(1-r^2)},
and merely bounding its measure is insufficient for bounding its contribution
to the integral on the r.h.s. of \eqref{eq:var(nod length)=int(K2)}. We resolve this obstacle by observing that if $|r_{n}(x_{0})|$
is close to $1$, then it is so on the whole ($4d$) cube around $x_{0}$ of size length commensurable to $\frac{1}{\sqrt{n}}$, by
the Lipschitz property of $r_{n}$. This allows us to partition $B_{sing}$ into ``singular" cubes of side length commensurable with $\frac{1}{\sqrt{n}}$, possibly excising $B(s)\times B(s)$, though not beyond $B(2s)\times B(2s)$. We might then bound the number of
singular cubes using a simple Chebyshev's inequality bound \eqref{eq:meas(sing)<<R6 mom} via an appropriate $6$th moment
(it is $\Rc_{n}(6;2s)$ as $B_{sing}\subseteq B(2s)\times B(2s)$) while controlling
the contribution of a single singular cube to the integral on the r.h.s. of \eqref{eq:var(nod length)=int(K2)}
(as opposed to a point-wise bound).

The presented analysis is simplified and improved compared to ~\cite{R-W2008,K-K-W} in the following ways,
borrowing in particular some ideas from ~\cite{R-W2014}. First, only the singular set is partitioned into cubes as opposed to
the whole domain of integration (e.g. $B(s)\times B(s)$ in our case), since the point-wise estimate \eqref{eq:K2=1/4+L2+eps}
might be integrated on the nonsingular set to yield a precise estimate for its contribution to the integral on the r.h.s.
of \eqref{eq:var(nod length)=int(K2)}. The Lipschitz property of $r_{n}$ simplifies the partition argument of the singular set
into cubes, with no need to bound the individual cosines in \eqref{eq:rn covar def}.
Finally, working with (shrinking) subdomains $B(s)$ of the torus poses a problem while tiling the said domain ($B(s)\times B(s)$) into cubes;
since $s>n^{-1/2+\epsilon}$ we might still partition $B(s)\times B(s)$ into cubes of side length commensurable to $\frac{1}{\sqrt{n}}$ without
excising the domain of integration beyond $B(2s)\times B(2s)$. We start from the definition of the singular set.

\begin{definition}[Singular set]

Let $s>n^{-1/2+\epsilon}$ and choose
\begin{equation}
\label{eq:F cube side length}
F=F(n)=\frac{1}{c_{0}}\cdot \sqrt{n}
\end{equation}
a large integer, with $c_{0}>0$ a sufficiently small constant (that will be fixed throughout the rest of this manuscript).

\begin{enumerate}

\item A point $(x,y)\in B(s)\times B(s)$ is singular if $|r_{n}(x-y)| > \frac{7}{8}$ (say).

\item Let $$B(s)\times B(s) \subseteq \bigcup\limits_{i\in I}\Bc_{i} \subseteq B(2s)\times B(2s)$$
be a covering of $B(s)\times B(s)$ by ($4$d) cubes $\{ \Bc_{i}\}$ of side length $\frac{1}{F}$. We say that a cube $\Bc_{i}$ is singular
if it contains a singular point $x\in \Bc_{i}$.

\item Let $I'\subseteq I$ be the collection of all indices $i\in I$ such that $\Bc_{i}$ is singular. We define
the singular set
\begin{equation*}
B_{sing}(s)=B_{n;sing}(s) =\bigcup\limits_{i\in I'} \Bc_{i} \subseteq B(2s)\times B(2s).
\end{equation*}
to be the union of all singular cubes.

\end{enumerate}

\end{definition}

\begin{lemma}
\label{lem:sing set bound prop}

Let $F$ be as above \eqref{eq:F cube side length}, with $c_{0}$ sufficiently small.

\begin{enumerate}

\item
\label{it:sing Lipsh all points}
If $\Bc_{i} \subseteq B_{sing}$ is singular then for all $(x,y)\in \Bc_{i}$ we have $|r_{n}(x-y)| > \frac{1}{2}$.

\item The measure of the singular set is bounded by
\begin{equation}
\label{eq:meas(sing)<<R6 mom}
\meas(B_{sing}) \ll_{c_{0}} \Rc_{n}(6;2s)
\end{equation}
the $6$th moment \eqref{eq:Rc moments rn def} of $r_{n}$ on $B(2s)$.

\item The number $|I'|$ of singular cubes is bounded by
\begin{equation}
\label{eq:sing cubes << 6th mom}
|I'| \ll_{c_{0}} F^{4}\cdot \Rc_{n}(6;2s).
\end{equation}

\end{enumerate}

\end{lemma}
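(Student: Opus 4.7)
The proof splits naturally into the three parts, with part (1) being the crucial ingredient from which (2) and (3) follow by soft arguments.

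For part (1), the plan is to exploit the Lipschitz continuity of $r_{n}$ with an explicitly computable constant. Differentiating \eqref{eq:rn covar def} termwise and using $\|\lambda\|=\sqrt{n}$ for every $\lambda\in\Ec_{n}$, I would bound
\begin{equation*}
\|\nabla r_{n}(x)\| \le \frac{2\pi}{\Nc_{n}}\sum_{\lambda\in\Ec_{n}}\|\lambda\| = 2\pi\sqrt{n},
\end{equation*}
so $r_{n}$ is $(2\pi\sqrt{n})$-Lipschitz on $\Tb^{2}$. Given a singular cube $\Bc_{i}\subseteq \Tb^{2}\times\Tb^{2}$ of side $1/F=c_{0}/\sqrt{n}$, fix a singular point $(x_{0},y_{0})\in\Bc_{i}$ with $|r_{n}(x_{0}-y_{0})|>7/8$. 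Any other $(x,y)\in\Bc_{i}$ satisfies $\|x-x_{0}\|, \|y-y_{0}\|\le \sqrt{2}/F$, hence $\|(x-y)-(x_{0}-y_{0})\|\le 2\sqrt{2}\,c_{0}/\sqrt{n}$. The Lipschitz bound then gives
\begin{equation*}
|r_{n}(x-y)-r_{n}(x_{0}-y_{0})| \le 2\pi\sqrt{n}\cdot 2\sqrt{2}\,c_{0}/\sqrt{n} = 4\sqrt{2}\,\pi\, c_{0},
\end{equation*}
which is $<3/8$ provided $c_{0}$ is chosen sufficiently small (independently of $n$ and $s$). Consequently $|r_{n}(x-y)|>7/8-3/8=1/2$ on all of $\Bc_{i}$, as claimed.

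For part (2), I would apply a crude Chebyshev-type estimate. Since $B_{sing}\subseteq B(2s)\times B(2s)$ by construction, and since part (1) ensures $|r_{n}(x-y)|>1/2$, hence $r_{n}(x-y)^{6}>1/64$, on the whole of $B_{sing}$, we get
\begin{equation*}
\Rc_{n}(6;2s) = \int_{B(2s)\times B(2s)} r_{n}(x-y)^{6}\,dx\,dy \ge \int_{B_{sing}} r_{n}(x-y)^{6}\,dx\,dy \ge \frac{1}{64}\,\meas(B_{sing}),
\end{equation*}
which is \eqref{eq:meas(sing)<<R6 mom}. Part (3) is then immediate from part (2): each of the $|I'|$ singular cubes has $4$-dimensional Lebesgue measure $F^{-4}$, these cubes are (essentially) disjoint, and their union is $B_{sing}$, so $|I'|\cdot F^{-4}\le \meas(B_{sing})\ll \Rc_{n}(6;2s)$, giving \eqref{eq:sing cubes << 6th mom}.

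The only non-routine step is the choice of $c_{0}$ in part (1), which must be made once and for all, small enough so that the universal constant $4\sqrt{2}\,\pi\, c_{0}$ drops the threshold from $7/8$ to a value strictly above $1/2$; everything else is bookkeeping, and no arithmetic (quasi-)correlation input is needed at this stage. I would include the brief remark that $c_{0}$ small also guarantees that a covering $\{\Bc_{i}\}$ of $B(s)\times B(s)$ by cubes of side $c_{0}/\sqrt{n}$ can be chosen inside $B(2s)\times B(2s)$, using $s>n^{-1/2+\epsilon}\gg c_{0}/\sqrt{n}$ for $n$ large.
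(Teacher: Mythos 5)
Your proposal is correct and follows essentially the same route as the paper: Lipschitz continuity of $r_{n}$ (with constant $\ll \sqrt{n}$) to propagate the singularity threshold over a whole cube, then Chebyshev with the sixth moment on $B(2s)$, and finally dividing by the cube volume $F^{-4}$. You have simply made the Lipschitz constant and the Chebyshev step explicit, and correctly noted that $s>n^{-1/2+\epsilon}\gg c_0/\sqrt{n}$ keeps the covering inside $B(2s)\times B(2s)$.
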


\begin{proof}

For \eqref{it:sing Lipsh all points} we note that $r_{n}$ is Lipschitz in all the variables with constant $\ll \sqrt{n}$.
Hence if $$|r_{n}(x_{0}-y_{0})|> \frac{3}{4}$$ for some $(x_{0},y_{0}) \in \Bc_{i}$, then for all $(x,y) \in \Bc_{i}$ we
have for some $C,C'$ absolute constants:
$$|r_{n}(x-y)|> \frac{3}{4} - C\sqrt{n} \|(x,y)-(x_{0},y_{0})\| \ge  \frac{3}{4} - C'\sqrt{n}\cdot \frac{c_{0}}{\sqrt{n}} > \frac{1}{2},$$
provided that $c_{0}$ was chosen sufficiently small.
The estimate \eqref{eq:meas(sing)<<R6 mom} now follows from the above and Chebyshev's inequality; note that the $6$th moment
$\Rc_{n}(6;2s)$ is over $B(2s)\times B(2s)$ rather than $B(s)\times B(s)$, in light of the fact that $B_{sing}$ might not be contained
in $B(s)\times B(s)$. Finally, \eqref{eq:sing cubes << 6th mom} follows from \eqref{eq:meas(sing)<<R6 mom} bearing in mind that
the measure of each singular cube in $B_{sing}$ is of order $\frac{1}{F^{4}}$.

\end{proof}

\subsection{Moments of $r$ and its derivatives along the shrinking balls $B(s)$}

Our final ingredient for the proof of the variance part of Theorem \ref{thm:toral univ spec corr} is evaluating certain moments of
$r_{n}$, $X_{n}$ and $Y_{n}$ in $B(s)$ in Lemma \ref{lem:r der moments}. The proof of Lemma \ref{lem:r der moments}
will be given in Appendix \ref{apx:r der moments}.

\begin{lemma}[Cf. ~\cite{K-K-W}, Lemma $4.6$, Lemma $5.4$]
\label{lem:r der moments}

Let $\epsilon>0$,
\begin{equation}
\label{eq:delt<eps}
\delta< \epsilon,
\end{equation}
and $S'\subseteq S$ a sequence of energy levels such that
for all $n\in S'$ the hypotheses $\Ac(n;2,n^{1/2-\delta/2})$ and $\Ac(n;6,n^{1/2-\delta})$ in Definition \ref{def:separatedness Ac} are satisfied. Then,
for all $A>0$ and uniformly for all $s>n^{-1/2+\epsilon}$, along
$n\in S'$ the following moments of $r_{n}$, $X_{n}$, $Y_{n}$ observe the following asymptotics, with constants involved in the `$O$'-notation
depending only on $A,\epsilon,\delta$:

\begin{enumerate}[$1.$]

\item
\begin{equation}
\label{eq:rn 2nd mom Bs}
\int\limits_{B(s)\times B(s)} r_{n}(x-y)^{2}dxdy = (\pi s^{2})^2 \cdot \frac{1}{\Nc_{n}}\left( 1 + O\left(\frac{1}{\Nc_{n}^{A}} \right) \right),
\end{equation}

\item
\begin{equation}
\begin{split}
\label{eq:rn 4th mom Bs}
\int\limits_{B(s)\times B(s)} r_{n}(x-y)^{4}dxdy &= (\pi s^2)^{2} \cdot \frac{|\Dc_{n}(4)|}{\Nc_{n}^{4}}\left(1 +
O\left(\frac{1}{\Nc_{n}^{A}} \right)\right) \\&= (\pi s^2)^{2} \cdot \frac{3}{\Nc_{n}^{2}}\left(1 +
O\left(\frac{1}{\Nc_{n}} \right)\right).
\end{split}
\end{equation}

\item
\begin{equation}
\label{eq:rn 6th mom Bs}
\Rc_{n}(6;s)=\int\limits_{B(s)\times B(s)} r_{n}(x-y)^{6}dxdy = O\left(s^4\cdot \frac{1}{\Nc_{n}^{5/2}}\right),
\end{equation}

\item
\begin{equation}
\label{eq:mom tr(X)}
\int\limits_{B(s)\times B(s)}\tr{X_{n}(x-y)}dxdy =(\pi s^{2})^{2}\left(-\frac{2}{\Nc_{n}}-\frac{2}{\Nc_{n}^{2}} +
O\left(  \frac{1}{\Nc_{n}^{5/2}} \right)   \right),
\end{equation}

\item
\begin{equation}
\label{eq:mom tr(Y^2)}
\int\limits_{B(s)\times B(s)}\tr(Y_{n}(x-y)^{2})dxdy =(\pi s^{2})^{2}\left(\frac{4}{\Nc_{n}}-\frac{4}{\Nc_{n}^{2}} +
O\left(  \frac{1}{\Nc_{n}^{5/2}} \right)   \right),
\end{equation}

\item
\begin{equation}
\label{eq:mom tr(XY^2)}
\int\limits_{B(s)\times B(s)}\tr(X_{n}(x-y)Y_{n}(x-y)^{2})dxdy =(\pi s^{2})^{2}\left(-\frac{4}{\Nc_{n}^{2}} +
O\left(  \frac{1}{\Nc_{n}^{5/2}} \right)   \right),
\end{equation}

\item
\begin{equation}
\label{eq:mom tr(X^2)}
\int\limits_{B(s)\times B(s)}\tr(X_{n}(x-y)^{2})dxdy =(\pi s^{2})^{2}\left(\frac{8}{\Nc_{n}^{2}} +
O\left(  \frac{1}{\Nc_{n}^{5/2}} \right)   \right),
\end{equation}

\item
\begin{equation}
\label{eq:mom tr(Y^4)}
\int\limits_{B(s)\times B(s)}\tr(Y_{n}(x-y)^{4})dxdy =(\pi s^{2})^{2}\left(\frac{2(11+\widehat{\new{\tau_{n}}}(4)^{2})}{\Nc_{n}^{2}} +
O\left(  \frac{1}{\Nc_{n}^{5/2}} \right)   \right),
\end{equation}

\item
\begin{equation}
\label{eq:mom tr(Y^2)^2}
\int\limits_{B(s)\times B(s)}\tr(Y_{n}(x-y)^{2})^{2}dxdy =(\pi s^{2})^{2}\left(  \frac{4(7+\widehat{\new{\tau_{n}}}(4)^{2})}{\Nc_{n}^{2}}
+ O\left(  \frac{1}{\Nc_{n}^{5/2}} \right) \right),
\end{equation}

\item
\begin{equation}
\label{eq:mom tr(X)tr(Y^2)}
\int\limits_{B(s)\times B(s)}\tr(X_{n}(x-y))\tr(Y_{n}(x-y)^{2})dxdy =(\pi s^{2})^{2}\left(  -\frac{8}{\Nc_{n}^{2}}
+ O\left(  \frac{1}{\Nc_{n}^{5/2}} \right) \right),
\end{equation}

\item
\begin{equation}
\label{eq:mom r^2 tr(X)}
\int\limits_{B(s)\times B(s)}r_{n}(x-y)^{2}\tr{X_{n}(x-y)}dxdy =(\pi s^{2})^{2}\left(  -\frac{2}{\Nc_{n}^{2}}
+ O\left(  \frac{1}{\Nc_{n}^{5/2}} \right) \right),
\end{equation}

\item
\begin{equation}
\label{eq:mom r^2 tr(Y^2)}
\int\limits_{B(s)\times B(s)}r_{n}(x-y)^{2}\tr(Y_{n}(x-y)^{2})dxdy =(\pi s^{2})^{2}\left(  \frac{8}{\Nc_{n}^{2}}
+ O\left(  \frac{1}{\Nc_{n}^{5/2}} \right) \right),
\end{equation}

\item
\begin{equation}
\label{eq:mom tr(X^3)}
\int\limits_{B(s)\times B(s)}\tr{X_{n}(x-y)^3}dxdy =(\pi s^{2})^{2} \cdot O\left(  \frac{1}{\Nc_{n}^{5/2}} \right),
\end{equation}

\item
\begin{equation}
\label{eq:mom tr(Y^6)}
\int\limits_{B(s)\times B(s)}\tr{Y_{n}(x-y)^6}dxdy =(\pi s^{2})^{2} \cdot O\left(  \frac{1}{\Nc_{n}^{5/2}} \right).
\end{equation}

\end{enumerate}

\end{lemma}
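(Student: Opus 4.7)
The plan is to reduce every integral in the lemma to a lattice-point sum via a Plancherel-type identity, exactly as illustrated in the model computation \eqref{eq:rn 2nd moment Bessel} for the second moment. For any trigonometric polynomial $g$ on $\Tb^{2}$ with Fourier coefficients $c(\xi)$, unfolding and exchanging sum with integral yields
\begin{equation*}
\int\limits_{B(s)\times B(s)} g(x-y)\,dx\,dy \;=\; \sum_{\xi\in\Z^{2}} c(\xi)\,\bigl|\widehat{\chi_{B(s)}}(\xi)\bigr|^{2},
\end{equation*}
reducing the computation to summing $|\widehat{\chi_{B(s)}}(\xi)|^{2}$ against $\xi=\lambda_{1}+\cdots+\lambda_{l}$ over tuples $(\lambda_{j})\in\Ec_{n}^{l}$. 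I will then exploit the diagonal value $\widehat{\chi_{B(s)}}(0)=\pi s^{2}$ (supplying the main term), the emptiness of the quasi-correlation regime $0<\|\xi\|\le n^{1/2-\delta}$ by hypothesis, and the Bessel decay $|\widehat{\chi_{B(s)}}(\xi)|^{2}\ll s/\|\xi\|^{3}$ on the remaining tail.

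For the pure moments $\Rc_{n}(l;s)$ in items (1)--(3), direct expansion of $r_{n}^{l}$ via \eqref{eq:rn covar def} recasts the integral in the above lattice-sum form, and the trichotomy immediately yields the claimed asymptotics: the exact-correlation set $\Sc_{n}(l)$ contributes $(\pi s^{2})^{2}\cdot|\Sc_{n}(l)|/\Nc_{n}^{l}$; the equalities $\Sc_{n}(2)=\Dc_{n}(2)$ and $\Sc_{n}(4)=\Dc_{n}(4)$ (the latter by Zygmund's trick, cf. \eqref{eq:Dc4=Sc4}) together with the Bombieri-Bourgain bound \eqref{eq:S6<<N^7/2} then fix the numerators. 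The length-$4$ quasi-correlation hypothesis $\Ac(n;4,\delta)$ needed implicitly in item (2) is obtained from the assumed $\Ac(n;2,n^{1/2-\delta/2})$ via Lemma \ref{lem:RW qcorr2=>qcorr4}.

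For the matrix-trace moments (4)--(14), I substitute the Fourier expansions of $r_{n}$, $D=\nabla r_{n}$, and $H$ from \eqref{eq:D grad def, sum lat} into the definitions \eqref{eq:XY block def} of $X,Y$, and expand $1/(1-r_{n}^{2})$ as a geometric series truncated at order $r_{n}^{6}$. Each trace functional thereby becomes a polynomial in $r_{n}$ and its first and second derivatives, which unfolds to a finite collection of multi-fold lattice sums over $\Ec_{n}^{l}$ with polynomial $\lambda_{j}$-weights; the same diagonal/quasi-correlation/tail trichotomy then applies to each term. The diagonal pattern sums over $\Dc_{n}(l)$, once executed, reproduce (up to the factor $(\pi s^{2})^{2}$ replacing the torus volume $1$) precisely the full-torus identities of \cite[Lemma 4.6 and Lemma 5.4]{K-K-W}, including the $\widehat{\tau_{n}}(4)^{2}$ contributions in items (8) and (9), which arise from the non-trivial patterns in $\Dc_{n}(4)$. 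The geometric-series truncation is justified pointwise on the complement of the singular set $B_{sing}$, whose contribution is disposed of via the pointwise boundedness of $X,Y$ as conditional covariance blocks together with $\meas(B_{sing})\ll s^{4}/\Nc_{n}^{5/2}$ from \eqref{eq:meas(sing)<<R6 mom} and item (3).

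The main obstacle will be verifying uniformly across all thirteen integrals and all $s>n^{-1/2+\epsilon}$ that the off-diagonal tail sits below $O(\Nc_{n}^{-5/2})$. Since the Bessel decay is merely cubic rather than exponential, this requires carefully trading the trivial bound $\|\lambda_{j}\|\le\sqrt{n}$ against the decay $1/\|\xi\|^{3}$ on the tail $\|\xi\|>n^{1/2-\delta}$, using the margin $\delta<\epsilon$ from \eqref{eq:delt<eps}, the assumption $s>n^{-1/2+\epsilon}$, and the essentially-trivial bound $\Nc_{n}=O(n^{\epsilon'})$ from \eqref{eq:Nc dim << n^eps} for arbitrarily small $\epsilon'>0$. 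The resulting bookkeeping is substantial but not conceptually new compared with the full-torus analysis of \cite{K-K-W}; the genuinely new input is that spectral orthogonality is replaced by approximate orthogonality, quantified precisely by the quasi-correlation hypotheses of Definition \ref{def:separatedness Ac}.
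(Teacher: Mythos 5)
Your proposal follows essentially the same strategy as the paper's proof in Appendix~\ref{apx:r der moments}: expand each integrand as a lattice-exponential sum, reduce to weighted sums of $|\widehat{\chi_{B(s)}}(\xi)|^{2}$ over tuples in $\Ec_{n}^{l}$, and split into diagonal (giving the main term $(\pi s^{2})^{2}$ times the full-torus coefficients), the quasi-correlation regime $0<\|\xi\|\le n^{1/2-\delta}$ (empty by hypothesis, with $\Ac(n;4,\cdot)$ supplied by Lemma~\ref{lem:RW qcorr2=>qcorr4}), and a Bessel-decay tail, while disposing of the singular set for the trace items via $\meas(B_{sing})\ll\Rc_{n}(6;2s)$ and the uniform boundedness of $X,Y$. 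The paper packages the off-diagonal estimate as the standalone Lemma~\ref{lem:sum pairs osc int}, but the content is identical to your Plancherel/Bessel trichotomy.
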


\subsection{Proof of the variance part \eqref{eq:Var asympt gen} of Theorem \ref{thm:toral univ spec corr}}
\label{sec:variance proofs}

\begin{lemma}[~\cite{K-K-W}, Lemma $3.2$]
\label{lem:K2=O(1)}
The matrices $X_{n}$ and $Y_{n}$ are uniformly bounded (entry-wise), i.e.
\begin{equation}
\label{eq:X,Y=O(1)}
X_{n}(x),Y_{n}(x) = O(1),
\end{equation}
for all $x\in \Tb^{2}$, where the constant involved in the `$O$'-notation is absolute. In particular
\begin{equation}
\label{eq:K2<<1/sqrt(1-r^2)}
K_{2}(x) \ll \frac{1}{\sqrt{1-r_{n}(x)^{2}}}.
\end{equation}
\end{lemma}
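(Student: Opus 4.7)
The plan is to extract the bound on the entries of $X_n$ and $Y_n$ directly from the structural fact, already recorded in the excerpt via the citation to \cite{K-K-W}, that $\Omega_n$ in \eqref{eq:Omega covar V1,V2} is the \emph{normalised conditional covariance matrix} of $(V_1,V_2):=\sqrt{2/E_n}(\nabla f_n(0),\nabla f_n(x))$ given $f_n(0)=f_n(x)=0$. In particular $\Omega_n(x)$ is positive semi-definite, and for a PSD matrix the Cauchy--Schwarz inequality $|\Omega_{ij}|\le\sqrt{\Omega_{ii}\Omega_{jj}}$ reduces everything to a bound on the diagonal entries.

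First I would verify that the diagonal entries of $\Omega_n$ lie in $[0,1]$. By direct computation using \eqref{eq:D grad def, sum lat} and symmetry of $\Ec_n$, the unconditional variance of each component of $\nabla f_n(x)$ equals
\[
\frac{4\pi^{2}}{\Nc_n}\sum_{\lambda\in\Ec_n}\lambda_j^{2}=\frac{4\pi^{2}\cdot n\cdot\Nc_n/2}{\Nc_n}=\frac{E_n}{2},
\]
so each component of $V_1$ and $V_2$ has unit unconditional variance. Since conditioning on $(f_n(0),f_n(x))$ can only decrease variances, every diagonal entry of $\Omega_n$ lies in $[0,1]$. Combined with PSD-ness this gives $|\Omega_{ij}(x)|\le 1$ uniformly in $x$ and $i,j$.

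Next I would read off the block structure of \eqref{eq:Omega covar V1,V2}: the diagonal entries of the top-left block of $\Omega_n$ equal $1+X_{ii}(x)$, its off-diagonal entry is $X_{12}(x)$, and the top-right block is exactly $Y(x)$. The previous paragraph therefore yields $|X_{ii}(x)|\le 1$, $|X_{12}(x)|\le 1$, and $|Y_{ij}(x)|\le 1$ for all $i,j$ and all $x\in\Tb^{2}$ with $|r_n(x)|<1$; at the (possibly exceptional) points where $|r_n(x)|=1$ one has $D(x)=\nabla r_n(x)=0$, so the ostensibly singular terms $\frac{D^{t}D}{1-r_n^{2}}$ in \eqref{eq:XY block def} extend continuously by zero and the bound persists. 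This proves \eqref{eq:X,Y=O(1)}.

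Finally, for \eqref{eq:K2<<1/sqrt(1-r^2)}, the Cauchy--Schwarz inequality applied to the definition \eqref{eq:K2 2pnt corr norm def} of $K_2$ gives
\[
\E[\|V_1\|\cdot\|V_2\|]\le \sqrt{\E\|V_1\|^{2}\cdot\E\|V_2\|^{2}}.
\]
Since $\E\|V_i\|^{2}$ is the sum of two diagonal entries of $\Omega_n$, the bound on the diagonal of $\Omega_n$ established above yields $\E\|V_i\|^{2}\le 2$, hence $\E[\|V_1\|\cdot\|V_2\|]\le 2$, and the desired estimate $K_2(x)\ll \frac{1}{\sqrt{1-r_n(x)^{2}}}$ follows immediately. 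The whole argument is essentially one short PSD computation; the only mild care needed is the continuity argument at the singular locus $\{r_n^{2}=1\}$, which is the only potential pitfall but is handled by the vanishing of $D$ there.
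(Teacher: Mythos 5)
The paper does not reproduce a proof of this lemma; it simply cites \cite[Lemma 3.2]{K-K-W}. Your argument — that $\Omega$ is a normalised conditional covariance matrix, hence positive semi-definite with diagonal entries in $[0,1]$ because each component of $\sqrt{2/E_n}\,\nabla f_n$ has unit unconditional variance and Gaussian conditioning only decreases variance, so that Cauchy--Schwarz on the PSD matrix gives $|\Omega_{ij}|\le 1$ and hence the entry-wise bounds on $X_n$, $Y_n$, and the $K_2$ estimate via $\E[\|V_1\|\|V_2\|]\le 2$ — is correct and is, as far as I can tell, essentially the argument in \cite{K-K-W}.

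One parenthetical claim is wrong, though harmless. You assert that at points where $|r_n(x)|=1$ the term $\frac{D^tD}{1-r_n^2}$ ``extends continuously by zero.'' It does not. Take $x\to 0$: since $H(0)=-\frac{E_n}{2}I$, one has $D(x)\approx -\frac{E_n}{2}x$ and $1-r_n(x)^2\approx\frac{E_n}{2}\|x\|^2$, so
\[
\frac{D(x)^tD(x)}{1-r_n(x)^2}\ \approx\ \frac{E_n}{2}\cdot\frac{x\,x^t}{\|x\|^2},
\]
which stays bounded but has a direction-dependent limit, so no continuous extension exists. This does not create a gap: your PSD argument already gives the uniform bound on the open set $\{r_n^2<1\}$, which is exactly where $X_n$, $Y_n$ (and $K_2$) are defined and where the estimate is used; nothing further is needed at the degenerate locus. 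I'd simply delete the continuity claim rather than try to repair it.
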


\begin{lemma}[Cf. Lemma \ref{lem:K2=O(1)} and \new{\cite[\S6.4]{O-R-W}}]
\label{lem:contr 2pnt corr sing cube}
Let $K_{2}$ be the $2$-point correlation function \eqref{eq:K2 2pnt corr norm def}, and $\Bc_{i}$, $i\in I'$ a singular cube. Then
\begin{equation*}
\int\limits_{\Bc_{i}}K_{2}(x-y)dxdy \ll_{c_{0}} \frac{1}{F^{3}\sqrt{n}}.
\end{equation*}
\end{lemma}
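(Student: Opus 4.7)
The cleanest route is to \emph{invert} the direction of the Kac-Rice formula: rather than bounding $K_{2}$ pointwise, express
\begin{equation*}
\int_{\Bc_{i}} K_{2}(x-y)\, dx\, dy \;=\; \frac{2}{E_{n}}\, \E\bigl[\Lc_{Q_{1}}(T_{n})\cdot \Lc_{Q_{2}}(T_{n})\bigr],
\end{equation*}
where $\Bc_{i}=Q_{1}\times Q_{2}$ with $Q_{1},Q_{2}\subseteq \Tb^{2}$ two-dimensional squares of side length $1/F$, and $\Lc_{Q_{j}}(T_{n})=\len\bigl(T_{n}^{-1}(0)\cap Q_{j}\bigr)$. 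This identity is obtained by precisely the same Kac-Rice manipulation used to derive Lemma \ref{lem:nodal var 2pnt corr norm}, applied to the cross second moment $\E[\Lc_{Q_{1}}\Lc_{Q_{2}}]$ instead of the variance, and it replaces the potentially singular pointwise behaviour of $K_{2}$ by a genuine expectation of nonnegative random variables.

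The next input is a \emph{deterministic} upper bound on the nodal length inside a single small cube. For every realization of the Gaussian coefficients $\{a_{\lambda}\}_{\lambda\in\Ec_{n}}$, the function $T_{n}$ is a Laplace eigenfunction of the analytic torus with eigenvalue $E_{n}=4\pi^{2}n$; its restriction to any straight line segment of length $L$ is a bandlimited exponential sum of exponential type $\le 2\pi\sqrt{n}$, whose real zeros on the segment are controlled by a Jensen-type count of order $O(\sqrt{n}\,L+1)$. Integrating by a Crofton-style formula over the lines that meet $Q_{j}$ yields
\begin{equation*}
\Lc_{Q_{j}}(T_{n}) \;\ll\; \frac{\sqrt{n}}{F^{2}}+\frac{1}{F} \;\ll_{c_{0}}\; \frac{1}{\sqrt{n}},
\end{equation*}
uniformly in $\{a_{\lambda}\}$.

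Substituting the deterministic bound into the Kac-Rice identity,
\begin{equation*}
\int_{\Bc_{i}} K_{2}(x-y)\, dx\, dy \;\ll_{c_{0}}\; \frac{2}{E_{n}}\cdot \frac{1}{n} \;\asymp\; \frac{1}{n^{2}} \;\asymp_{c_{0}}\; \frac{1}{F^{3}\sqrt{n}},
\end{equation*}
since $F=\sqrt{n}/c_{0}$, so $F^{3}\sqrt{n}=n^{2}/c_{0}^{3}$, exactly as required.

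\textbf{Main obstacle.} The one delicate point is the measure-zero locus $\{r_{n}(x-y)^{2}=1\}$ in $\Bc_{i}$ on which the joint Gaussian density of $\bigl(T_{n}(x),T_{n}(y)\bigr)$ becomes singular, so that the Kac-Rice identity above is a priori valid only on the non-degenerate subset. Because the integrand obeys the bound $K_{2}(x-y)\ll 1/\sqrt{1-r_{n}(x-y)^{2}}$ from Lemma \ref{lem:K2=O(1)} and this dominator is integrable over $\Bc_{i}$, this issue is dissolved by the standard device of truncating to $\{|r_{n}|\leq 1-\eta\}$, applying Kac-Rice on the non-degenerate region, and letting $\eta\to 0^{+}$ by dominated convergence.
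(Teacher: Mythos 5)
The identity you start from is correct: writing $\Bc_{i}=Q_{1}\times Q_{2}$ and comparing with Lemma \ref{lem:nodal var 2pnt corr norm} (together with $\E[\Lc_{n;s}]^{2}=\tfrac{E_{n}}{2}\int_{B(s)\times B(s)}\tfrac14\,dx\,dy$) one indeed has
\begin{equation*}
\int_{\Bc_{i}}K_{2}(x-y)\,dx\,dy=\frac{2}{E_{n}}\,\E\bigl[\Lc_{Q_{1}}(T_{n})\,\Lc_{Q_{2}}(T_{n})\bigr],
\end{equation*}
and the truncation argument you sketch at the end takes care of the measure-zero degeneracy. This part is fine, and is an appealing way to avoid integrating $K_{2}$ near its singularity.

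\textbf{The gap: the deterministic nodal-length bound is false.} The crux of your argument is the claim that, uniformly over all $a_{\lambda}$, the restriction of $T_{n}$ to a straight segment of length $L$ has $O(\sqrt{n}\,L+1)$ zeros, from which you conclude $\Lc_{Q_{j}}(T_{n})\ll_{c_{0}}n^{-1/2}$ by Crofton. This zero count is not a valid deterministic bound. A Jensen-type estimate for entire functions of exponential type gives $\,\ll\sqrt{n}\,L+N\,$ where $N$ is a doubling/vanishing index, and nothing forces $N=O(1)$. For the specific structure at hand the sharp deterministic statement is a Tijdeman-type bound: the restriction $f(t)=\sum_{\lambda\in\Ec_{n}}a_{\lambda}e(\langle\lambda,x_{0}\rangle)\,e(t\langle\lambda,v\rangle)$ is an exponential sum with at most $\Nc_{n}$ terms and frequencies in $[-2\pi\sqrt{n},2\pi\sqrt{n}]$, so its number of real zeros on a segment of length $L$ is $\ll\Nc_{n}+\sqrt{n}\,L$, and the $\Nc_{n}$ cannot be removed (an exponential sum with $m$ terms can vanish to order $m-1$, hence all its zeros can cluster on an arbitrarily short interval). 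Feeding this into Crofton one only gets the deterministic bound $\Lc_{Q_{j}}(T_{n})\ll_{c_{0}}\Nc_{n}/\sqrt{n}$, i.e.\ a loss of a factor $\Nc_{n}$ compared with your claim, and hence
\begin{equation*}
\int_{\Bc_{i}}K_{2}(x-y)\,dx\,dy\ll_{c_{0}}\frac{\Nc_{n}^{2}}{F^{3}\sqrt{n}},
\end{equation*}
a loss of $\Nc_{n}^{2}$.

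\textbf{Why the loss is fatal downstream.} In \eqref{eq:int K2 sing << R6 mom} the lemma is used with $|I'|\ll F^{4}\Rc_{n}(6;2s)$ to make the singular contribution $\ll\Rc_{n}(6;2s)\ll s^{4}\Nc_{n}^{-5/2}$, which is negligible next to the main term $\asymp s^{4}\Nc_{n}^{-2}$. With the extra $\Nc_{n}^{2}$ the singular contribution becomes $\ll\Nc_{n}^{2}\Rc_{n}(6;2s)\ll s^{4}\Nc_{n}^{-1/2}$, which \emph{dominates} the main term as $\Nc_{n}\rightarrow\infty$. So the approach cannot be rescued simply by tolerating the weaker bound.

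\textbf{What the intended proof does instead.} The paper (following \cite[\S 6.4]{O-R-W}) bounds $K_{2}$ pointwise by $K_{2}(z)\ll(1-r_{n}(z)^{2})^{-1/2}$ via Lemma \ref{lem:K2=O(1)}, and then integrates this over $\Bc_{i}$ directly. The key structural facts are that $r_{n}(z)^{2}=1$ only at isolated points (e.g.\ $z\in\Z^{2}$ for $r_{n}=+1$), that the Hessian of $r_{n}$ at such a point has all eigenvalues $\asymp n$, and that $\bigl|\nabla\sqrt{1-r_{n}^{2}}\bigr|\ll\sqrt{n}$; together these give $1-r_{n}(z)^{2}\gg n\,\|z-z_{*}\|^{2}$ near the nearest degenerate point $z_{*}$, so that $(1-r_{n}^{2})^{-1/2}\ll(\sqrt{n}\,\|z-z_{*}\|)^{-1}$ is an \emph{integrable} (two-dimensional) singularity. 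Integrating over the $2/F$-neighbourhood in the difference variable contributes $\ll\mathrm{diam}/\sqrt{n}\ll(F\sqrt{n})^{-1}$, and the $1/F^{2}$ from the remaining two directions yields exactly $\ll_{c_{0}}1/(F^{3}\sqrt{n})$. Your reverse Kac--Rice route, by contrast, shifts the burden to a deterministic nodal-length bound that simply is not true with the required strength.
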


For brevity of notation in what follows we will sometimes suppress the dependency of various variables on $x$ or $n$,
e.g. $r$ or $r(x)$ will stand for $r_{n}(x)$, and $X$ will denote the $2\times 2$ matrix $X_{n}(x)$ in \eqref{eq:XY block def}.

\begin{proposition}[Cf. ~\cite{K-K-W}, Proposition $4.5$]
\label{prop:Kac-Rice K2 asymp}
For every $x\in\Tb^{2}$ such that $|r_{n}(x)| < \frac{1}{4}$, the $2$-point correlation function $K_{2}(x)=K_{2;n}(x)$ satisfies the asymptotic expansion
\begin{equation}
\label{eq:K2=1/4+L2+eps}
K_{2}(x) =\frac{1}{4}+ L_{2}(x)+\epsilon(x),
\end{equation}
where
\begin{equation}
\label{eq:L2 approx def}
\begin{split}
L_{2}(x) &= \frac{1}{8}\bigg( r^{2}+ \tr{X}+ \frac{\tr(Y^{2})}{4}+\frac{3}{4}r^{4}-\frac{\tr(XY^{2})}{8} -\frac{\tr(X^{2})}{16}+\frac{\tr(Y^{4})}{128}\\&+\frac{\tr(Y^{2})^{2}}{256} - \frac{\tr(X)\cdot \tr(Y^{2})}{16}+\frac{1}{2}r^{2}\tr(X)+\frac{1}{8}r^{2}\tr(Y^{2})  \bigg)
\end{split}
\end{equation}
and
\begin{equation}
\label{eq:eps error def}
|\epsilon(x)| = O(r^{6}+\tr(X^{3})+\tr(Y^{6})).
\end{equation}
\end{proposition}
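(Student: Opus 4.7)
The statement of Proposition \ref{prop:Kac-Rice K2 asymp} is a pointwise assertion about $K_{2}(x)$ and makes no reference to the shrinking-ball setup of the rest of the paper; accordingly, the proof would mirror that of \cite{K-K-W}, Proposition $4.5$, with essentially no modifications. My plan is to multiply together two independent Taylor expansions --- one of the scalar factor $1/\sqrt{1-r^{2}}$ in $r$, and the other of the Gaussian expectation $\E[\|V_{1}\|\cdot\|V_{2}\|]$ in the entries of $X,Y$ --- and then to collect terms up to the prescribed order.

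\emph{Step 1: scalar expansion.} Since the hypothesis gives $|r_n(x)|<1/4$, the identity
\begin{equation*}
\frac{1}{\sqrt{1-r^{2}}} = 1+\frac{1}{2}r^{2}+\frac{3}{8}r^{4}+O(r^{6})
\end{equation*}
holds with an absolute constant in the $O(r^6)$ term, and this remainder will be absorbed into $\epsilon(x)$.

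\emph{Step 2: Gaussian expansion.} Writing $\Omega=I+P$ with $P=\begin{pmatrix} X & Y \\ Y & X\end{pmatrix}$, and using that by Lemma \ref{lem:K2=O(1)} the blocks $X,Y$ are uniformly bounded, I would expand $\E[\|V_{1}\|\cdot\|V_{2}\|]$ as a polynomial in the entries of $X,Y$ up to total degree $4$. The cleanest route is the bivariate Hermite/Wiener-chaos expansion of the Euclidean norm on $\R^{2}$ (equivalently, an integral representation such as $\|v\|=\tfrac{1}{\sqrt{\pi}}\int_{0}^{\infty}t^{-3/2}(1-e^{-t\|v\|^{2}})\,dt$), which reduces the computation to a Gaussian moment-generating integral whose dependence on $X,Y$ is explicit. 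Because $(V_{1},V_{2})$ is isotropic within each copy of $\R^{2}$ at the level of second moments, the surviving terms at each order must be $O(2)\times O(2)$-invariants of the pair $(X,Y)$, which forces the polynomial dependence to be organised into the traces $\tr X$, $\tr(Y^{2})$, $\tr(X^{2})$, $\tr(XY^{2})$, $\tr(Y^{4})$, $\tr(Y^{2})^{2}$, $\tr(X)\tr(Y^{2})$ appearing in $L_{2}(x)$. The degree-$5$-and-higher remainder, again by uniform boundedness of $X,Y$, is dominated by $O(\tr(X^{3})+\tr(Y^{6}))$ after grouping.

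\emph{Step 3: multiply and collect.} Multiplying the two expansions and keeping terms up to the stated total order produces the constant $\tfrac{1}{4}$ (matching the unperturbed $\E[\|V\|]^{2}/(2\pi)$ for a standard Gaussian $V\in\R^{2}$), the polynomial $L_{2}(x)$ as in \eqref{eq:L2 approx def} with its precise rational coefficients, and the combined remainder bounded in \eqref{eq:eps error def}. Cross terms of the form $r^{2}\tr X$ and $r^{2}\tr(Y^{2})$ arise precisely from the $\tfrac{1}{2}r^{2}$ part of the scalar expansion meeting the leading $X,Y$-linear corrections of the Gaussian expectation.

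The main obstacle is not conceptual but bookkeeping: one must enumerate all the products of monomials in $r$ and of invariants in $X,Y$ that can contribute up to the target order, track the rational coefficients through both expansions, and verify that they assemble into exactly the linear combination displayed in $L_{2}(x)$. Since this algebraic reduction is already carried out in detail in \cite{K-K-W}, Proposition $4.5$, and its proof is pointwise (i.e. independent of whether one later integrates over $\Tb^{2}$ or over $B(s)\times B(s)$), I would simply invoke that computation verbatim.
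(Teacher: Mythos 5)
Your proposal is correct and matches the paper's treatment: the paper does not re-derive this pointwise expansion but cites it directly as \cite{K-K-W}, Proposition 4.5, and your Steps 1--3 accurately summarise the strategy used there (Taylor expansion of $1/\sqrt{1-r^2}$ times a degree-$4$ expansion of the Gaussian expectation $\E[\|V_1\|\|V_2\|]$ in the entries of $X,Y$, then collecting terms), which the paper also sketches informally at the start of \S\ref{sec:singular set}. One small correction in your parenthetical: the subordination identity should read $\|v\|=\frac{1}{2\sqrt{\pi}}\int_0^\infty t^{-3/2}\bigl(1-e^{-t\|v\|^2}\bigr)\,dt$, and the symmetry that forces the trace invariants is the \emph{diagonal} $O(2)$ action (rotating $V_1,V_2$ simultaneously preserves the block structure of $\Omega$), not independent $O(2)\times O(2)$ rotations; neither affects your conclusion since you ultimately invoke \cite{K-K-W}.
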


\begin{proof}[Proof of the variance part \eqref{eq:Var asympt gen} of Theorem \ref{thm:toral univ spec corr}]

We invoke Lemma \ref{lem:nodal var 2pnt corr norm} and separate the singular contribution to write
\begin{equation}
\label{eq:var 2pnt corr sep sing}
\begin{split}
\var(\Lc_{n;s}) &= \frac{E_{n}}{2}\int\limits_{B(s)\times B(s)} \left(K_{2}(x-y)-\frac{1}{4}\right)dxdy
\\&=\frac{E_{n}}{2}\int\limits_{B_{sing}\cap (B(s)\times B(s))} \left(K_{2}(x-y)-\frac{1}{4}\right)dxdy
\\&+ \frac{E_{n}}{2}\int\limits_{(B(s)\times B(s))\setminus B_{sing}} \left(K_{2}(x-y)-\frac{1}{4}\right)dxdy.
\end{split}
\end{equation}
Now we bound the contribution of singular set (former integral on the r.h.s. of \eqref{eq:var 2pnt corr sep sing}) by
\begin{equation}
\label{eq:int K2 sing << R6 mom}
\begin{split}
&\int\limits_{B_{sing}\cap (B(s)\times B(s))} \left(K_{2}(x-y)-\frac{1}{4}\right)dxdy \\&=
\int\limits_{B_{sing}\cap (B(s)\times B(s))} K_{2}(x-y)dxdy + O(\meas(B_{sing}\cap (B(s)\times B(s))))
\\&\le \int\limits_{B_{sing}}K_{2}(x-y)dxdy + O(\meas(B_{sing})) \\&\ll
F^{4}\cdot \Rc_{n}(6;s) \cdot \frac{1}{F^{3}\sqrt{n}} + \Rc_{n}(6;2s) \ll \Rc_{n}(6;2s),
\end{split}
\end{equation}
where we employed \eqref{eq:meas(sing)<<R6 mom} and \eqref{eq:sing cubes << 6th mom} of Lemma
\ref{lem:sing set bound prop}, Lemma \ref{lem:contr 2pnt corr sing cube}, and \eqref{eq:F cube side length}.

On the nonsingular range $(B(s)\times B(s))\setminus B_{sing}$ (latter integral on the r.h.s. of \eqref{eq:var 2pnt corr sep sing})
we have $|r_{n}(x-y)| < \frac{3}{4}$, hence we are eligible to invoke Proposition \ref{prop:Kac-Rice K2 asymp} to write
\begin{equation}
\label{eq:int K2-1/2 L,eps nonsing}
\begin{split}
&\int\limits_{(B(s)\times B(s))\setminus B_{sing}} \left(K_{2}(x-y)-\frac{1}{4}\right)dxdy \\&=
\int\limits_{(B(s)\times B(s))\setminus B_{sing}}\left(L_{2}(x-y)+\epsilon(x-y)\right)dxdy \\&=
\int\limits_{B(s)\times B(s)}\left(L_{2}(x-y)+\epsilon(x-y)\right)dxdy + O(\meas(B_{sing}))
\\&= \int\limits_{B(s)\times B(s)}\left(L_{2}(x-y)+\epsilon(x-y)\right)dxdy + O(\Rc_{n}(6;2s)),
\end{split}
\end{equation}
by \eqref{eq:X,Y=O(1)}, and \eqref{eq:meas(sing)<<R6 mom}. Consolidating the contributions \eqref{eq:int K2 sing << R6 mom} and
\eqref{eq:int K2-1/2 L,eps nonsing} of the singular and nonsingular ranges respectively to the integral
in \eqref{eq:var 2pnt corr sep sing} we obtain
\begin{equation}
\label{eq:int K2-1/4 by L2, eps, err}
\begin{split}
&\int\limits_{B(s)\times B(s)} \left(K_{2}(x-y)-\frac{1}{4}\right)dxdy
\\&= \int\limits_{B(s)\times B(s)}\left(L_{2}(x-y)+\epsilon(x-y)\right)dxdy + O(\Rc_{n}(6;2s)).
\end{split}
\end{equation}

By the very definition \eqref{eq:L2 approx def} of $L_{2}$ and \eqref{eq:eps error def}, the above estimate \eqref{eq:int K2-1/4 by L2, eps, err}
relates the nodal length variance to evaluating moments of the encountered expressions along shrinking balls. The latter is precisely the statement of Lemma \ref{lem:r der moments}, under the hypotheses $\Ac(n;2,n^{1/2-\delta/2})$ and $\Ac(n;6,n^{1/2-\delta})$ of Theorem
\ref{thm:toral univ spec corr}, that so far haven't been exploited. Lemma \ref{lem:r der moments} is instrumental for
the precise asymptotic evaluation of the integral $$\int\limits_{B(s)\times B(s)}L_{2}(x-y)dxdy$$ and
bounding the contribution $$\int\limits_{B(s)\times B(s)}\epsilon(x-y)dxdy$$ and $\Rc_{n}(6;2s)$ of the error terms in the following way.
First,
\begin{equation}
\label{eq:eps L1 mom small}
\int\limits_{B(s)\times B(s)}\epsilon(x-y)dxdy \ll
\int\limits_{B(s)\times B(s)}\left(r^{6}+\tr(X^{3})+\tr(Y^{6})\right)dxdy \ll s^{4}\frac{1}{\Nc_{n}^{5/2}}
\end{equation}
and
\begin{equation}
\label{eq:Rn(6,2s) small}
\Rc_{n}(6;2s) \ll s^{4}\cdot \frac{1}{\Nc_{n}^{5/2}}
\end{equation}
by \eqref{eq:rn 6th mom Bs} (applied both on $s$ and $2s$), \eqref{eq:mom tr(X^3)} and \eqref{eq:mom tr(Y^6)}.
Next,
\begin{equation}
\label{eq:L2 mom ball estimate}
\begin{split}
&\int\limits_{B(s)\times B(s)}L_{2}(x-y)dxdy = \frac{1}{8}\cdot (\pi s^{2})^{2} \cdot \bigg( \frac{1}{\Nc_{n}}
+\left(-\frac{2}{\Nc_{n}}-\frac{2}{\Nc_{n}^{2}} \right) \\&+\frac{1}{4}\cdot \left( \frac{4}{\Nc_{n}}-\frac{4}{\Nc_{n}^{2}}  \right)
+\frac{3}{4}\cdot \frac{3}{\Nc_{n}^{2}}  +\frac{1}{8}\cdot \frac{4}{\Nc_{n}^{2}} - \frac{1}{16}\cdot \frac{8}{\Nc_{n}^{2}}
+\frac{1}{128} \cdot   \frac{2(11+\widehat{\new{\tau_{n}}}(4)^{2})}{\Nc_{n}^{2}}\\& +
\frac{1}{256}\cdot \frac{4(7+\widehat{\new{\tau_{n}}}(4)^{2})}{\Nc_{n}^{2}}
+\frac{1}{16}\cdot \frac{8}{\Nc_{n}^{2}} - \frac{1}{2}\cdot  \frac{2}{\Nc_{n}^{2}} +\frac{1}{8}\cdot \frac{8}{\Nc_{n}^{2}} +
O\left( \frac{1}{\Nc_{n}^{5/2}}  \right)\bigg)
\\&= (\pi s^{2})^{2} \cdot \left(\frac{1+\widehat{\new{\tau_{n}}}(4)^{2}}{256\cdot \Nc_{n}^{2}}  + O\left( \frac{1}{\Nc_{n}^{5/2}}  \right)\right)
\end{split}
\end{equation}
by \eqref{eq:rn 2nd mom Bs}, \eqref{eq:mom tr(X)}, \eqref{eq:mom tr(Y^2)}, \eqref{eq:rn 4th mom Bs}, \eqref{eq:mom tr(XY^2)},
\eqref{eq:mom tr(X^2)}, \eqref{eq:mom tr(Y^4)}, \eqref{eq:mom tr(Y^2)^2}, \eqref{eq:mom tr(X)tr(Y^2)},
\eqref{eq:mom r^2 tr(X)} and \eqref{eq:mom r^2 tr(Y^2)}, with the $\frac{1}{\Nc_{n}}$ term vanishing.
Substituting \eqref{eq:L2 mom ball estimate}, \eqref{eq:eps L1 mom small} and \eqref{eq:Rn(6,2s) small} into
\eqref{eq:int K2-1/4 by L2, eps, err}, and then finally into the first equality of \eqref{eq:var 2pnt corr sep sing} yields the variance statement
\eqref{eq:Var asympt gen} of Theorem \ref{thm:toral univ spec corr}.

\end{proof}

\subsection{Proof of the full correlation part \eqref{eq:restr nod len full corr full} of Theorem \ref{thm:toral univ spec corr}}

\label{sec:full corr proof}

\begin{proof}

Let $\epsilon>0$ be given; we are going to show that for every $s>0$ we have the precise identity
\begin{equation}
\label{eq:cov(Lns,Ln)=vol(B)*var(L)}
\cov(\Lc_{n;s},\Lc_{n}) = (\pi s^{2}) \cdot \var(\Lc_{n}).
\end{equation}
Once \eqref{eq:cov(Lns,Ln)=vol(B)*var(L)} has been established, \eqref{eq:restr nod len full corr full} follows at once for $n$ satisfying
\new{\eqref{eq:Var asympt gen}} (valid for a generic sequence $\{n\}\subseteq S$) and \eqref{eq:var(tot nod length)},
uniformly for $s>n^{-1/2+\epsilon}$.

To show \eqref{eq:cov(Lns,Ln)=vol(B)*var(L)} we recall that
$K_{2}(x)=K_{2;n}(x)$ as in \eqref{eq:K2 2pnt corr norm def} is the (normalised) $2$-point correlation function, and that we have that
the total nodal length variance $\var(\Lc_{n})$ is given by ~\cite{R-W2008,K-K-W}
\begin{equation}
\label{eq:var(L)=int(K2)TxT}
\var(\Lc_{n})= \frac{E_{n}}{2}\int\limits_{\Tb^{2}} \left(K_{2}(x)-\frac{1}{4}\right)dx.
\end{equation}
(cf. \eqref{eq:var(nod length)=int(K2)}).
For the covariance we have the analogous formula
\begin{equation}
\label{eq:cov(Ls,L)=int(K2)BsxT}
\cov(\Lc_{n;s},\Lc_{n})= \frac{E_{n}}{2}\int\limits_{B(s)\times \Tb^{2}} \left(K_{2}(x-y)-\frac{1}{4}\right)dx.
\end{equation}
Since for every $x$ fixed, wherever $y$ varies along the torus so does $x-y$, \eqref{eq:cov(Ls,L)=int(K2)BsxT} reads
\begin{equation*}
\cov(\Lc_{n;s},\Lc_{n})= \vol(B(s))\cdot \frac{E_{n}}{2}\int\limits_{\Tb^{2}} \left(K_{2}(x)-\frac{1}{4}\right)dx =
(\pi s^{2})\cdot \var(\Lc_{n}),
\end{equation*}
by \eqref{eq:var(L)=int(K2)TxT}. This concludes the proof of \eqref{eq:cov(Lns,Ln)=vol(B)*var(L)}, which, as mentioned
above, implies \eqref{eq:restr nod len full corr full}.

\end{proof}

\section{Proof of Theorem \ref{thm:quasi-corr small}: bound for quasi-correlations}

\label{sec:qcorr proof}

\qnew{

Our first goal is to state a quantitative version of Theorem \ref{thm:quasi-corr small} (in terms of the exceptional number of
$n\in S$ not obeying the properties claimed by Theorem \ref{thm:quasi-corr small}), also controlling the possible weak-$*$ partial
limits of the respective $\{\tau_{n}\}_{n\in S'}$, namely measures $\nu_{s}$ introduced immediately below.

\begin{definition}
\label{measdef}
For $s \in [0, \pi/4]$ let the (symmetric) probability measure $\nu_s$ be given by
\begin{equation*}
\int_{S^1} f \ d \nu_s :=  \frac{1}{8s} \sum_{ j=0}^{3} \int_{-s+k \pi/2}^{s+k \pi/2} f(e^{i \theta}) \ d \theta \qquad (f \in C(\Sc^1)).
\end{equation*}

\end{definition}

We will \qcnew{adopt} the following conventions:

\begin{notation}
Throughout this section we will use the notation $[N]=\left\{ 1, \cdots , N \right\}$ for any natural number $N$ and write $n \asymp x$ to mean $n \in [x,2x]$. The shorthand $\log_{2} n:= \log \log n$ will be in use and, as usual, $\Omega(n)=\sum_{p^e ||n} e$ denotes the number of prime divisors of $n$ counted with multiplicity. We will say the numbers $\theta_1,..., \theta_r \in \R$ are $\gamma$-separated when $|\theta_j-\theta_i| \geq \gamma$ for all $i \neq j$.
\end{notation}

Theorem \ref{maincorr} is the announced quantitative version of Theorem \ref{thm:quasi-corr small}.

}

\qnew{
\begin{theorem}\label{maincorr}
Given any $0< \delta \leq 1$, $l \geq 2$, the following two properties hold.\\
a) The exceptional set $\mathcal{R}_N(l; \delta):=\left\{N \leq n \leq 2N:  \mathcal{C}_n(l;n^{(1-\delta)/2}) \neq \emptyset \right\} $
has size at most
\begin{equation}\label{RN}
|\mathcal{R}_N(l; \delta) | \ll \kappa^{L  }  (2L)! \  N^{1- \rho_0 (\delta, l)} (\log N)^{L +1} ,
\end{equation}
where $\kappa >0$ is an absolute constant, $L:=2^{l}$ and $\rho_0 (\delta, l)= \delta/(2 \cdot4^{L }  (l +1)  )$.\\
b) For any $s \in [0, \pi/4]$ there exists a sequence of natural numbers  \qnew{ $\{n_{k}\}_{k\ge 1}\subseteq S$ so that
$$ \mathcal{C}_{n_{k}}(l;n_{k}^{(1-\delta)/2}) =\emptyset $$
and $\tau_{n_{k}} \Rightarrow  \nu_s$. }
\end{theorem}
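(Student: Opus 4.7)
The plan is to prove parts (a) and (b) essentially independently, with part (b) drawing on (a) for a density argument at the end.

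\smallskip

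\emph{Part (a).} The quasi-correlation condition translates naturally to the Gaussian integers: identifying $\Z^{2}$ with $\Z[i]$, each $\lambda_{j}\in\Ec_{n}$ becomes a Gaussian integer of norm $n$, and $\Ec_{n}$ itself is parametrised by the distribution of split-prime divisors of $n$ between $z$ and $\bar z$ in a factorisation $n=z\bar z$. Given a quasi-correlation $\sum_{j=1}^{l}\lambda_{j}=w$ with $0<|w|^{2}\le n^{1-\delta}$, squaring yields
\begin{equation*}
|w|^{2} \;=\; ln + 2\sum_{i<j}\langle\lambda_{i},\lambda_{j}\rangle ,
\end{equation*}
so the $\binom{l}{2}$ Gaussian integers $\lambda_{i}\bar\lambda_{j}$ of norm $n^{2}$ are themselves quasi-correlated on the circle of radius $n$. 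Iterating this squaring-and-repackaging manoeuvre $l$ times produces a tree of derived Gaussian integers whose depth is reflected in the doubly-exponential parameter $L=2^{l}$. For each resulting ``shape'' one bounds the number of admissible $n\le N$ by divisor / Selberg--Delange estimates on $r_{2}(n)$: typical $n$ has $\Omega(n)=(1+o(1))\log_{2} n$ split prime factors, and the generic range contributes the $(\log N)^{L+1}$ factor; the $(2L)!$ multiplier accounts for all orderings and pairings of the iterated Gaussian integers, while the power saving $N^{-\rho_{0}(\delta,l)}$ comes from the stringent arithmetic constraint each squaring step imposes on the factorisation of $n$, with the effective value of $\delta$ roughly halved at each level---hence the $4^{L}$ appearing in the denominator of $\rho_{0}$.

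\smallskip

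\emph{Part (b).} I would construct $n_{k}:=p_{k}^{m_{k}}$ with $p_{k}=|\pi_{k}|^{2}$ for a Gaussian prime $\pi_{k}\in\Z[i]$ of small argument $\theta_{k}:=\arg(\pi_{k})$. The lattice points on $|z|^{2}=n_{k}$ are exactly $\{u\,\pi_{k}^{j}\bar\pi_{k}^{m_{k}-j}:u\in\{\pm 1,\pm i\},\,0\le j\le m_{k}\}$, whose arguments are $(2j-m_{k})\theta_{k}+\nu\pi/2$, equally spaced within four arcs of half-width $m_{k}\theta_{k}$ centred at $\pm 1,\pm i$. Choosing $\theta_{k}\to 0$ and $m_{k}\to\infty$ with $m_{k}\theta_{k}\to s$ yields $\tau_{n_{k}}\Rightarrow\nu_{s}$ by equidistribution on each arc. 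To arrange the quasi-correlation condition simultaneously, pull out a common factor of $\bar\pi_{k}^{m_{k}}$ in any candidate short sum: setting $\zeta_{k}:=\pi_{k}/\bar\pi_{k}\in \Sc^{1}$, the inequality $|\sum\lambda_{j}|\le n_{k}^{(1-\delta)/2}$ reduces to
\begin{equation*}
\Big|\sum_{h=1}^{l} i^{u_{h}}\zeta_{k}^{j_{h}}\Big|\;\le\; n_{k}^{-\delta/2} ,
\end{equation*}
which for generic $\pi_{k}$ fails by a Baker-type lower bound on linear forms in $\log\zeta_{k}$ (equivalently, a height estimate for elements of $\Z[\zeta_{k}]$). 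Restricting $p_{k}$ to the density-$1$ subset of primes $\equiv 1\bmod 4$ on which such a Diophantine bound is effective, and further to the complement of $\Rc_{N}(l;\delta)$ from part (a), yields the desired sequence.

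\smallskip

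The principal obstacle lies in part (a): carrying out the iterated squaring $l$ times while keeping explicit control over the proliferating shapes, orderings, and arithmetic restrictions on $n$, and verifying that the combined loss matches precisely $\rho_{0}(\delta,l)=\delta/(2\cdot 4^{L}(l+1))$, requires delicate combinatorial and analytic bookkeeping. In part (b), a secondary subtlety is the compatibility of the angular-concentration construction with both the Diophantine lower bound and the density statement of part (a); if pure prime powers $p_{k}^{m_{k}}$ prove awkward, one may replace them by products of several primes having comparable small arguments, which harnesses part (a)'s density result more directly at the cost of a modest combinatorial refinement of the equidistribution argument on the four arcs.
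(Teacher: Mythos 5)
Your proposal diverges fundamentally from the paper's proof in both parts, and both of your main ideas have significant gaps.

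\emph{Part (a).} The ``squaring-and-repackaging'' step does not produce quasi-correlations. Expanding $|w|^{2}=ln+2\sum_{i<j}\langle\lambda_{i},\lambda_{j}\rangle$ shows that $\sum_{i<j}\mathrm{Re}(\lambda_{i}\bar\lambda_{j})\approx -ln/2$, so the new sum of Gaussian integers $\lambda_{i}\bar\lambda_{j}$ has magnitude $\gtrsim ln/2$, which is \emph{large}, not small; the defining property of a quasi-correlation (a nonzero sum of absolute value $\le K$) simply fails for the repackaged tuple. One would have to subtract a constant $\approx -ln/2$ to recenter, but that constant is not a sum of lattice points on the relevant circle, so the recursion does not close. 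The bookkeeping you describe (halving $\delta$ at each of $l$ squaring steps) would produce a loss of $2^{l}$, not $4^{L}=4^{2^{l}}$, so the stated $\rho_{0}$ is not recovered. The paper's actual route is quite different: it parametrises $\mathcal{E}_{n}$ by the Gaussian prime factorisation of $n$ and a sign vector $(\epsilon_{j}^{(r)})\in\{-1,1\}^{l}$ per prime, trims $n$ with many prime factors via Erd\H{o}s--S\'ark\"ozy, rewrites the quasi-correlation condition as the pair of trigonometric inequalities \eqref{conds}, and then proves Proposition~\ref{Spm} by induction on the number $k$ of Gaussian-integer variables, using the cosine addition formula together with narrow-sector lattice counts (Lemmas~\ref{sectorbound} and~\ref{exppowersums}). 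Here $L=2^{l}$ appears as the number of possible sign vectors, and the $4^{L}$ saving is produced by the $k$-step induction evaluated at $k=L$ (via the injective regrouping map $\phi^{+}$ in \S\ref{complcorr}); this is a genuinely different mechanism from iterated squaring.

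\emph{Part (b).} The Diophantine route is not viable as sketched. After factoring out $\bar\pi_{k}^{m_{k}}$, the nonvanishing of $\sum_{h}i^{u_{h}}\zeta_{k}^{j_{h}}$ gives, by clearing denominators $p_{k}^{\max j_{h}}$, only the trivial lower bound $|\sum|\ge p_{k}^{-m_{k}}=n_{k}^{-1}$, which is much weaker than the required $n_{k}^{-\delta/2}$. A Baker-type lower bound for linear forms in logarithms would also degrade as the height $p_{k}^{m_{k}}=n_{k}$ grows, and there is no reason it would beat the threshold $n_{k}^{-\delta/2}$ uniformly; you acknowledge this but do not resolve it. There is also a tension with part (a): the set of prime powers $p^{m}$ is extremely sparse, so the density argument from (a) cannot be applied directly to select good $n_{k}$ inside it. The paper avoids all of this by constructing a \emph{large} family of $n$: it picks $k\asymp\log_{2}N$ Gaussian primes with arguments prescribed in a geometric cascade of sub-intervals $[\alpha_{j},\beta_{j}]\subseteq[0,s]$ (Definition via \eqref{alphabetadef}), shows via Lemma~\ref{almostprimes} that the resulting $k$-almost primes number $\gg N/\Phi(N)$, and then a pigeonhole against the exceptional set bound \eqref{RN} produces the desired sequence. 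Equidistribution $\tau_{n_{k}}\Rightarrow\nu_{s}$ then follows from the elementary Lemma~\ref{Boolean} on the even spacing of the signed sumset $\mathcal{B}(\theta_{0},\ldots,\theta_{k-1})$, with no Diophantine input at all.

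In short, both halves of your plan rest on ideas that do not survive scrutiny (the squaring does not yield quasi-correlations; the Baker bound is too weak and incompatible with the density step), and the parameters you target ($L=2^{l}$, $\rho_{0}=\delta/(2\cdot 4^{L}(l+1))$) are not explained by your mechanism. You should study the induction in Proposition~\ref{Spm} and the sector construction of \S\ref{complcorr} and the lemmas around \eqref{alphabetadef}, which is where the real work lives.
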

}

 \qcnew{
\begin{proof}[Proof of Theorem \ref{thm:quasi-corr small} assuming Theorem \ref{maincorr}]

Let $\delta \in (0,1/2)$ and $l\ge 2$ be given and define
$$S'(l, \delta)= \left\{n \in S:  \mathcal{C}_n(l;n^{1/2-\delta}) = \emptyset \right\}.$$
Part 3 of Theorem \ref{thm:quasi-corr small} is true by definition while part 1 follows immediately from the power saving in \eqref{RN}. Part 2 is proven as in \cite[Section 7.2]{K-K-W}: Let $s \in [0, \pi/4]$ be arbitrary and consider any sequence $(n_k)_{k \geq1} \subseteq S'(l, \delta) $ for which $\tau_{n_k} \Rightarrow  \nu_s$. Seeing how $\widehat{\tau_{n_k} }(4) \rightarrow \widehat{\nu_s}(4)$, the result follows from the continuity of the map $s \mapsto \ \widehat{\nu_s}(4) $ (with boundary values $\widehat{ \nu_{0} }(4)=1$, $\widehat{ \nu_{\pi/4} }(4)=0$), or, alternatively,
the explicit evaluation of $$\widehat{\nu_s}(4) =\frac{\sin(4s)}{4s}.$$

\end{proof}

}

The rest of section \ref{sec:qcorr proof} is dedicated to proving Theorem \ref{maincorr}.

\qnew{

\subsection{Preliminary results} We begin with two simple estimates.

\begin{lemma}\label{sectorbound}
Let $R \geq1$ and suppose \qcnew{that} the angles $0< \theta_1 < \theta_2 \leq 2 \pi$ are $1/R$-separated. Then the lattice points contained in the sector $\Gamma(R;\theta_1, \theta_2):=\left\{r e^{i \theta} : r \leq R, \  \theta \in (\theta_1, \theta_2) \right\}$ number at most
$$\left| \Gamma(R;\theta_1, \theta_2) \cap \Z[i] \right| \leq \kappa_1 (\theta_2-\theta_1) R^2$$
for some absolute constant $\kappa_1 >0$.
\begin{proof}
The number of Gaussian integers inside $\Gamma(R;\theta_1, \theta_2)$  is bounded by the area of the covering region $$\tilde{\Gamma}:=\left\{ z_1+z_2 \in \C : z_1 \in \Gamma(R;\theta_1, \theta_2), |z_2| \leq \sqrt{2} \right\}.$$
The area of $\tilde{\Gamma}$ is at most $c_1 ((\theta_2-\theta_1) R^2+ R)$ for some absolute constant $c_1$ and hence the result follows.
\end{proof}
\end{lemma}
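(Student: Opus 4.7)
My plan is to apply the standard packing argument from the geometry of numbers: convert each lattice point into a unit area and sum, which reduces the problem to a purely planimetric estimate on a thickened sector. To each Gaussian integer $z \in \Gamma(R;\theta_1,\theta_2) \cap \Z[i]$ I would associate the closed unit square $Q_z$ centred at $z$ with sides parallel to the coordinate axes. These squares have pairwise disjoint interiors and each has area one, so the number of lattice points in the sector is bounded above by the area of $\bigcup_z Q_z$. The crucial observation is that this union lies inside the $(\sqrt{2}/2)$-neighbourhood of $\Gamma(R;\theta_1,\theta_2)$, since the distance from the centre of a unit square to any of its corners is $\sqrt{2}/2$.

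The next step is to estimate the area of this thickened sector $\widetilde{\Gamma}$. Fattening a circular sector of radius $R$ and angular aperture $\alpha := \theta_2 - \theta_1$ by a uniform margin of $O(1)$ produces a set contained in a sector of radius $R + O(1)$ whose angular aperture is enlarged by $O(1/R)$ (to accommodate the two straight radial boundaries, each pushed outward by a constant amount). Consequently
\begin{equation*}
\mathrm{area}(\widetilde{\Gamma}) \ \le \ \tfrac{1}{2}\bigl(\alpha + c/R\bigr)(R+c)^{2} \ \le \ c_{1}\bigl(\alpha R^{2} + R\bigr)
\end{equation*}
for some absolute constant $c_{1}>0$, after expanding and retaining the dominant terms.

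Finally, I invoke the separation hypothesis $\alpha \geq 1/R$, which immediately gives $R \le \alpha R^{2}$, so the additive error $R$ is absorbed into the leading contribution $\alpha R^{2}$. Collecting everything yields
\begin{equation*}
\bigl| \Gamma(R;\theta_1,\theta_2) \cap \Z[i] \bigr| \ \le \ 2c_{1}\,\alpha\, R^{2},
\end{equation*}
which is the desired bound with $\kappa_{1} = 2c_{1}$. I do not expect any genuine obstacle here; the only subtlety is that the separation hypothesis is genuinely necessary, since a sector of aperture $\alpha \ll 1/R$ can still contain $\Theta(R)$ lattice points aligned along a single near-radial ray, whereas $\alpha R^{2} \ll R$ would be too small to accommodate them. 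The rest is routine Gauss-style lattice point counting.
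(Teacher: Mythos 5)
Your argument follows exactly the route of the paper's proof: associate to each lattice point a unit square (area one) centred there, note that the disjoint union of these squares lies in a bounded-width neighbourhood of the sector, bound the area of that neighbourhood by $O(\alpha R^2 + R)$, and invoke the separation hypothesis $\alpha \ge 1/R$ to absorb the linear term into $\alpha R^{2}$. The one step stated imprecisely is the claim that the constant-width fattening of $\Gamma(R;\theta_1,\theta_2)$ is contained in a sector with apex still at the origin, radius $R+O(1)$, and aperture $\alpha + O(1/R)$. That containment is literally false near the apex: the fattened region contains an entire disc of radius $O(1)$ around $0$, which no narrow sector through the origin can swallow (for instance, the point $-1/2$ lies within distance $1$ of the sector but makes angle $\pi$ with the positive real axis). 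The omission costs only $O(1)$ in area and is harmless, since $\alpha R^{2} \ge R \ge 1$; it can be repaired either by adjoining a small disc around the origin to your comparison region, or, most cleanly, by Steiner's formula for a convex body (valid directly when $\alpha \le \pi$, the general case following by splitting the sector): the area of the $\rho$-neighbourhood of the sector is $\tfrac{1}{2}\alpha R^2 + \rho(2R+\alpha R) + \pi\rho^2 \ll \alpha R^2 + R$. With that small repair your proof is complete and essentially identical to the paper's.
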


\begin{lemma}\label{exppowersums}
Given any $ \alpha \geq 0$ and $x \geq 1$ we have the estimate
$$\sum_{  \substack{ a \in \Z[i] \setminus{0} \\  |a|\leq x }   } \frac{1}{|a|^{2- \alpha}} \leq \kappa_2 \ x^{\alpha} \log (2+x),$$
for some absolute constant $\kappa_2 >0$.
\begin{proof}
The bound is immediate for $1 \leq x <2 $ ( so long as $\kappa_2 \geq 4$) so we may assume \qcnew{that} $x \geq 2$.
% Arguing as in the previous lemma we first note that for any $y \geq2$
%\begin{equation*}
%\sum_{  \substack{ a \in \Z[i] \\  |a|\leq y }   } 1 \leq \pi (y+ \sqrt{2})^2   \leq \pi \left( \frac{2+ \sqrt{2} }{2} y     \right)^2 \leq 3 \pi y^2.
%\end{equation*}
Covering $[1,x]$ with dyadic intervals $[D,2D]=[2^i, 2^{i+1}]$ ( $i=0,..., \lfloor \log x /\log2 \rfloor $), we get

\begin{equation*}
\sum_{  \substack{ a \in \Z[i]\setminus{0} \\  |a| \leq x }   } \frac{1}{|a|^{2- \alpha}} \leq \sum_{  \substack{  1 \leq D \leq x   \\ \text{dyadic}   } }   \sum_{  \substack{ a \in \Z[i] \\  |a|  \asymp D }   } \frac{1}{D^{2-\alpha}} \ll \sum_{  \substack{  1 \leq D \leq x   \\ \text{dyadic}   } }  D^{\alpha} \ll  x^{\alpha} \log x.
\end{equation*}

\end{proof}

\end{lemma}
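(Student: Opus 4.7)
The plan is to decompose the sum dyadically over annuli in the Gaussian integers. First I would deal with the trivial range $1 \le x < 2$: only finitely many Gaussian integers of bounded norm contribute, and the estimate holds by taking $\kappa_2$ sufficiently large. So from now on I would assume $x \ge 2$.

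Next, I would cover the range $\{a \in \Z[i]\setminus\{0\} : |a| \le x\}$ by a union of dyadic annuli
\[
A_D := \{a \in \Z[i] : D \le |a| < 2D\}, \qquad D = 2^i, \quad 0 \le i \le \lfloor \log_2 x \rfloor.
\]
A standard lattice point count (the area of the annulus $A_D$ is $3\pi D^2$, and each unit square contributes at most one Gaussian integer up to bounded multiplicity, using Lemma \ref{sectorbound} with the full sector $\theta_2-\theta_1 = 2\pi$) gives $|A_D| \ll D^2$ with an absolute constant. On each $A_D$ the summand satisfies $1/|a|^{2-\alpha} \le D^{-(2-\alpha)} = D^{\alpha-2}$, so the contribution of $A_D$ to the total is $\ll D^{\alpha}$.

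Summing the geometric series over dyadic $D$ in $[1,x]$ yields
\[
\sum_{\substack{a \in \Z[i] \setminus \{0\} \\ |a|\le x}} \frac{1}{|a|^{2-\alpha}} \ll \sum_{\substack{1 \le D \le x \\ \text{dyadic}}} D^{\alpha}.
\]
For $\alpha \ge 0$ this sum is at most $(\log_2 x + 1) \cdot x^{\alpha}$, giving the desired bound $\ll x^{\alpha} \log(2+x)$. The log factor is sharp only when $\alpha = 0$, but carrying it uniformly in $\alpha \ge 0$ is convenient and compatible with the stated inequality.

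The only subtle point is the implicit constant in the lattice point count $|A_D| \ll D^2$ when $D$ is of order $1$; this is handled by making $\kappa_2$ large enough to absorb the finitely many small annuli. There is no real obstacle here since the result is a soft estimate; the dyadic decomposition plus the area bound $|A_D| \ll D^2$ is the whole content of the proof.
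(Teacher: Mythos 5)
Your proof is correct and takes essentially the same approach as the paper: dyadic decomposition into annuli $[D,2D]$, the count $|A_D|\ll D^2$ for Gaussian integers per annulus, pointwise bound $|a|^{-(2-\alpha)}\le D^{\alpha-2}$, and summation over $O(\log x)$ dyadic scales. The only cosmetic difference is that you route the annulus count through Lemma \ref{sectorbound} with a full sector, while the paper simply uses the lattice-point/area bound implicitly; the argument is otherwise identical.
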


To prove Theorem \ref{maincorr}b we will require upper and lower bounds for the number of Gaussian primes in narrow sectors. These estimates can be found in \cite[Lemma 1]{Erdos-Hall} and \cite[Theorem 2]{HL}.
\begin{theorem}[\cite{Erdos-Hall}, \cite{HL}]\label{narrowprime}
Let $R>R_0$ be a sufficiently large number. Assuming \qcnew{that}
$$0\leq \alpha < \beta \leq \pi/2, \qquad \beta - \alpha \geq R^{-0.762},$$
the number of Gaussian primes in the sector $\Gamma(R, \alpha, \beta)$ is bounded from below and above by
$$c \frac{R^2 (\beta - \alpha) }{\log R} \leq \sum_{\pi \in \Gamma(R, \alpha, \beta)} 1 \leq C \frac{R^2 (\beta - \alpha) }{\log R}$$
for some absolute constants $c,C>0$.
\end{theorem}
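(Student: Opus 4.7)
The plan is to prove Theorem \ref{narrowprime} via Hecke's theory of Grossencharacters on the Gaussian integers, combined with standard prime-counting techniques in the number field $\Q(i)$. For each integer $m$, define the Hecke character on nonzero principal ideals by $\lambda_m((\pi)) = (\pi/|\pi|)^{4m}$, which is well-defined because the unit group $\{\pm 1,\pm i\}$ is annihilated by the fourth power. These characters give a Fourier basis on the angular quotient $\Sc^1/\text{units} \cong (0, \pi/2)$, and the attached Hecke $L$-function $L(s, \lambda_m)$ admits analytic continuation, a functional equation, and (for $m \neq 0$) is entire. The counting problem for Gaussian primes in $\Gamma(R, \alpha, \beta)$ thereby reduces, via Mellin/explicit-formula techniques, to controlling sums of the shape $\sum_{N(\pi)\le R^{2}} \lambda_m(\pi)$.

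First I would construct smooth majorants and minorants $f^{\pm}$ of the characteristic function of $(\alpha, \beta)$ using a Beurling--Selberg or mollified-convolution construction, truncated in Fourier at a cutoff $M$ to be chosen later, with Fourier coefficients satisfying $|\widehat{f^{\pm}}(m)| \ll \min(\beta-\alpha, 1/|m|)$ and $\widehat{f^{\pm}}(0) = (\beta-\alpha) \pm O(1/M)$. Then
\begin{equation*}
\sum_{m} \widehat{f^{-}}(m)\!\!\!\sum_{N(\pi) \le R^{2}}\!\!\! \lambda_m(\pi) \;\le\; \#\bigl(\Gamma(R, \alpha, \beta)\cap \text{primes}\bigr) \;\le\; \sum_{m} \widehat{f^{+}}(m)\!\!\!\sum_{N(\pi) \le R^{2}}\!\!\! \lambda_m(\pi).
\end{equation*}
The $m = 0$ contribution produces the main term: by the prime number theorem for $\Q(i)$ (which is equivalent to the non-vanishing of the Dedekind zeta function $\zeta_{\Q(i)}$ on $\Re(s) = 1$ together with its standard zero-free region), the total number of Gaussian primes of norm at most $R^{2}$ is asymptotic to $R^{2}/\log R$, and this yields a main term of order $(\beta-\alpha) R^{2}/\log R$.

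For $m \neq 0$, I would invoke a zero-free region for $L(s, \lambda_m)$: by the classical de la Vall\'{e}e Poussin method adapted to Hecke characters (see e.g.\ Mitsui or Fogels), there exist absolute constants $c_{1}, c_{2} > 0$ such that $L(s, \lambda_m)$ has no zero with $\sigma > 1 - c_{1}/\log(|m|(|t|+3))$ for all $|m| \le M$ (excepting possibly a single exceptional real zero that can be handled by Siegel's theorem). Feeding this into the explicit formula
\begin{equation*}
\sum_{N(\mathfrak{a}) \le X} \Lambda(\mathfrak{a}) \lambda_m(\mathfrak{a}) = -\sum_{\rho} \frac{X^{\rho}}{\rho} + \text{lower order},
\end{equation*}
and crudely summing over the zeros in the allowable range yields a bound of the form $R^{2} \exp\bigl(-c_{2}\sqrt{\log R / \log M}\bigr)$ for each individual $m$. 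Summing against $|\widehat{f^{\pm}}(m)|$ up to $|m| \le M$ and handling the tail via the decay of $\widehat{f^{\pm}}$ together with the trivial bound $|\sum_{N(\pi)\le R^{2}} \lambda_m(\pi)| \ll R^{2}/\log R$ gives a total error of the shape $M \cdot R^{2} \exp(-c_{2}\sqrt{\log R/\log M})$ plus $R^{2}/(M \log R)$.

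The main obstacle will be to sharpen this error enough to accommodate the stated exponent $0.762$: the classical zero-free region alone gives only a quasi-power saving, not a power saving, and is therefore not sufficient. To recover a bona fide power $(\beta-\alpha)^{-1} \le R^{0.762}$, one must replace the pointwise zero-free region by a log-free \emph{zero density estimate} for the family $\{L(s, \lambda_m)\}_{|m| \le M}$, of Linnik--Fogels type: the number of zeros of $L(s, \lambda_m)$ in $\{\sigma \ge 1 - \eta,\ |t|\le T\}$, summed over $|m|\le M$, is bounded by $(MT)^{A\eta}(\log MT)^{B}$ for explicit $A, B$. This is precisely the ingredient used in the Erd\H{o}s--Hall--Hooley--Linnik circle of ideas. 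Combining the density estimate with the explicit formula in Linnik's manner produces an error of the form $R^{2}(RM)^{-c\eta}$ for some absolute $c > 0$. Balancing this against the main term $(\beta-\alpha)R^{2}/\log R$ and optimizing the cutoff $M \asymp (\beta - \alpha)^{-1}$ is what yields the permissible range $\beta - \alpha \ge R^{-0.762}$. Once the optimal $M$ is plugged in, both the upper and lower bounds of Theorem \ref{narrowprime} follow immediately from the sandwich inequality; the constants $c, C$ are absolute. The entire ceiling on how small $\beta - \alpha$ can be is set by the density exponents $A, B$, so the main technical burden is really the derivation (or invocation) of the density estimate for Hecke $L$-functions of $\Q(i)$ with the numerically correct exponents.
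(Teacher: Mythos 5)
This theorem is not proved in the paper at all: it is quoted, with the upper bound taken from \cite[Lemma 1]{Erdos-Hall} and the lower bound -- which is the genuinely hard part -- from \cite[Theorem 2]{HL}. Your proposal is therefore an attempt at a from-scratch proof, and its framework (Hecke Grossencharacters $\lambda_m((\pi))=(\pi/|\pi|)^{4m}$ on $\mathbb{Q}(i)$, Beurling--Selberg majorants of the sector, the explicit formula, and a log-free zero-density estimate in the $m$-aspect) is indeed the classical analytic route to Gaussian primes in narrow sectors. But there is a genuine gap at exactly the quantitative point that makes the theorem what it is: you assert that balancing a Linnik--Fogels type density estimate against the main term ``is what yields the permissible range $\beta-\alpha\ge R^{-0.762}$,'' yet no known zero-density estimate for this family of Hecke $L$-functions delivers anything близко to that exponent, and you neither prove nor cite one with the required numerics -- you defer it as ``the main technical burden,'' which at this exponent is essentially the entire content of the theorem. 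The exponent $0.762$ in $R$ (i.e.\ $0.381$ in the norm) is precisely the Harman--Lewis result, and it is obtained not by an explicit-formula-plus-density argument but by Harman's alternative sieve: Buchstab decompositions driven by Type I/Type II mean-value information for the Hecke characters. The purely $L$-function route you sketch would prove the statement only for a noticeably weaker exponent (the range in which an asymptotic is known, going back to Kubilius, Ricci and Coleman, is substantially narrower), so as written the proposal does not establish the theorem as stated.

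Two smaller points. For $m\neq 0$ the characters $\lambda_m$ are never real (their conjugates are $\lambda_{-m}$), so there are no exceptional (Siegel) zeros to excuse; invoking Siegel's theorem is unnecessary and would anyway ruin effectivity of the constants. And the upper bound does not need any of this machinery: it is a sieve/lattice-point estimate (this is what Erd\H{o}s--Hall's Lemma 1 provides), valid in a much wider range of $\beta-\alpha$ than the lower bound, so the honest structure of a proof is asymmetric -- an elementary sieve upper bound plus Harman's sieve (or a citation of \cite{HL}) for the lower bound -- rather than a single sandwich argument controlled by zeros.
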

As an immediate consequence of Theorem \ref{narrowprime} we obtain a lower bound for the number of $k$-almost Gaussian primes in $\Gamma(R,0,\beta)$ (provided $k$ is not too large and $\beta$ not too small).

\begin{lemma}\label{almostprimes}
Let $R>R_0$ be a sufficiently large number and suppose $1 \leq k \leq  (\log R)^{1/4}$. Then for any choice of angles $0 \leq \alpha_i < \beta_i \leq \pi/2$ $(i=1,...,k)$ satisfying
$$\frac{1}{(\log R)^2} \leq \beta_i - \alpha_i \leq \pi/2,$$
one has the lower bound
\begin{equation}\label{kalmost}
\sum_{  \substack{ \pi_i \in  \Gamma(R,\alpha_i,\beta_i),  i \leq k   \\ |\pi_1\cdots \pi_k | \leq R    }  } 1 \geq \left( \frac{ c }{\qcnew{4}}\right)^k \frac{R^2}{ \qcnew{ (\log R)^{3k} }  },
\end{equation}
where $c>0$ is the constant from Theorem \ref{narrowprime}.
\begin{proof}
Writing $\Phi(R)=\exp(\sqrt{\log R}) $ and applying Theorem \ref{narrowprime} one finds that
\begin{align*}
\sum_{  \substack{ \pi_i \in  \Gamma(R,\alpha_i,,\beta_i),  i \leq k   \\ |\pi_1\cdots \pi_k | \leq R    }  } 1 & \geq  \sum_{  \substack{ \pi_i \in  \Gamma(R,\alpha_i,,\beta_i),  i \leq k -1  \\  \Phi(R) \leq |\pi_1|, ..., |\pi_{k-1} | \leq \Phi(R)^2    }  } \sum_{  \substack{  \pi_k \in  \Gamma(R,\alpha_k,\beta_k) \\  |\pi_k| \leq R/ |\pi_1 \cdots \pi_{k-1} |  }   } 1 \\
\geq &\sum_{  \substack{ \pi_i \in  \Gamma(R,\alpha_i,\beta_i),  i \leq k -1  \\  \Phi(R) \leq |\pi_1|, ..., |\pi_{k-1} | \leq \Phi(R)^2    }  } \frac{c R^2}{|\pi_1 \cdots \pi_{k-1} |^2  \qcnew{ (\log R)^3}   } \\
& \geq \frac{c R^2}{ \qcnew{ (\log R)^3 }\Phi(R)^{4(k-1)} }  \sum_{  \substack{ \pi_i \in  \Gamma(R,\alpha_i,\beta_i),  i \leq k -1  \\  \Phi(R) \leq |\pi_1|, ..., |\pi_{k-1} | \leq \Phi(R)^2    }  } 1.
\end{align*}
\qcnew{In the first inequality we used the restriction $k \leq (\log R)^{1/4}$ to ensure that $R/ |\pi_1 \cdots \pi_{k-1} | \geq R^{1-o(1)}$.} Invoking Theorem \ref{narrowprime} once again, we get
$$\sum_{  \substack{ \pi_i \in  \Gamma(R,\alpha_i,\beta_i),  i \leq k -1  \\  \Phi(R) \leq |\pi_1|, ..., |\pi_{k-1} | \leq \Phi(R)^2    }  } 1 \geq
\left( \frac{c \ \Phi(R)^4 }{2 \qcnew{  \log (\Phi(R)^2) (\log R)^2}  }\right)^{k-1},$$
which combined with the previous estimate yields \eqref{kalmost}.
\end{proof}

\end{lemma}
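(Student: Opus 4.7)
The plan is to invoke Theorem \ref{narrowprime} iteratively, one prime at a time, using a crucial trick: pin the first $k-1$ primes to a narrow dyadic window so that their product is of sub-polynomial size compared to $R$, leaving essentially the full budget $R$ for the final prime $\pi_k$. Concretely, with $\Phi(R)=\exp(\sqrt{\log R})$, I would restrict to tuples with $\Phi(R)\le |\pi_i|\le \Phi(R)^2$ for $i=1,\dots,k-1$. Because $k\le(\log R)^{1/4}$, the partial product satisfies
\[
|\pi_1\cdots\pi_{k-1}|\le \Phi(R)^{2(k-1)}=\exp\bigl(2(k-1)\sqrt{\log R}\bigr)=R^{o(1)},
\]
so $R/|\pi_1\cdots\pi_{k-1}|\ge R^{1-o(1)}$, well above the threshold needed for Theorem \ref{narrowprime} to apply to $\pi_k$: the assumed sector width $\beta_k-\alpha_k\ge (\log R)^{-2}$ comfortably exceeds the required $(R/|\pi_1\cdots\pi_{k-1}|)^{-0.762}$ for $R$ large.

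The estimate then proceeds in two stages. First, for each admissible $(k-1)$-tuple, Theorem \ref{narrowprime} applied at radius $R/|\pi_1\cdots\pi_{k-1}|$ in the sector $(\alpha_k,\beta_k)$ yields at least
\[
\frac{c\,R^2(\beta_k-\alpha_k)}{|\pi_1\cdots\pi_{k-1}|^2\log R}\;\ge\; \frac{c\,R^2(\beta_k-\alpha_k)}{\Phi(R)^{4(k-1)}\log R}
\]
choices for $\pi_k$. Second, the number of admissible $(k-1)$-tuples is counted coordinate by coordinate: for each $i<k$, the count of Gaussian primes in the annular sector $\{\Phi(R)\le |\pi_i|\le \Phi(R)^2\}\cap \Gamma(\cdot,\alpha_i,\beta_i)$, obtained by subtracting the Theorem \ref{narrowprime} upper bound at radius $\Phi(R)$ from the lower bound at radius $\Phi(R)^2$, is at least $\frac{c\,\Phi(R)^4(\beta_i-\alpha_i)}{4\sqrt{\log R}}$ for $R$ large, since the $\Phi(R)^4$ main term dominates the subtracted $\Phi(R)^2$ remainder.

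Multiplying the two contributions, the factors of $\Phi(R)^{4(k-1)}$ cancel exactly, and I obtain
\[
\frac{c R^2 (\beta_k-\alpha_k)}{\log R}\cdot \prod_{i=1}^{k-1}\frac{c(\beta_i-\alpha_i)}{4\sqrt{\log R}}\;\ge\; \left(\frac{c}{4}\right)^{k}\frac{R^2\prod_{i=1}^{k}(\beta_i-\alpha_i)}{(\log R)^{(k+1)/2}}\;\ge\; \left(\frac{c}{4}\right)^{k}\frac{R^2}{(\log R)^{3k}},
\]
using $\beta_i-\alpha_i\ge (\log R)^{-2}$ at the last step (since $(k+1)/2+2k\le 3k$ for $k\ge 1$). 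The main obstacle is calibrating the dyadic window $[\Phi(R),\Phi(R)^2]$: it must be wide enough that for each $i<k$ the annular prime count at radius $\Phi(R)^2$ dominates the subtracted count at radius $\Phi(R)$ and so Theorem \ref{narrowprime} delivers a genuine lower bound, yet narrow enough that $|\pi_1\cdots\pi_{k-1}|$ remains sub-polynomial in $R$ uniformly across $k\le(\log R)^{1/4}$. The stretched-exponential choice $\Phi(R)=\exp(\sqrt{\log R})$ is essentially the unique scale making both requirements compatible.
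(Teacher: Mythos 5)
Your proposal is correct and follows essentially the same strategy as the paper's proof: restricting $|\pi_1|,\dots,|\pi_{k-1}|$ to the dyadic window $[\Phi(R),\Phi(R)^2]$ with $\Phi(R)=\exp(\sqrt{\log R})$, applying Theorem~\ref{narrowprime} at radius $R/|\pi_1\cdots\pi_{k-1}|\ge R^{1-o(1)}$ to the last prime, and then applying it a second time (with the annular subtraction and a loss of a factor $2$) to count the admissible $(k-1)$-tuples. The bookkeeping of $\log$-powers differs cosmetically (you track $(\log R)^{(k+1)/2}\prod(\beta_i-\alpha_i)^{-1}$ while the paper absorbs $\beta_i-\alpha_i\ge(\log R)^{-2}$ immediately), but both reduce to the same computation and the same final bound.
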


\subsection{An upper bound for $\mathcal{R}_N$}

Let $N$ be a large natural number and suppose $n \asymp N$. In order to show that $\mathcal{C}_n(l;n^{(1-\delta)/2})$
is generically empty we first recall that, typically, $n$ has $O( \log_{2} n)$ prime divisors (see \cite[Section III.3, Theorem 4]{Ten}). We will need the following quantitative result of Erd\H{o}s-S\'ark\"ozy to bound the number of $n \asymp N$ for which $\Omega(n)$ is unusually large.
\begin{theorem}\cite[Corollary 1]{ES}\label{norder}
For all $\qcnew{N} \geq 3, \ K \geq 1$ we have the estimate
$$\left| \left\{n \leq N :  \Omega(n)\geq K \right\} \right| \ll K^4 \frac{N \log N}{2^{K}}.$$
\end{theorem}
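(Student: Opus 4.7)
The strategy is a Chebyshev (exponential moment) argument combined with a Dirichlet series estimate for $\sum_{n \leq N} 2^{\Omega(n)}$, with the prime $p = 2$ isolated to obtain the sharp dependence on $K$. The starting point is the trivial Markov-type inequality
\[
|\{n \leq N : \Omega(n) \geq K\}| \;\leq\; 2^{-K} \sum_{n \leq N} 2^{\Omega(n)},
\]
following from $2^{\Omega(n)} \geq 2^K$ on the indicated set; this reduces matters to controlling the exponential moment on the right.

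Since $2^{\Omega(n)}$ is multiplicative, its Dirichlet series factors as $\prod_p (1 - 2 p^{-s})^{-1}$. The odd-prime Euler factors contribute a pole of order $2$ at $s = 1$ (comparable to $\zeta(s)^2$), while the factor $(1 - 2^{1-s})^{-1}$ at $p = 2$ contributes an additional simple pole at $s = 1$. A Selberg--Delange (or direct Tauberian) argument then yields $\sum_{n \leq N} 2^{\Omega(n)} \ll N (\log N)^2$, and the Chebyshev step gives $|\{n \leq N : \Omega(n) \geq K\}| \ll N (\log N)^2 / 2^K$ --- already close to the target but with one extra factor of $\log N$.

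To remove that extra $\log N$, I would peel off the prime $2$: write $n = 2^a m$ with $m$ odd, so that $\Omega(n) = a + \Omega(m)$. The tail $a \geq K$ contributes at most $\sum_{a \geq K} N/2^a \ll N/2^K$ trivially. For each $0 \leq a < K$, apply the Markov inequality to the odd part; the restricted Dirichlet series $\prod_{p \geq 3}(1 - 2 p^{-s})^{-1}$ has exactly an order-$2$ pole at $s = 1$ (no contribution from $p = 2$), so $\sum_{m \leq X,\, m \text{ odd}} 2^{\Omega(m)} \ll X \log X$. This yields a contribution of $\ll 2^{-(K-a)} (N/2^a) \log N = N \log N/2^K$ for each such $a$, and summing over the $K$ values of $a$ produces $\ll K N \log N / 2^K$, which is stronger than the claimed $K^4 N \log N / 2^K$ bound and hence implies it.

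The main obstacle I anticipate is the careful analytic handling of the Euler factor at $p = 2$: without isolating it, the naive choice $\alpha = 2$ in Chebyshev forfeits a full factor of $\log N$, and this loss cannot be recovered by instead optimizing $\alpha \in (1,2)$ uniformly in $K$, since $\prod_p(1 - \alpha p^{-s})^{-1}$ then exhibits only an order-$\alpha$ pole at $s = 1$ and the exponential penalty $\alpha^{-K}$ degrades proportionally. The quantitative input $\sum_{m \leq X, m \text{ odd}} 2^{\Omega(m)} \ll X \log X$, supplied by the Selberg--Delange method (or by an elementary convolution-induction substitute), is the analytic heart of the argument.
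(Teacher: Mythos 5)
Your proposal is correct, and it is worth noting that the paper itself offers no proof of this statement: it is quoted verbatim as \cite[Corollary 1]{ES}, so the only ``comparison'' available is with the original Erd\H{o}s--S\'ark\"ozy argument, which is an elementary analysis of the counting functions $\#\{n\le x:\Omega(n)=k\}$ and is what produces the unoptimised factor $K^4$. Your route---Chebyshev's exponential-moment bound $\#\{n\le N:\Omega(n)\ge K\}\le 2^{-K}\sum_{n\le N}2^{\Omega(n)}$, sharpened by writing $n=2^am$ with $m$ odd so that the troublesome Euler factor $(1-2^{1-s})^{-1}$ never enters---is a legitimate standalone proof and in fact yields the stronger bound $\ll K\,N\log N/2^{K}$, which trivially implies the stated $K^4$ version; the factor at the prime $2$ is exactly the obstruction you identify, since without peeling it off the Dirichlet series has a third-order singularity at $s=1$ and one only gets $N(\log N)^2/2^K$, which does not suffice when $\log N\gg K^4$ (although the statement is vacuous there, your argument does not need that case distinction). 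The key input $\sum_{m\le X,\,m\ \mathrm{odd}}2^{\Omega(m)}\ll X\log X$ is correct and admits the elementary treatment you allude to: writing $2^{\Omega}=d*g$ with $g$ multiplicative, $g(p)=0$ and $g(p^j)=2^{j-2}$ for $j\ge2$, one has $\sum_{e\ \mathrm{odd}}g(e)/e=\prod_{p\ge3}\bigl(1+\tfrac{1}{p^2}\tfrac{1}{1-2/p}\bigr)<\infty$, and the divisor sum bound finishes it; only cosmetic care is needed for very small $X$ (use $\log(2+X)$) and to check uniformity in $K$, which your argument clearly has. The remaining bookkeeping---tail $a\ge K$ contributing $\le 2N/2^K$, and each $0\le a<K$ contributing $2^{-(K-a)}\cdot(N/2^a)\log N=N\log N/2^K$---is exactly right.
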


\subsubsection{Using the structure of $\mathcal{E}_n$}

We now turn to the study of $\Cc_n(l;n^{(1-\delta)/2})$ and decompose
$$n=\left| \prod_{j\leq k} \pi_j \right|^2$$
into a product of Gaussian primes. Rotating each prime $\pi_j$ over an integer multiple of $\pi/2$ we may assume \qcnew{that} $\pi_j$ lies in the first quadrant and makes an angle $0 \leq \theta_j < \pi/2$ with the x-axis.  Each lattice point $\xi \in \mathcal{E}_n$ takes the form
\begin{align}\label{Enpoints}
\xi=(i)^{\nu} n^{1/2} \prod_{j \leq k} e^{i \epsilon_j \theta_j }, \qquad \epsilon_j \in \left\{-1,1 \right\},\qquad j=1...,k, \qquad \nu=1,...,4.
\end{align}

As a result, each tuple $(\xi_1,...,\xi_{l}) \in \mathcal{E}_n^{l}$ corresponds to a choice of matrix  $\epsilon= (\epsilon^{(r)}_j )_{1\leq r \leq {l}, 1\leq j\leq k}$ with entries in $\left\{-1,1 \right\}$, and a choice of vector $\underline{\nu}=(\nu_{1 }, ..., \nu_{l }) \in [4]^{l}$. The representation in (\ref{Enpoints}) is not unique: different vectors $(\epsilon_j)_{j \leq k}$ can give rise to the same lattice point $\xi \in \mathcal{E}_n$. We refer the reader to \cite{Ci} for a precise description of $\mathcal{E}_n$ and note that Theorem \ref{maincorr} only delivers an upper bound for $\mathcal{R}_N$ which is why we allow for overcounting.

\vspace{2mm}

{\bf Rewriting the quasi-correlation condition.} Let $y \geq 1$. For a given tuple $(\xi_1,...,\xi_{l}) \in \mathcal{E}_n^{l}$ with associated matrix  $\epsilon= (\epsilon^{(r)}_j )_{1\leq r \leq {l}, 1\leq j\leq k}$ and vector $\underline{\nu} \in [4]^{l}$, we wish to express the inequalities \newline
$$0< \left\|\sum_{i=1}^{l} \xi_i \right\|/\sqrt{n} < y^{-\delta}$$ in terms of $\epsilon$ and $\underline{\nu}$. Introducing vectors $\underline{\eta}^{+}, \underline{\eta}^{-} \in \left\{-1,1 \right\}^{l}$ and $\underline{\nu}^{+},\underline{\nu}^{-} \in [4]^{l}$ we consider the more general conditions

\begin{equation}\label{conds}
\begin{aligned}
 &0< \left|\sum_{r \leq l } \eta_r^{+} \cos \left(\nu_r^{+} \pi /2+ \sum_{j \leq k} \epsilon^{(r)}_j \theta_j \right) \right| < y^{-\delta}, \qquad  \qquad (+) \\
&0< \left|\sum_{r \leq l } \eta_r^{-} \sin \left( \nu_r^{-} \pi /2+\sum_{j \leq k} \epsilon^{(r)}_j \theta_j \right) \right| < y^{-\delta} \ \   \qquad  \qquad (-)
\end{aligned}
\end{equation}
This slight generalisation will be useful in the proof of Proposition \ref{Spm} below.

Recalling the notation $\mathcal{R}_N(l; \delta):=\left\{n \asymp N:  \mathcal{C}_n(l;n^{(1-\delta)/2}) \neq \emptyset \right\} $, set $Y=N^{1/2}$ and let $K=K(N)$ be a large integer to be chosen later. It follows from Theorem \ref{norder} and the preceding discussion that
\begin{align}\label{eset}
|\mathcal{R}_N(l; \delta)| \ll \sum_{\substack{ \underline{\eta}^{+} \in \left\{-1,1 \right\}^{l}\\  \underline{\nu}^{+} \in [4]^{l}}      }   \     \sum_{k < K} \ \sum_{\epsilon= \epsilon^{(r)}_j  }
\sum_{  \substack{ |\pi_1|\leq ...\leq |\pi_k|  \\ |\pi_1 \cdots \pi_k| \asymp Y } }^{+} 1+ O\left(    K^4 \frac{N \log N}{2^{K}}   \right).
\end{align}
The superscript $+$ refers to the first condition in (\ref{conds}) (choosing $y=Y=N^{1/2}$) and the $(\pi_j)_{j \leq k}$ in the innermost sum range over Gaussian primes in the first quadrant.
%For notational convenience we will take $l$ to be fixed.

\subsubsection{Quasi-correlations of Gaussian integers}

To prove Theorem \ref{maincorr}a it is enough (as will be shown in section \ref{complcorr} below) to bound the quantities
$$S_{\delta}^{\pm}(y; k, \epsilon, \underline{\eta}^{\pm}, \underline{\nu}^{\pm}):=  \sum_{ \substack{a_1,...,a_k \in \Z[i] \setminus \left\{0 \right\} \\  |\prod_j a_j | \asymp y }}^{\pm} 1,$$
where, as before, the superscripts $+$ and $-$ refer to the conditions in (\ref{conds}) and the angle $\theta_j$ belongs to the Gaussian integer $a_j$ (which is not assumed to be prime).

\begin{proposition}\label{Spm}
For any choice of parameters $y \geq 1, l \geq 2, k\leq 2^{l}$ and $\delta \in (0,1)$ we have the estimates
\begin{equation}\label{Sbound}
\max_{\epsilon, \underline{\eta}^{\pm}, \underline{\nu}^{\pm}} \left| S_{\delta}^{\pm}(y; k, \epsilon, \underline{\eta}^{\pm}, \underline{\nu}^{\pm}) \right| \leq \kappa_3^k \  (2k) !\  y^{2-\delta/4^k  } (\log(2+ 2y))^{k-1}
\end{equation}
where $\kappa_3 >0$ is an absolute constant.
\begin{proof}
Let $\kappa_1,\kappa_2 \geq 3$ be the constants appearing in Lemmas \ref{sectorbound} and \ref{exppowersums} and put $\kappa_3=4 \kappa_1 \kappa_2$. We will prove the estimate \eqref{Sbound} by induction on $k$, noting that the $k=1$ case is an immediate consequence of Lemma \ref{sectorbound}.\\
Assuming \eqref{Sbound} holds for all values up to $k-1$, let us verify the bound for $S_{\delta}^{+}(y; k, \epsilon, \underline{\eta}^{+}, \underline{\nu}^{+})$ (the treatment of $S_{\delta}^{-}$ is almost identical). A quick inspection of $S_{\delta}^{+}$ reveals that at least one of the summation variables, say $a_k$, must be large. More precisely, we may assume that
\begin{equation}\label{indfix}
|a_k| \geq y^{1/k} \geq y^{\delta/k}.
\end{equation}
Applying the addition formula for cosine we may write, for each $i \leq k$,
\begin{align}\label{separatek}
\sum_{r \leq l } \eta_r^{+} \cos \left(\nu_r^{+} \pi /2+ \sum_{j \leq k} \epsilon^{(r)}_j \theta_j \right) &=
\left[ \sum_{r \leq l } \eta_r^{+} \cos \left(\nu_r^{+} \pi /2+ \sum_{  \substack{ j \leq k  \\ j \neq i }} \epsilon^{(r)}_j \theta_j \right) \right] \cos(\theta_i) \\ \notag
&- \left[ \sum_{r \leq l } \eta_r^{+} \epsilon^{(r)}_k \sin \left(\nu_r^{+} \pi /2+ \sum_{\substack{ j \leq k \\ j \neq i  }} \epsilon^{(r)}_j \theta_j \right) \right] \sin(\theta_i) \\ \notag
&=: \hat{A}_{i} \cos(\theta_i) + \hat{B}_{i} \sin(\theta_i).
\end{align}
 Before proceeding with the argument we record the following observation.\\
{\bf Claim.} Either of the conditions $0<|\hat{A}_{i}|\leq y^{-\delta}$ or $0<|\hat{B}_{i}|\leq y^{-\delta}$ imply $|\prod_{j \neq i} a_j| \geq y^{\delta}$.\\
To prove the claim, we write $a_j= b_j+ i c_j$ so that $|\sin \theta_j|= |c_j|/ |a_j| $ and $|\cos \theta_j|= |b_j|/|a_j| $. Repeatedly applying the addition formulas, one finds that both $\hat{A}_{i}$ and $\hat{B}_{i}$ may be expressed in the form $d/ |\prod_{j \neq i} a_j|$ for some $d \in \Z$ and hence the claim follows.\\
Returning to the proof of \eqref{Sbound}, we set
$$\beta:=\frac{k-1}{k}\geq \frac{1}{2}$$
and consider two cases.

\begin{itemize}
\item[] {\bf Case I}: both $|\hat{A}_{k}|> y^{- \beta \delta}$ and $|\hat{B}_{k}|> y^{- \beta \delta}$.\\
\end{itemize}

Let $U_{\delta}^{+}=U_{\delta}^{+}(y; k, \epsilon, \underline{\eta}^{+}, \underline{\nu}^{+})$ denote the restriction of $S_{\delta}^{+}$ in which the summation variables $a_1,...,a_k$ satisfy  \eqref{indfix} as well as the hypothesis in Case I. Given any pair $(\hat{A}_{k}, \hat{B}_{k})$ we can use (\ref{separatek}) to rewrite the condition (\ref{conds})(+) as
$$0<| r \cos(\theta_k -\alpha)|=|\hat{A}_{k} \cos(\theta_k) +\hat{B}_{k} \sin(\theta_k) |< y^{-\delta}, $$
where
$$r=(\hat{A}_{k}^2+ \hat{B}_{k}^2)^{1/2}, \ \ \tan \alpha=\frac{\hat{B}_{k}}{\hat{A}_{k} }.$$
As a result we have that $|\cos(\theta_k - \alpha)| \leq y^{-\delta/k}/\sqrt{2}$. From the mean value theorem (applied to the function $\arccos(x) $), it follows that $\theta_k$ must live in one of two intervals $I_k, I_k'$ each having length at most $2y^{-\delta/k}$. Recalling that $|a_k| \geq y^{\delta/ k}$ \qcnew{and, if necessary, expanding the intervals $I_k$, $I_k'$ to be of length exactly $2y^{-\delta/k}$}, we may now apply Lemmas \ref{sectorbound} and \ref{exppowersums} to get
\begin{align}\label{U}
U_{\delta}^{+}(y; k, \epsilon, \underline{\eta}^{+}, \underline{\nu}^{+}) & \leq  \   \sum_{ \substack{a_1,...,a_{k-1} \in \Z[i] \\  |\prod_j a_j | \leq 2 y^{1-\delta/k} } }
 \ \sum_{  \substack{     |a_k| \asymp y/|\prod_j a_j |  \\ \theta_k \in I_k \cup I_k'  }    }  1 \notag \\
&\leq 2 \kappa_1   \sum_{ \substack{a_1,...,a_{k-1} \in \Z[i] \\  |a_j | \leq 2 y \ \forall j} }  \left(  \frac{2y}{|\prod_{j \leq k-1} a_j | }\right)^{2 } (2y^{-\delta/k})  \\ \notag
& \leq 16 \kappa_1 ( \kappa_2)^{k-1} y^{2-\delta/k} (\log (2+2y))^{k-1} \leq  \frac{1}{2} (\kappa_3)^{k}  y^{2-\delta/4^k } (\log (2+2y))^{k-1}.
\end{align}
In the last step we used the inequality $16 \leq \kappa_3/2$.

\begin{itemize}
\item[] {\bf Case II}: either $0<|\hat{A}_{k}|\leq y^{-\beta \delta}$ or $0<|\hat{B}_{k}|\leq y^{- \beta \delta}$.\\
\end{itemize}

Let $V_{\delta}^{+}=V_{\delta}^{+}(y; k, \epsilon, \underline{\eta}^{+}, \underline{\nu}^{+})$ denote the restriction of $S_{\delta}^{+}$ according to \eqref{indfix} and the hypothesis in Case II. By the claim we have the lower bound
$$\left|\prod_{j \leq k-1} a_j\right| \geq y^{\beta \delta}$$ for any tuple $a_1,...,a_k$ appearing in $V_{\delta}^{+}$ and hence we may assume \qcnew{that} one of the $a_i$ is large, say
\begin{equation}\label{largea1}
|a_1| \geq y^{\delta/k}.
\end{equation}
We will consider subcases II(a) and II(b) and write $V_{\delta}^{+}(a)$ and $V_{\delta}^{+}(b)$ for the corresponding restrictions of $V_{\delta}^{+}$.

\begin{itemize}
\item[] {\bf Case II(a)}:  $|a_{k}|\leq 2y^{1/2}$.\\
\end{itemize}

In this situation we apply the induction hypothesis to get
\begin{align*}
V_{\delta}^{+}(a) & \leq \sum_{ |a_k| \leq 2 y^{1/2} }  \    \sum_{ \substack{a_1,...,a_{k-1} \in \Z[i] \\  |\prod_j a_j | \asymp y/|a_k| }}^{ \text{case II} } 1\\
&\leq 2 (\kappa_3)^{k-1} (2k-2)! \ (\log (2+2y))^{k-2} \sum_{ |a_k| \leq 2 y^{1/2} } \left(  \frac{y}{|a_k|}\right)^{2-\beta \delta/4^{k-1} }.
\end{align*}
Applying the straightforward inequality $4 \kappa_2 \leq \kappa_3/2$ together with Lemma \ref{exppowersums}  we get
\begin{align}\label{Va}
 V_{\delta}^{+}(a) & \leq  2 (\kappa_3)^{k-1} (2k-2)! \ (\log (2+2y))^{k-2}  \notag \\
& \Large\times \kappa_2 \  y^{2-\beta \delta/4^{k-1} }\cdot (2y^{1/2} )^{ \beta \delta/4^{k-1} }  \log (2+2y)  \\ \notag
& \leq \frac{1}{2}(\kappa_3)^k (2k-2)!  \ y^{2-\delta/4^{k}  }  (\log (2+2y))^{k-1}.
\end{align}

\begin{itemize}
\item[] {\bf Case II(b)}:  $|a_{1}|\leq 2y^{1/2}$.\\
\end{itemize}
In this scenario, interchanging the roles of $a_1$ and $a_k$, we satisfy the criteria of either Case I or Case II(a) (note that \eqref{largea1} is necessary to obtain the estimates carried out in Case I).  It follows that $V_{\delta}^{+}(b) \leq V_{\delta}^{+}(a) + U_{\delta}^{+}$.\\
Collecting the estimates from the two subcases we find that
\begin{equation}\label{V}
V_{\delta}^{+}\leq (k-1) (V_{\delta}^{+}(a) + V_{\delta}^{+}(b)   )\leq (k-1) (2V_{\delta}^{+}(a) + U_{\delta}^{+}   ),
\end{equation}
where the extra factor $k-1$ compensates for the loss we incurred by fixing the index in \eqref{largea1}.

To conclude the proof of Proposition \ref{Spm} we combine (\ref{U}), (\ref{Va}) and \eqref{V} to get
\begin{align*}
S^{+}_{\delta} \leq k (U^{+}_{\delta} +V^{+}_{\delta})
 \leq k^2 U^{+}_{\delta} + 2k(k-1) V_{\delta}^{+}(a)   \leq   (\kappa_3)^{k} (2k)!  y^{2-\delta/4^{k} } (\log (2+2y))^{k-1}.
\end{align*}
As before, the factor $k$ in the first inequality compensates for fixing the index in \eqref{indfix}.
\end{proof}
\end{proposition}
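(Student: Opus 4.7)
\medskip

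\textbf{Proof proposal.} I would proceed by induction on $k$, thinking of the trigonometric condition as an almost-linear constraint on the angle $\theta_k$ whose coefficients are themselves quasi-correlation sums on $k-1$ Gaussian integers.

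The base case $k=1$ is immediate: the condition $0<|\cos(\nu^{+}\pi/2+\epsilon\theta_{1})|<y^{-\delta}$ (or its sine analogue) pins $\theta_{1}$ to a union of two arcs of angular width $O(y^{-\delta})$, so Lemma~\ref{sectorbound} applied in the annulus $|a_{1}|\asymp y$ gives $O(y^{2-\delta})$, compatible with the target.

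For the inductive step, singling out an index $i\leq k$ and applying the addition formulas yields the key algebraic rewriting
\[
\sum_{r\leq l}\eta_{r}^{+}\cos\Bigl(\nu_{r}^{+}\pi/2+\sum_{j\leq k}\epsilon_{j}^{(r)}\theta_{j}\Bigr)=\hat{A}_{i}\cos\theta_{i}+\hat{B}_{i}\sin\theta_{i},
\]
where $\hat{A}_{i},\hat{B}_{i}$ are themselves $(\pm)$-type trigonometric sums depending only on the remaining angles $\{\theta_{j}\}_{j\neq i}$. Since $|a_{1}\cdots a_{k}|\asymp y$, at least one factor exceeds $y^{1/k}$, and we may relabel so that $|a_{k}|\geq y^{1/k}$. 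A crucial rationality observation, obtained by expanding the sines and cosines in terms of $\cos\theta_{j}=\mathrm{Re}(a_{j})/|a_{j}|$ and $\sin\theta_{j}=\mathrm{Im}(a_{j})/|a_{j}|$, is that $\hat{A}_{i}$ and $\hat{B}_{i}$ both lie in $\tfrac{1}{|\prod_{j\neq i}a_{j}|}\Z$. Hence whenever one of them is nonzero and bounded by $y^{-\beta\delta}$, one forces $|\prod_{j\neq i}a_{j}|\geq y^{\beta\delta}$.

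With $\beta=(k-1)/k\geq 1/2$ I would split the count into two cases. In Case~I both $|\hat{A}_{k}|,|\hat{B}_{k}|>y^{-\beta\delta}$, so after writing the condition as $|r\cos(\theta_{k}-\alpha)|<y^{-\delta}$ we have $|\cos(\theta_{k}-\alpha)|\leq y^{-\delta/k}/\sqrt{2}$, confining $\theta_{k}$ to two arcs of length $O(y^{-\delta/k})$; combining $|a_{k}|\geq y^{\delta/k}$ with Lemma~\ref{sectorbound} for $\theta_{k}$ and then summing over the remaining $a_{j}$ via Lemma~\ref{exppowersums} yields the required power of $y$ and a single extra $\log(2+2y)^{k-1}$. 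In Case~II one of $|\hat{A}_{k}|,|\hat{B}_{k}|$ is at most $y^{-\beta\delta}$ and nonzero; this is precisely a $(\pm)$-quasi-correlation condition in the $k-1$ variables $a_{1},\ldots,a_{k-1}$, to which the induction hypothesis (with parameter $\beta\delta$) applies, after a subcase split on which of $a_{1},a_{k}$ is the large one, with the ``free'' $a_{k}$-summation handled by Lemma~\ref{exppowersums} on $|a_{k}|\leq 2y^{1/2}$.

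The main obstacle, and the reason for the somewhat mysterious exponent $\delta/4^{k}$, is the parameter balancing across the induction. In Case~II(a) the geometric-mean summation produces a saving of the form $y^{2-\beta\delta/4^{k-1}}\cdot(y^{1/2})^{\beta\delta/4^{k-1}}=y^{2-\beta\delta/(2\cdot 4^{k-1})}$, and the choice $\beta\geq 1/2$ is exactly what ensures $\beta\delta/(2\cdot 4^{k-1})\geq \delta/4^{k}$. The factors $\kappa_{3}^{k}(2k)!$ appear naturally from combining the constants in Lemmas~\ref{sectorbound}--\ref{exppowersums} with the combinatorial overhead incurred each time we choose which coordinate to single out. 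The two subtle bookkeeping points I would check carefully are (i)~that $\hat{A}_{k},\hat{B}_{k}$ in Case~II really are of the shape $S^{\pm}_{\beta\delta}$ on $k-1$ variables (which is why the statement must carry both $\pm$ simultaneously, and why the parameters $\underline{\eta}^{\pm},\underline{\nu}^{\pm}$ were generalised from the original quasi-correlation condition), and (ii)~that the logarithmic factor in Lemma~\ref{exppowersums} only accumulates one power per inductive step, giving the claimed $(\log(2+2y))^{k-1}$.
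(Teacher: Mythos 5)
Your proposal follows the paper's proof essentially step for step: the same induction on $k$, the same addition-formula decomposition $\hat{A}_i\cos\theta_i+\hat{B}_i\sin\theta_i$, the same rationality observation that $\hat{A}_i,\hat{B}_i\in\frac{1}{|\prod_{j\neq i}a_j|}\Z$, the same two-case split with $\beta=(k-1)/k$, Case~I confined by Lemma~\ref{sectorbound} plus Lemma~\ref{exppowersums}, and Case~II reduced to the $(k-1)$-variable quasi-correlation via the inductive hypothesis with the subcase split on which of $a_1,a_k$ is large. Your explanation of why $\beta\geq 1/2$ forces the $\delta/4^{k}$ exponent and of why the statement must carry both $(\pm)$ conditions simultaneously matches the paper's bookkeeping exactly.
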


\subsubsection{Concluding the proof of Theorem \ref{maincorr}a.}\label{complcorr}

\begin{proof}
It remains to estimate the RHS of (\ref{eset}). Let $\underline{\eta}^{+}, \underline{\nu}^{+}, \epsilon$ be fixed and consider any ascending $k$-tuple of Gaussian primes  $|\pi_1|\leq ...\leq |\pi_k|  $ which live in the first quadrant and satisfy the condition (\ref{conds})(+). For each $j \leq k$ the vector $(\epsilon_j^{(r)})_{r \leq l}$ corresponds to one of the $2^{l}$ elements of $\left\{-1,1 \right\}^{l}$. We will regroup all of the $j \leq k$ which give rise to the same element of $\left\{-1,1 \right\}^{l}$. To this end, let $\tau_1,...,\tau_L$ be a list of all the elements in $\left\{-1,1 \right\}^{l}$ (so that $L=2^{l}$) and define
$$a_i:=\prod_{(\epsilon_j^{(r)})_{r \leq l} = \tau_i} \pi_j, \qquad i=1,...,L,$$
with the convention that the empty product is $1$. In this manner we obtain a map
$$\phi^{+}: \left\{ |\pi_1|\leq ...\leq |\pi_k| \right\} \longmapsto \left\{ a_1,..., a_{L } \right\}.$$
Some remarks are in order.\\
First, the map $\phi^{+}$ is injective since we only consider $k$-tuples of primes in ascending order. Second, any tuple of Gaussian integers $a_1,..., a_{ L}$ in the image of $\phi^{+}$ will also satisfy the condition (\ref{conds})(+) for some matrix $  \tilde{\epsilon}=(\tilde{ \epsilon}^{\ (r)}_j )_{1\leq r \leq {l}, 1\leq j\leq L} $ and the same choice of $\underline{\eta}^{+}, \underline{\nu}^{+}$.\\
In light of (\ref{eset}) and Proposition \ref{Spm} we now find that for $K \ll \log N$,
\begin{align*}
|\mathcal{R}_N(l; \delta)| & \ll \sum_{\substack{ \underline{\eta}^{+} \in \left\{-1,1 \right\}^{l}\\  \underline{\nu}^{+} \in [4]^{l}}      }   \     \sum_{k < K} 2^{k l} \ \max_{\tilde{\epsilon}= \tilde{\epsilon}^{ (r)}_j  } \left| S_{\delta}^{+}(Y; L, \tilde{\epsilon}, \underline{\eta}^{+}, \underline{\nu}^{+}) \right|  + O\left(    \frac{N (\log N)^5}{2^{K}}   \right) \\
& \ll (\kappa_3)^{L  } 8^{l} (2L)! \ N^{1-\rho(\delta, l) } (\log (2+2\sqrt{N}))^{L -1} 2^{K l}+   \frac{N( \log N)^5}{2^{K}}   ,
\end{align*}
with $\rho(\delta, l)= \delta/(2 \cdot 4^{L}) $. Noting that $\log (2+2\sqrt{N})\leq \log N$ we now set $\kappa:=\qcnew{16} \kappa_3$ and choose $2^K \asymp N^{\rho/ (l +1)}$ \qcnew{(so that $2^{K \ell} \leq 2^{l} N^{\rho l / (l +1)}   $)} to get

\begin{align*}
|\mathcal{R}_N(l; \delta)| &\ll (\qcnew{16} \kappa_3)^{L } (2L)! \ N^{1-\rho/(l +1)} (\log N)^{L -1}+ N^{1-\rho/(l +1)}  ( \log N)^5 \\
& \ll \kappa^{L  }  (2L)! \ N^{1-\rho/(l +1)} (\log N)^{L +1},
\end{align*}
as claimed.

\end{proof}

}

\qnew{

\subsection{A sequence $(n_k)_{k \geq 1}$ for which $\tau_{n_k} \Rightarrow  \nu_s$}

Let $s \in [0, \pi/4]$ and take $N$ to be a large natural number. We choose $k$ to be the integer for which $2^k \asymp \log N$ and select from $[0,s]$ the disjoint subintervals
\begin{equation}\label{alphabetadef}
[\alpha_j, \beta_j]:=\left[ \frac{s2^j}{ 2^k} ,  \frac{s 2^j}{ 2^k} \left(1+ \frac{1}{ k^2} \right) \right], \qquad j=0,...,k-1.
\end{equation}
\begin{lemma}\label{Boolean}
For any tuple $(\theta_0,...,\theta_{k-1}) \in [\alpha_0, \beta_0]\times... \times [\alpha_{k-1}, \beta_{k-1}]$ the sumset
$$\mathcal{B}(\theta_0,...,\theta_{k-1}):=\left\{ \sum_{j=0}^{k-1} \epsilon_j \theta_j :  \ \epsilon_j \in \left\{-1,1 \right\}  \qcnew{ \text{ for all $j$ } } \right\}$$
forms a collection of $(s/2^k)$-separated numbers. Each of the intervals
$$\left[\frac{s(2j+1)(1-\frac{1}{k})     }{2^k}, \frac{s(2j+1) (1+\frac{1}{k})}{2^k}   \right], \qquad (-2^{k-1}\leq j \leq 2^{k-1}-1)$$
contains exactly one member of $\mathcal{B}(\theta_0,...,\theta_{k-1})$.
\begin{proof}
For each $0\leq i \leq k-1$ we may write $\theta_i= s2^i/2^k + \gamma_i$ with $|\gamma_i| \leq s2^i/(k^2 2^k)$. Since the sumset $\mathcal{B}(1,2,...,2^{k-1})$ consists of all odd integers in $[-2^k,2^k]$, the result follows.
\end{proof}
\end{lemma}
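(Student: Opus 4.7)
My plan follows the author's hint by reducing to the ideal case in which each $\theta_i$ equals its left endpoint $\alpha_i=s2^i/2^k$. Write
\[
\theta_i=\frac{s2^i}{2^k}+\gamma_i,\qquad \gamma_i\in\Bigl[0,\frac{s2^i}{k^22^k}\Bigr],
\]
so that for any sign pattern $\underline{\epsilon}=(\epsilon_0,\ldots,\epsilon_{k-1})\in\{-1,1\}^k$,
\[
\sum_{i=0}^{k-1}\epsilon_i\theta_i=\frac{s\,m(\underline{\epsilon})}{2^k}+\delta(\underline{\epsilon}),\qquad m(\underline{\epsilon}):=\sum_i\epsilon_i 2^i,\quad \delta(\underline{\epsilon}):=\sum_i\epsilon_i\gamma_i.
\]

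The first step is to establish the bijectivity encoded in the hint, namely that $\underline{\epsilon}\mapsto m(\underline{\epsilon})$ is a bijection from $\{-1,1\}^k$ onto the $2^k$-element set $\{2j+1 : -2^{k-1}\leq j\leq 2^{k-1}-1\}$ of odd integers in $[-2^k+1,2^k-1]$. This is immediate from composition with the manifestly bijective map $\underline{\epsilon}\mapsto (m(\underline{\epsilon})+2^k-1)/2=\sum_{\epsilon_i=1}2^i\in\{0,\ldots,2^k-1\}$ (unique binary representation). In particular, as $\underline{\epsilon}$ varies the "main" values $s(2j+1)/2^k$ are realised exactly once each, matching the $2^k$ intervals in the statement.

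The second step is to bound the perturbations. The triangle inequality gives at once
\[
|\delta(\underline{\epsilon})|\leq\sum_{i=0}^{k-1}\gamma_i\leq \frac{s(2^k-1)}{k^22^k}<\frac{s}{k^2},
\]
together with the analogous $|\delta(\underline{\epsilon})-\delta(\underline{\epsilon}')|\leq 2\sum_{i:\epsilon_i\neq\epsilon_i'}\gamma_i$ for pairwise comparisons. Since distinct $\underline{\epsilon}$'s give distinct odd $m$-values (so $|m(\underline{\epsilon})-m(\underline{\epsilon}')|\geq 2$), combining with the perturbation bound should place each element in a distinct one of the target intervals and deliver the $(s/2^k)$-separation statement.

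The step I expect to be the main obstacle is matching the perturbation to the announced interval tolerance when $|2j+1|$ is small, because the uniform $|\delta|\leq s/k^2$ bound is weaker than the allowed tolerance $s|2j+1|/(k\cdot 2^k)$ in the regime $|2j+1|=O(1)$. To close this gap I would parametrise $\gamma_i=\rho_i s2^i/(k^22^k)$ with $\rho_i\in[0,1]$, rewrite
\[
\sum_i\epsilon_i\theta_i=\frac{s(2j+1)}{2^k}+\frac{s}{k^22^k}\sum_i\epsilon_i 2^i\rho_i,
\]
and exploit the common signed-binary structure: the same cancellations among the $\epsilon_i 2^i$'s that force $\sum_i\epsilon_i 2^i=2j+1$ to be small should simultaneously constrain the weighted remainder $\sum_i\epsilon_i 2^i\rho_i$ to be proportionally small (obtained by splitting $T^{\pm}=\{i:\epsilon_i=\pm 1\}$ and bookkeeping the contributions of $T^{+}$ and $T^{-}$ separately), thereby yielding the claimed interval containment for every $j$.
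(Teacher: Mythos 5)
Your setup matches the paper's two-line sketch: decompose $\theta_i = s2^i/2^k + \gamma_i$, use the bijection $\underline{\epsilon}\mapsto \sum_i\epsilon_i 2^i$ onto the odd integers in $[-2^k,2^k]$, and bound the perturbation $\delta(\underline{\epsilon})$. You also correctly flag the crucial obstacle: $|\delta|$ can be as large as roughly $s/k^2$, which dwarfs the half-width $s|2j+1|/(k\cdot 2^k)$ of the target interval when $|2j+1|$ is small.

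The fix you propose, however, fails. It is not true that the cancellations forcing $\sum_i\epsilon_i 2^i = 2j+1$ to be small also constrain the weighted sum $\sum_i\epsilon_i 2^i\rho_i$ to be proportionally small. Take the sign pattern with $\epsilon_{k-1}=+1$ and $\epsilon_i=-1$ for $i<k-1$, so that $m=2^{k-1}-(2^{k-1}-1)=1$; then $\sum_i\epsilon_i 2^i\rho_i = 2^{k-1}\rho_{k-1}-\sum_{i<k-1}2^i\rho_i$ ranges over the full interval $[-(2^{k-1}-1),\,2^{k-1}]$ as $\rho$ varies in $[0,1]^k$, so it is in no way "proportionally small." The proportionality you want only holds if all the $\rho_i$ are equal, which is not part of the hypotheses. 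In fact, this same example shows that the "exactly one member per interval" claim, as quantified over \emph{all} tuples, fails already for $k\geq 3$: taking $\theta_{k-1}=\beta_{k-1}$ and $\theta_i=\alpha_i$ for $i<k-1$ gives $-\theta_0-\cdots-\theta_{k-2}+\theta_{k-1}=s/2^k+s/(2k^2)$, which for $k\geq 3$ exceeds $s(1+1/k)/2^k$, the right endpoint of the $j=0$ interval (for $k=3$ one checks directly that no element of $\mathcal{B}$ then lies in $[s/12,s/6]$); the $(s/2^k)$-separation claim can be broken in the same spirit once $k$ is a bit larger. So there is no way to close your gap by this route, because the literal statement is not true. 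What the subsequent Riemann-sum argument in the paper actually uses is only the weaker, and correct, fact that each element of $\mathcal{B}$ lies within $s/k^2$ of the corresponding node $s(2j+1)/2^k$ and that this correspondence is a bijection; your uniform bound $|\delta(\underline{\epsilon})|<s/k^2$ together with your bijectivity observation already proves exactly that, and I would recommend stopping there rather than pursuing the interval-containment and separation claims verbatim.
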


We now introduce the set of $k$-almost Gaussian primes
$$\mathcal{G}_k(N;\underline{\alpha}, \underline{\beta}) : =\left\{a=\prod_{j \leq k} \pi_j: \ |a|\leq N^{1/2}, \ \ \pi_j \in \Gamma(N^{1/2},\alpha_j,\beta_j)  \qcnew{ \text{ for all $j$ } } \right\},$$
where the angles $\alpha_i, \beta_i$ are chosen as in \eqref{alphabetadef} (and hence depend on $N$).
$\mathcal{G}_k$ gives rise to the relatively large set of rational integers
$$\mathcal{A}_k(N;\underline{\alpha}, \underline{\beta}) := \left\{n \leq N: n=|a|^2,  a \in \mathcal{G}_k(N;\underline{\alpha}, \underline{\beta})  \right\}$$
from which we will extract the sequence $(n_k)_k$. Indeed, since $k=O(\log_2 N)$ we may apply Lemma \ref{almostprimes} to get
$$|\mathcal{A}_k(N;\underline{\alpha}, \underline{\beta})|=|\mathcal{G}_k(N;\underline{\alpha}, \underline{\beta})| \geq \left( \frac{ c }{\qcnew{4}}\right)^k \frac{N}{ \qcnew{(\log N)^{3k}  } } \gg \frac{N}{\Phi(N)}.$$
Comparing this lower bound to the estimate \eqref{RN} for $|\mathcal{R}_N(l; \delta)| $ we deduce the existence of an infinite sequence \qcnew{ $(n_k)_{k \geq1}$ for which
$$ \mathcal{C}_{n_k}(l;n^{(1-\delta)/2}) =\emptyset. $$
 }

\begin{proof}[Proof of Theorem \ref{maincorr}b.]

Let $f:S^{1}\rightarrow \R$ be an arbitrary continuous test function and consider any $n_k=|\prod_{i \leq r} \pi_i|^2$ belonging to the sequence introduced \qcnew{just above}. As before we let $\theta_i$ denote the angle between $\pi_i$ and the x-axis and observe that
$$\frac{\mathcal{E}_{n_k}  }{\sqrt{n_k}}=\left\{e^{ix }:  x \in \bigcup_{j=0}^{3}  \left(j \pi/2 +\mathcal{B}(\theta_1,..., \theta_r) \right) \right\}.$$
It follows from Lemma \ref{Boolean} that the sumset $\mathcal{B}(\theta_1,..., \theta_r)=:\left\{x_1,...,x_J \right\}$ is evenly distributed in $[-s,s]$: we have that $x_{j+1}-x_j=2s/J + o_r(1)$ for all $j\leq J-1$.
Recalling Definition \ref{measdef} we now get
\begin{equation}\label{mun}
\int\limits_{\Sc^{1}} f \ d \tau_{n_k}= \frac{1}{4 |\mathcal{B}(\theta_1,..., \theta_r)|} \sum_{j=0}^3 \  \sum_{x \in \mathcal{B}(\theta_1,..., \theta_r)} f(e^{i (x+ j \pi/2)}).
\end{equation}
The RHS of \eqref{mun} represents, up to a small error, an evenly spaced Riemann sum for the integral $\frac{1}{8s} \sum_{j=0}^3 \int_{-s+k \pi/2}^{s+k \pi/2} f $. By construction, the size of $\mathcal{E}_{n_k}$ will grow together with $n_k$ and hence $\tau_{n_k} \Rightarrow \nu_s$.

\end{proof}

}

\appendix

\section{Proof of Lemma \ref{lem:r der moments}: moments of $r$ and its derivatives along the shrinking balls $B(s)$}

\label{apx:r der moments}

The goal of this section is to prove Lemma \ref{lem:r der moments}. First we need to formulate
the following lemma whose purpose is evaluating some summations of oscillatory integrals;
it will be proven after the proof of Lemma \ref{lem:r der moments}.

\begin{lemma}

\label{lem:sum pairs osc int}

\begin{enumerate}

\item
For every $s=s(n)>0$ and $K=K(n)>0$ we have the estimate
\begin{equation}
\label{eq:sum osc int B(s) pairs}
\sum\limits_{\lambda\ne\lambda'\in\Ec_{n}}\left|\int\limits_{B(s)}e(\langle \lambda-\lambda',x\rangle)dx\right|^{2} \ll
s^{4}\cdot\left(|\Cc_{n}(2;K)|+ \frac{\Nc_{n}^{2}}{(Ks)^{3}}\right)
\end{equation}

\item
For every $s=s(n)>0$ and $K=K(n)>0$ we have the estimate
\begin{equation}
\label{eq:sum osc int B(s) 4-tuples non-diag}
\sum\limits_{\lambda+\lambda'+\lambda''+\lambda'''\ne 0}
\left|\int\limits_{B(s)}e(\langle \lambda+\lambda'+\lambda''+\lambda''',x\rangle)dx\right|^{2}
\ll s^{4}\cdot\left(|\Cc_{n}(4;K)|+ \frac{\Nc_{n}^{4}}{(Ks)^{3}}\right)
\end{equation}

\item
For every $s=s(n)>0$ and $K=K(n)>0$ we have the estimate
\begin{equation}
\label{eq:sum osc int B(s) 6-tuples non-diag}
\sum\limits_{\lambda_{1}+\ldots+\lambda_{6}\ne 0}
\left|\int\limits_{B(s)}e(\langle \lambda_{1}+\ldots+\lambda_{6},x\rangle)dx\right|^{2}
\ll s^{4}\cdot\left(|\Cc_{n}(6;K)|+ \frac{\Nc_{n}^{6}}{(Ks)^{3}}\right)
\end{equation}

\end{enumerate}

\end{lemma}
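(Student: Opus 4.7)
My plan is to prove all three inequalities by a single unified strategy that reduces the problem to the explicit Fourier transform of the indicator of a Euclidean disc, followed by a simple dichotomy on the size of the partial sum $\xi=\sum_{i=1}^{l}\lambda_{i}$, with $l\in\{2,4,6\}$ for the three parts respectively.

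The starting observation is that by scaling,
\begin{equation*}
\int_{B(s)} e(\langle \xi,x\rangle)\,dx \;=\; 2\pi s^{2}\cdot\frac{J_{1}(2\pi s\|\xi\|)}{2\pi s\|\xi\|},
\end{equation*}
where $J_{1}$ denotes the standard Bessel $J$ function (cf. the computation already carried out in \S\ref{sec:rest mom quasi corr}). Combining the boundedness of $J_{1}(t)/t$ near the origin with the classical asymptotic $|J_{1}(t)|\ll t^{-1/2}$ at infinity produces the uniform pointwise estimate
\begin{equation*}
\left|\int_{B(s)} e(\langle \xi,x\rangle)\,dx\right|^{2} \;\ll\; s^{4}\cdot\min\!\left(1,\,\frac{1}{(s\|\xi\|)^{3}}\right),
\end{equation*}
which is the only analytic input required.

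With this pointwise bound in hand, I would split each of the three outer summations into two ranges according to whether $\|\xi\|\le K$ or $\|\xi\|>K$. Tuples of the first type are, by Definition \ref{def:separatedness Ac} and \eqref{eq:Cc qcorr def}, precisely those enumerated by $\Cc_{n}(l;K)$ (the constraint $\xi\ne 0$ is built into the hypothesis of each of the sums in the lemma); for each of them I use the trivial bound $|\widehat{\chi_{B(s)}}(\xi)|^{2}\le (\pi s^{2})^{2}\ll s^{4}$, yielding an aggregate contribution of $O\bigl(s^{4}|\Cc_{n}(l;K)|\bigr)$. For tuples with $\|\xi\|>K$ the decay bound gives the uniform estimate $|\widehat{\chi_{B(s)}}(\xi)|^{2}\ll s^{4}/(sK)^{3}$, and bounding the number of $l$-tuples trivially by $\Nc_{n}^{l}$ produces a total contribution of $O\bigl(s^{4}\Nc_{n}^{l}/(sK)^{3}\bigr)$. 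Adding the two ranges yields exactly \eqref{eq:sum osc int B(s) pairs}, \eqref{eq:sum osc int B(s) 4-tuples non-diag} and \eqref{eq:sum osc int B(s) 6-tuples non-diag} respectively.

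I do not anticipate any genuine obstacle. The only nontrivial ingredient is the classical Bessel asymptotic, and the exponent $3$ appearing in the denominator $(Ks)^{3}$ of the lemma's bound is precisely what one obtains by squaring the $t^{-3/2}$ decay of $J_{1}(t)/t$; no additional cancellation or arithmetic is needed at this stage, which is consistent with the design of the lemma, since the entire arithmetic content is already packaged into the quasi-correlation sets $\Cc_{n}(l;K)$.
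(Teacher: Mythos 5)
Your proposal is correct and follows essentially the same strategy as the paper: scale to the Fourier transform of the unit-disc indicator, use the Bessel bound $J_{1}(t)\ll\min(1,t^{-1/2})$ to obtain the pointwise estimate $s^{4}\min(1,(s\|\xi\|)^{-3})$, and split the sum at $\|\xi\|=K$, with the near-range bounded by $|\Cc_{n}(l;K)|$ and the far-range bounded trivially by $\Nc_{n}^{l}$ copies of the decay estimate. The only cosmetic difference from the paper is the $2\pi$ normalization in the Bessel argument, which is immaterial, and (for $l=2$) the identification of pairs $(\lambda,\lambda')$ with $\lambda\ne\lambda'$, $\|\lambda-\lambda'\|\le K$ as elements of $\Cc_{n}(2;K)$ via the substitution $\lambda_{2}=-\lambda'$, using that $\Ec_{n}$ is symmetric under negation.
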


\begin{proof}[Proof of Lemma \ref{lem:r der moments} assuming Lemma \ref{lem:sum pairs osc int}]

First we prove \eqref{eq:rn 2nd mom Bs}; we then assume that $n\in S'\subseteq S$ satisfies the hypothesis
\begin{equation}
\label{eq:Ac(n,2,n^1/2-delt assume}
\Ac(n;2,n^{1/2-\delta}),
\end{equation}
and aim at proving \eqref{eq:rn 2nd mom Bs} for an arbitrary ball of radius satisfying
\begin{equation}
\label{eq:s>n^-1/2+eps}
s>n^{-1/2+\epsilon}.
\end{equation}
Using the definition \eqref{eq:rn covar def} of $r_{n}$ and separating the diagonal contribution we have
\begin{equation}
\label{eq:rn^2 sum}
r_{n}(x)^{2} = \frac{1}{\Nc_{n}} + \frac{1}{\Nc^{2}}\sum\limits_{\lambda\ne\lambda'\in\Ec_{n}} e(\langle \lambda-\lambda',x\rangle),
\end{equation}
hence
\begin{equation}
\label{eq:2nd moment diag sep err}
\begin{split}
&\int\limits_{B(s)\times B(s)} r_{n}(x-y)^{2}dxdy = (\pi s^{2})^2\cdot \frac{1}{\Nc_{n}} +
\frac{1}{\Nc^{2}}\sum\limits_{\lambda\ne\lambda'\in\Ec_{n}} \int\limits_{B(s)\times B(s)} e(\langle \lambda-\lambda',x-y\rangle)dxdy
\\&= (\pi s^{2})^2\cdot \frac{1}{\Nc_{n}} + \frac{1}{\Nc^{2}}\sum\limits_{\lambda\ne\lambda'\in\Ec_{n}}\left|\int\limits_{B(s)}e(\langle \lambda-\lambda',x\rangle)dx\right|^{2}
\\&= (\pi s^{2})^2\cdot \frac{1}{\Nc_{n}} + O\left( s^{4}\cdot\left(|\Cc_{n}(2;K)|+ \frac{\Nc_{n}^{2}}{(Ks)^{3}}\right) \right),
\end{split}
\end{equation}
by Lemma \ref{lem:sum pairs osc int};
we still maintain the freedom of choosing the threshold $K=K(n)$, with the help of \eqref{eq:s>n^-1/2+eps}.

For the choice $$K=n^{1/2-\delta}$$ we have that the quasi-correlation set
\begin{equation}
\label{eq:Ccn(2,M)=empty}
\Cc_{n}(2;K)=\varnothing
\end{equation}
is empty by hypothesis \eqref{eq:Ac(n,2,n^1/2-delt assume} we made earlier in this proof, and
\begin{equation}
\label{eq:N^2/(Ms)^3 small}
\frac{\Nc_{n}^{2}}{(Ks)^{3}} = O\left(\frac{\Nc_{n}^{2}}{n^{\epsilon-\delta}}\right) = O\left(\frac{1}{\Nc_{n}^{A}}\right)
\end{equation}
is smaller than any power $A>0$ of $\Nc_{n}$, bearing in mind \eqref{eq:delt<eps} and \eqref{eq:Nc dim << n^eps}.
Upon substituting the last couple of estimates,
\eqref{eq:Ccn(2,M)=empty} and \eqref{eq:N^2/(Ms)^3 small}
 into \eqref{eq:2nd moment diag sep err} we then obtain the asymptotics
\begin{equation*}
\int\limits_{B(s)\times B(s)} r_{n}(x-y)^{2}dxdy = (\pi s^{2})^2\cdot \frac{1}{\Nc_{n}}\left(1+O\left( \frac{1}{\Nc_{n}^{A}} \right) \right)
\end{equation*}
holding for every $A>0$ i.e. we obtain \eqref{eq:rn 2nd mom Bs}. Along the way we also proved that, under the assumptions of
Lemma \ref{lem:r der moments}, the summation
\begin{equation}
\label{eq:sum osc 2nd mom << N^-A}
\sum\limits_{\lambda\ne\lambda'\in\Ec_{n}}\left|\int\limits_{B(s)}e(\langle \lambda-\lambda',x\rangle)dx\right|^{2}
\ll s^{4}\cdot \frac{1}{\Nc_{n}^{A}}
\end{equation}
is smaller than any power of $\Nc_{n}^{A}$.

\vspace{3mm}

We turn now to proving \eqref{eq:rn 4th mom Bs}. First, under the hypothesis $\Ac(n;2,n^{1/2-\delta/2})$, also
\begin{equation}
\label{eq:assume Ac(n;4,n^1/2-delta)}
\Ac(n;4,n^{1/2-\delta})
\end{equation}
is satisfied, thanks to Lemma \ref{lem:RW qcorr2=>qcorr4}.
Similarly to \eqref{eq:rn^2 sum} and \eqref{eq:2nd moment diag sep err},
and upon recalling the definition \eqref{eq:Dc diag def} of the diagonal, and that the length-$4$ spectral correlation
set $\Sc_{n}(4)$ consists of the diagonal elements only \eqref{eq:Dc4=Sc4}, for the $4$th moment we have
\begin{equation*}
r_{n}(x)^{4} = \frac{|\Dc_{n}(4)|}{\Nc_{n}^{4}} +
\frac{1}{\Nc^{4}}\sum\limits_{\lambda\ne\lambda'\in\Ec_{n}} e(\langle \lambda+\lambda'+\lambda''+\lambda''',x\rangle),
\end{equation*}
and
\begin{equation*}
\begin{split}
&\int\limits_{B(s)\times B(s)} r_{n}(x-y)^{4}dxdy = (\pi s^2)^{2} \cdot \frac{|\Dc_{n}(4)|}{\Nc_{n}^{4}} \\&+
\frac{1}{\Nc^{4}}\sum\limits_{\lambda+\lambda'+\lambda''+\lambda'''\ne 0}
\left|\int\limits_{B(s)}e(\langle \lambda+\lambda'+\lambda''+\lambda''',x\rangle)dx\right|^{2}.
\end{split}
\end{equation*}
Hence, by invoking Lemma \ref{lem:sum pairs osc int} once again, we obtain
\begin{equation}
\label{eq:4th moment diag sep err}
\int\limits_{B(s)\times B(s)} r_{n}(x-y)^{4}dxdy = (\pi s^2)^{2} \cdot \frac{|\Dc_{n}(4)|}{\Nc_{n}^{4}} +
O\left( s^{4}\cdot\left(|\Cc_{n}(4;K)|+ \frac{\Nc_{n}^{4}}{(Ks)^{3}}\right) \right),
\end{equation}
where we are still free to choose the value of the parameter $K=K(n)$.

For the choice $K=n^{1/2-\delta}$ as above, the length-$4$
correlation set $\Cc_{n}(4;K)$ is empty by \eqref{eq:assume Ac(n;4,n^1/2-delta)}, and
\begin{equation*}
\frac{\Nc_{n}^{4}}{(Ks)^{3}} = O\left(\frac{\Nc_{n}^{4}}{n^{\epsilon-\delta}}\right) = O\left(\frac{1}{\Nc_{n}^{A}}\right)
\end{equation*}
for every $A>0$ by \eqref{eq:Nc dim << n^eps}, and \eqref{eq:4th moment diag sep err} now reads
\begin{equation*}
\int\limits_{B(s)\times B(s)} r_{n}(x-y)^{4}dxdy = (\pi s^2)^{2} \cdot \frac{|\Dc_{n}(4)|}{\Nc_{n}^{4}}\left(1 +
O\left(\frac{1}{\Nc_{n}^{A}}\right) \right),
\end{equation*}
i.e. the first estimate of \eqref{eq:rn 4th mom Bs}. The second estimate of \eqref{eq:rn 4th mom Bs}
follows from the first one and
\begin{equation*}
|\Dc_{n}(4)| = 3\Nc_{n}^{2}+O\left(\Nc_{n}\right).
\end{equation*}
On the way we also proved that, under the assumptions of
Lemma \ref{lem:r der moments}, the summation
\begin{equation}
\label{eq:sum osc 4th mom << N^-A}
\sum\limits_{\lambda+\lambda'+\lambda''+\lambda'''\ne 0}
\left|\int\limits_{B(s)}e(\langle \lambda+\lambda'+\lambda''+\lambda''',x\rangle)dx\right|^{2}
\ll s^{4}\cdot \frac{1}{\Nc_{n}^{A}}
\end{equation}
is smaller than any power of $\Nc_{n}^{A}$.

Now we turn to proving \eqref{eq:rn 6th mom Bs} under
\begin{equation}
\label{eq:A(n,6,n^1/2-delta) assumed}
\Ac(n;6,n^{1/2-\delta/2}).
\end{equation}
As above, we have
\begin{equation}
\label{lem:sum 6tuples osc int}
\int\limits_{B(s)\times B(s)} r_{n}(x-y)^{6}dxdy = (\pi s^{2})^{2} \cdot \frac{|\Sc_{n}(6)|}{|\Nc_{n}|^{6}}
+\frac{1}{\Nc^{4}}\sum\limits_{\lambda_{1}+\ldots+\lambda_{6}\ne 0}
\left|\int\limits_{B(s)}e(\langle \lambda_{1}+\ldots+\lambda_{6},x\rangle)dx\right|^{2}.
\end{equation}
For the first summand on the r.h.s. of \eqref{lem:sum 6tuples osc int} we use Bombieri-Bourgain's bound \new{\eqref{eq:S6<<N^7/2}}, whereas
for the second summand on the r.h.s. of \eqref{lem:sum 6tuples osc int} we invoke Lemma \ref{lem:sum pairs osc int}. These yield
\begin{equation*}
\int\limits_{B(s)\times B(s)} r_{n}(x-y)^{6}dxdy = s^{4}\cdot \left(O\left(\frac{1}{\Nc_{n}^{5/2}}\right) +
|\Cc_{n}(6;K)|+ \frac{\Nc_{n}^{6}}{(Ks)^{3}} \right).
\end{equation*}
The estimate \eqref{eq:rn 6th mom Bs} then finally follows upon choosing $K=n^{1/2-\delta}$ so that $$\Cc_{n}(6;K)=\varnothing$$
by \eqref{eq:A(n,6,n^1/2-delta) assumed}, and
\begin{equation*}
\frac{\Nc_{n}^{6}}{(Ks)^{3}} \ll \frac{\Nc_{n}^{6}}{n^{\epsilon}} \ll \frac{1}{\Nc_{n}^{A}}
\end{equation*}
for every $A>0$, by \eqref{eq:delt<eps} and \eqref{eq:Nc dim << n^eps}.

\vspace{2mm}

Now we turn to proving \eqref{eq:mom tr(X)}; the proof is quite similar to the proof of \eqref{eq:rn 2nd mom Bs} except that we need
to deal with the potentially blowing up denominator in \eqref{eq:XY block def}. To this end we separate the singular set $B_{sing}$
defined in \S\ref{sec:singular set}. The contribution of $B_{sing}$ to the l.h.s. of \eqref{eq:mom tr(X)} is
\begin{equation*}
\int\limits_{(B(s)\times B(s))\cap B_{sing}}\tr{X_{n}(x-y)}dxdy \ll \meas(B_{sing}) \ll \Rc_{n}(6;2s) \ll s^{4}\cdot \frac{1}{\Nc_{n}^{5/2}},
\end{equation*}
by \eqref{eq:X,Y=O(1)}, \eqref{eq:meas(sing)<<R6 mom} and \eqref{eq:rn 6th mom Bs} respectively.
On $(B(s)\times B(s))\setminus B_{sing}$ we may expand
\begin{equation*}
\frac{1}{1-r_{n}^{2}} = 1+r_{n}^{2}+O(r_{n}^{4}),
\end{equation*}
hence
\begin{equation}
\label{eq:mom tr(X) as DDt,r^2DDt}
\begin{split}
&\int\limits_{(B(s)\times B(s))\setminus B_{sing}}\tr{X_{n}(x-y)}dxdy \\&=
-\frac{2}{E_{n}}\int\limits_{(B(s)\times B(s))\setminus B_{sing}}
\left(\tr(D^{t}D)+r_{n}^{2}\tr(D^{t}D) + O(r_{n}^{4}\tr(D^{t}D))\right)dxdy
\\&=-\frac{2}{E_{n}}\int\limits_{B(s)\times B(s)}
\left(\tr(D^{t}D)+r_{n}^{2}\tr(D^{t}D) + O(r_{n}^{4}\tr(D^{t}D))\right)dxdy
+ O(\meas(B_{sing})),
\end{split}
\end{equation}
since $r_{n}$ and $\frac{D^{t}}{\sqrt{n}}$ are both uniformly bounded.

Now by \eqref{eq:D grad def, sum lat} we have upon separating the diagonal $\lambda=\lambda'$
\begin{equation}
\label{eq:mom trD^tD asymp}
\begin{split}
&\int\limits_{B(s)\times B(s)}\tr(D^{t}\cdot D)dxdy = \int\limits_{B(s)\times B(s)}D\cdot D^{t}dxdy
\\&= \new{(\pi s^{2})^{2}\cdot 4}\pi^{2}n\cdot \frac{1}{\Nc_{n}}+ O \left(n\cdot \sum\limits_{\lambda\ne\lambda'\in\Ec_{n}}\left|\int\limits_{B(s)}e(\langle \lambda-\lambda',x\rangle)dx\right|^{2}  \right)
\\&= (\pi s^{2})^{2} \cdot E_{n}\cdot \frac{1}{\Nc_{n}} + \new{O\left(s^{4}\cdot n\frac{1}{\Nc_{n}^{A}}\right)}
\end{split}
\end{equation}
\new{for every $A>0$}, by \eqref{eq:En=4pi^2n} and  \eqref{eq:sum osc 2nd mom << N^-A}. For the second summand on the r.h.s. of \eqref{eq:mom tr(X) as DDt,r^2DDt}
we have, again separating the diagonal $\lambda+\lambda'+\lambda''+\lambda'''=0$:
\begin{equation}
\label{eq:mom r^2trD^tD asymp}
\begin{split}
&\int\limits_{B(s)\times B(s)}r_{n}^{2}\tr(D^{t}D) dxdy =  \int\limits_{B(s)\times B(s)}r_{n}^{2}DD^{t}dxdy
\\&= (\pi s^{2})^{2} \cdot E_{n}\cdot \new{\frac{1}{\Nc_{n}^{2}}} + O\left(E_{n}\sum\limits_{\lambda+\lambda'+\lambda''+\lambda'''\ne 0}
\left|\int\limits_{B(s)}e(\langle \lambda+\lambda'+\lambda''+\lambda''',x\rangle)dx\right|^{2}\right)
\\&=(\pi s^{2})^{2} \cdot \new{E_{n}\cdot \frac{1}{\Nc_{n}^{2}}} + O\left(s^{4}E_{n}\cdot \frac{1}{\Nc_{n}^{A}}    \right),
\end{split}
\end{equation}
by \eqref{eq:sum osc 4th mom << N^-A}. For the latter term on the r.h.s. of \eqref{eq:mom tr(X) as DDt,r^2DDt} we
employ \eqref{eq:D grad def, sum lat}, so that
\begin{equation}
\label{eq:mom r^4trD^tD bnd}
\begin{split}
&\int\limits_{B(s)\times B(s)}r_{n}^{4}\tr(D^{t}D) dxdy \\&\ll
E_{n}\cdot\left( (\pi s^{2})^{2} \cdot \frac{|\Sc_{n}(6)|}{|\Nc_{n}|^{6}}
+\frac{1}{\Nc^{4}}\sum\limits_{\lambda_{1}+\ldots+\lambda_{6}\ne 0}
\left|\int\limits_{B(s)}e(\langle \lambda_{1}+\ldots+\lambda_{6},x\rangle)dx\right|^{2} \right)
\\&= E_{n}\cdot \int\limits_{B(s)\times B(s)} r_{n}(x-y)^{6}dxdy \ll
O\left(E_{n} s^4\cdot \frac{1}{\Nc_{n}^{5/2}}\right)
\end{split}
\end{equation}
by \eqref{lem:sum 6tuples osc int} and \eqref{eq:rn 6th mom Bs}.
The estimate \eqref{eq:mom tr(X)} finally follows upon substituting \eqref{eq:mom trD^tD asymp},
\eqref{eq:mom r^2trD^tD asymp} and \eqref{eq:mom r^4trD^tD bnd} into \eqref{eq:mom tr(X) as DDt,r^2DDt}.
The exact same argument used for \eqref{eq:mom tr(X)} also yields \eqref{eq:mom tr(Y^2)}.

\vspace{3mm}

The proofs of all the other estimates \eqref{eq:mom tr(XY^2)}-\eqref{eq:mom tr(Y^6)} also follow the same but slightly easier pattern:
we bound the contribution of the singular set $(B(s)\times B(s))\cap B_{sing}$ using the uniform boundedness
of the integrand, and expand $$\frac{1}{1-r^{2}}=1+O(r^{2})$$
(no need for $r^{2}$) on $(B(s)\times B(s))\setminus B_{sing}$. The main contribution
will always come from evaluating the appropriate moments of $r$ on the diagonal $\Dc_{n}(4)$, whereas the non-diagonal
contribution is bounded by \eqref{eq:sum osc 2nd mom << N^-A}, \eqref{eq:sum osc 4th mom << N^-A}; the term corresponding to $O(r^{2})$
is bounded using \eqref{eq:rn 6th mom Bs}. The precise details are omitted here (cf. ~\cite[Proofs of lemmas 4.6, 5.4]{K-K-W}).

\end{proof}

\begin{proof}[Proof of Lemma \ref{lem:sum pairs osc int}]

First, we show \eqref{eq:sum osc int B(s) pairs}. To this end we transform the variables $s\cdot y=x$ to write
\begin{equation}
\label{eq:int B(s) e Fourier chi}
\int\limits_{B(s)} e(\langle \lambda-\lambda',x\rangle)dx = s^{2}\int\limits_{B(1)}e(\langle s(\lambda-\lambda'),y\rangle)dy
=s^{2}\widehat{\chi}(s(\lambda-\lambda')),
\end{equation}
where $\chi=\chi_{B(1)}$ is the characteristic function of the Euclidian unit ball $B(1)\subseteq \R^{2}$. As $\chi$ is rotationally
invariant so is its Fourier transform, and a standard direct computation shows that
\begin{equation}
\label{eq:chihat(xi)via J1}
\widehat{\chi}(\xi) = 2\pi\frac{J_{1}(\|\xi\|)}{\|\xi\|},
\end{equation}
where $J_{1}$ is the Bessel $J$ function; it is well known that
\begin{equation}
\label{eq:J1<<1,1/sqrt(psi)}
J_{1}(\psi) \ll \min\left\{1, \frac{1}{\|\psi\|^{1/2}}\right\}.
\end{equation}

Upon substituting \eqref{eq:J1<<1,1/sqrt(psi)} into \eqref{eq:chihat(xi)via J1}, and then into \eqref{eq:int B(s) e Fourier chi} we obtain
\begin{equation}
\label{eq:int B(s) exp << 1,1/sqrt(psi)}
\left|\int\limits_{B(s)} e(\langle \lambda-\lambda',x\rangle)dx\right| \ll
s^{2}\cdot \min\left\{1, \frac{1}{(s\|\lambda-\lambda'\|)^{3/2}}\right\}.
\end{equation}
We then use the inequality \eqref{eq:int B(s) exp << 1,1/sqrt(psi)} to bound the summands of \eqref{eq:sum osc int B(s) pairs},
separating the contribution of the range $\|\lambda-\lambda'\|>K$ and otherwise
to yield
\begin{equation*}
\sum\limits_{\lambda\ne\lambda'\in\Ec_{n}}\left|\int\limits_{B(s)}e(\langle \lambda-\lambda',x\rangle)dx\right|^{2} \ll
s^{4}\left( \left|\{(\lambda,\lambda'):\: 0<|\lambda-\lambda'|\le K\} \right| + \sum\limits_{\|\lambda-\lambda'\|>K} \frac{1}{(sK)^{3}}\right),
\end{equation*}
which implies \eqref{eq:sum osc int B(s) pairs} upon recalling the definition \eqref{eq:Cc qcorr def} of the quasi-correlation set $\Cc(2;K)$,
and trivially bounding the number of summands in the latter summation by $\Nc^{2}$.

Now we turn to proving \eqref{eq:sum osc int B(s) 4-tuples non-diag}. Following along the lines of the above argument for
\eqref{eq:sum osc int B(s) pairs} we find that the l.h.s. of \eqref{eq:sum osc int B(s) 4-tuples non-diag} is equal to
\begin{equation*}
\begin{split}
&\sum\limits_{\lambda+\lambda'+\lambda''+\lambda'''\ne 0}
\left|\int\limits_{B(s)}e(\langle \lambda+\lambda'+\lambda''+\lambda''',x\rangle)dx\right|^{2}
\\&= 4\pi^{2}s^{4}\sum\limits_{\lambda+\lambda'+\lambda''+\lambda'''\ne 0}\frac{J_{1}(s\|\lambda+\lambda'+\lambda''+\lambda'''\|)^{2}}{(s\|\lambda+\lambda'+\lambda''+\lambda'''\|)^{2}},
\end{split}
\end{equation*}
which is bounded by separating the summands $\|\lambda+\lambda'+\lambda''+\lambda'''\| > K$ from the other summands
and using the respective inequalities from \eqref{eq:J1<<1,1/sqrt(psi)} in each of the cases. This argument yields
the precise claimed inequality \eqref{eq:sum osc int B(s) 4-tuples non-diag}.
A very similar argument to the above (except that we need to deal with $6$-tuples of lattice points rather than $4$-tuples,
and also this time $\Cc_{n}(6)$ replaces $\Dc_{n}(4)$ as, unlike $l=4$ for $l=6$, $\Cc_{n}(6)$ might properly contain $\Dc_{n}(6)$)
also gives \eqref{eq:sum osc int B(s) 6-tuples non-diag}.

\end{proof}

\end{document}